\newif\ifshort
\newif\ifpagelimit
\newtheorem{prop}{Proposition}
\newtheorem{remark}{Remark}
\newtheorem{theorem}{Theorem}
\newtheorem{lemma}{Lemma}
\newtheorem{cor}{Corollary}
\newtheorem{definition}{Definition}
\newcommand{\off}[1]{}
\newcommand{\blue}[1]{\textcolor{black}{#1}}
\newcommand{\ale}[1]{\textcolor{red}{[ #1 -- Alejandro ] \normalsize}}
\renewcommand{\blue}[1]{#1}
\newcommand{\newblue}[1]{\textcolor{black}{#1}}
\begin{document}
%
\title{Adaptive Integrate-and-Fire Time Encoding Machine with Quantization\vspace{-0.1cm}}
%
%
%


\author{Aseel Omar and Alejandro Cohen\vspace{-0.1cm}\\
Faculty of Electrical and Computer Engineering, Technion—Israel Institute of Technology, Haifa, Israel,\\Emails: aseel.omar@campus.technion.ac.il and alecohen@technion.ac.il\vspace{-0.8cm}
\thanks{Parts of this work were presented at the 32nd European Signal Processing Conference, EUSIPCO 2024 \cite{omar2024adaptive}.}

}

\maketitle

\begin{abstract}
An integrate-and-fire time-encoding machine (IF-TEM) is a power-effective asynchronous sampler that translates amplitude information into non-uniform time sequences. In this work, we propose a novel Adaptive IF-TEM (AIF-TEM) approach, \off{which dynamically adjusts the TEM’s sensitivity in real time to variations in the input signal’s amplitude and frequency.}\newblue{which dynamically adapts the TEM bias and the induced Nyquist ratio in response to temporal amplitude and frequency variations of the input signal.} 
We provide a comprehensive analysis of AIF-TEM's oversampling and distortion properties.
We also investigate the quantization process for AIF-TEM and analyze the corresponding mean squared error (MSE) bound. Our results show that AIF-TEM achieves significant improvements in rate-distortion performance compared to classical IF-TEM and traditional Nyquist (i.e., periodic) sampling methods for band-limited signals. In particular, AIF-TEM achieves at least a 12 dB reduction in reconstruction\off{sampling} MSE under a fixed oversampling rate. When quantization is considered, AIF-TEM provides at least a 14 dB improvement in quantization MSE compared to IF-TEM. Furthermore, AIF-TEM achieves the same reconstruction accuracy using less than 30\% of the total bits required by IF-TEM, highlighting the superior efficiency of our adaptive approach. Additionally, we introduce a dynamic quantization technique for AIF-TEM, which further improves performance by at least 10 dB compared to its classical quantization baseline.
\end{abstract}
\off{
\begin{abstract}
An integrate-and-fire time-encoding machine (IF-TEM) is an effective asynchronous sampler that translates amplitude information into non-uniform time sequences. In this work, we propose a novel Adaptive IF-TEM (AIF-TEM) approach. This design dynamically adjusts the TEM's sensitivity to changes in the input signal's amplitude and frequency in real-time. We provide a comprehensive analysis of AIF-TEM's oversampling and distortion properties. By the adaptive adjustments, AIF-TEM as we show can achieve significant performance improvements in terms of sampling rate-distortion in a practical finite regime. We demonstrate empirically that in the scenarios tested AIF-TEM outperforms classical IF-TEM and traditional Nyquist (i.e., periodic) sampling methods for band-limited signals. In terms of Mean Square Error (MSE), the reduction reaches at least 12dB (fixing the oversampling rate). Additionally, we investigate the quantization process for AIF-TEM and analyze the quantization MSE bound. Empirical results show that classic quantization for AIF-TEM improves performance by at least 14 dB compared to IF-TEM. We introduce a dynamic quantization technique for AIF-TEM, which further improves performance compared to classic quantization. Empirically, this reduction reaches at least 10 dB compared to classic quantization for AIF-TEM.
\end{abstract}}

\begin{IEEEkeywords}
asynchronous sampler, integrate-and-fire, time encoding machine, quantization, mean square error.
\end{IEEEkeywords}

%
\IEEEpeerreviewmaketitle

\section{Introduction}
%
%
%
%
\IEEEPARstart{C}{onventional} analog-to-digital converters (ADCs) transform continuous analog signals into discrete digital values \cite{antoniou2006digital}. These ADCs perform two primary operations: periodic sampling and quantization. As depicted in Fig.~\ref{fig:TEM}(a), periodic sampling captures the amplitude of the signal at uniform intervals, whereas quantization converts the discrete values of these outcomes into bits \cite{antoniou2006digital}.

\begin{figure}
\centering
\includegraphics[width=0.45\textwidth]{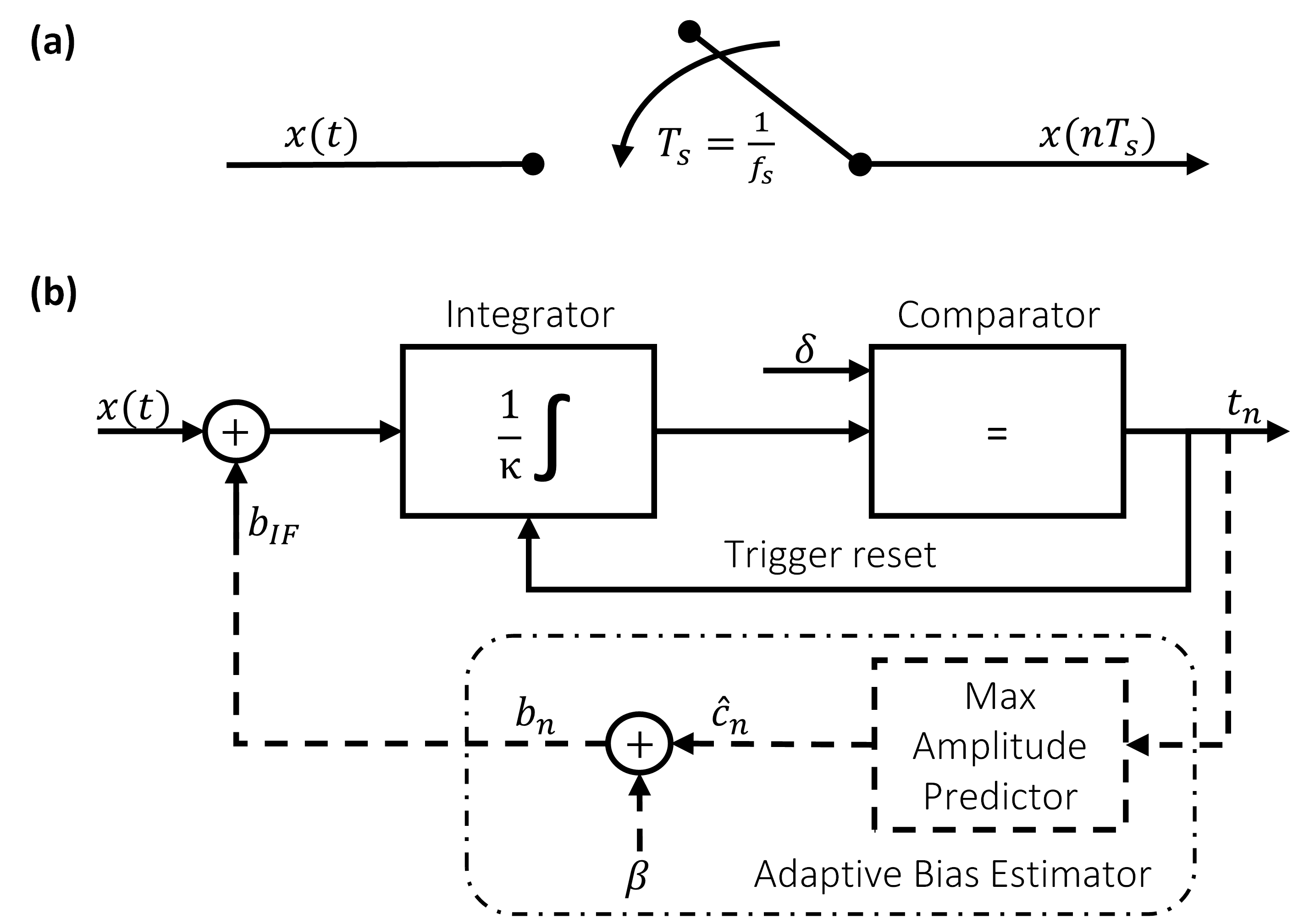}
\caption{\small\label{fig:TEM}(a) Periodic Sampler. (b) IF-TEM model (solid lines) and its adaptive design, AIF-TEM (solid with dashed lines).}
\vspace{-0.5cm}
\end{figure}

Asynchronous ADCs (AADCs) offer an intriguing alternative due to their energy-efficient operation without the sensitive global clock typically required by traditional ADCs \cite{miskowicz2018event,wei2006asynchronous,chen2006asynchronous}. In contrast to classic ADCs, AADCs sample signals non-uniformly, only when detecting specific events. \blue{In level-crossing asynchronous ADCs (LC-ADCs), sampling occurs when the input signal crosses predefined amplitude thresholds \cite{kozmin2008level, akopyan2006level,saeed2021evaluation}. While LC-ADCs are widely studied and implemented at the circuit level, particularly for low-power applications, time-encoding approaches differ by triggering sampling when the integral of the input signal reaches a fixed threshold \cite{lazar2004perfect, lazar2005time, rudresh2020time, koscielnik2007designing}}. Notably, the density of such time samples is directly related to variations in the signal amplitude \cite{lazar2004perfect}.
\off{like amplitude changes \cite{lazar2004perfect,lazar2005time, rudresh2020time ,koscielnik2007designing}. This method, known as "time encoding", generates time samples that provide a discrete representation of the analog signal. Notably, the density of these samples is directly proportional to the variations in signal amplitude \cite{lazar2004perfect}.}

\blue{Time encoding offers advantages such as low supply voltage (entirely processed in the time domain), ultra-low power consumption, and a straightforward architecture \cite{koscielnik2007designing, florescu2022time,kinget2005robustness,gontier2014sampling,naaman2022fri,rastogi2011integrate}.} Several implementations of time encoding are available, including the integrate-and-fire time encoding machine (IF-TEM) \cite{lazar2004time,rastogi2011integrate,ryu2021time,adam2020sampling,tarnopolsky2022compressed}, and the asynchronous sigma-delta modulator (ASDM) \cite{koscielnik2015sample, lazar2004perfect, guo2024ed, florescu2022time}. \blue{In practical applications, the output of time encoding machines must be quantized to enable digital processing \cite{lazar2004perfect}. However, quantization introduces distortion between the original signal and the signal reconstructed from its quantized samples. Several works have analyzed quantization in time encoding and proposed frameworks to reduce its impact \cite{lazar2004perfect, naaman2021time, florescu2023model, koscielnik2011natural}.}

In this work, our focus is on the IF-TEM sampler.\off{, which operates analogously to the functioning of human brain neurons \cite{andrew2003spiking}. Specifically} As depicted in Fig.~\ref{fig:TEM}(b) (solid lines), IF-TEM first biases the input analog signal, \emph{with a fixed bias}, chosen to exceed a constant determined by the signal's maximal amplitude and frequency. Following this, it integrates and contrasts the result against a threshold and records the instances when this threshold is crossed. 
\off{The time differences from the IF-TEM output are then quantized. Previous studies have investigated the quantization process using time encoding for band-limited signals, proposing an upper bound for the Mean Squared Error (MSE) in quantization distortion \cite{lazar2004perfect}. Additionally, other studies have compared IF-TEM with conventional ADCs, using the relationship between the signal’s energy, frequency, and maximum amplitude to study the MSE upper bound \cite{naaman2021time, papoulis1967limits}.However, a primary limitation of the IF-TEM is its \emph{unchanging sensitivity} to variations in signal amplitude and frequency, setting an average Nyquist ratio and oversampling rate }
However, a primary limitation of the IF-TEM is \newblue{its fixed bias and resulting constant Nyquist ratio, which do not adapt to variations in signal amplitude and frequency, thereby imposing an oversampling rate
\cite{lazar2004time,naaman2021time} that significantly limits its performance. Specifically, its fixed bias is determined by a global amplitude bound, resulting in persistent oversampling when the temporal input signal amplitude is low.}

To address this limitation and further optimize time encoding schemes, we introduce a new adaptive design of IF-TEM, termed AIF-TEM, as illustrated in Fig.~\ref{fig:TEM}(b). The proposed approach \emph{dynamically adjusts its bias} in response to variations in the input amplitude and frequency, enabling \emph{adaptive adjustments to the actual Nyquist ratio and oversampling rate}. \blue{The primary objective of this design is to improve efficiency, either by reducing the reconstruction error for a given number of transmitted bits or by reducing the total number of bits required to achieve a target reconstruction quality. \newblue{The core engineering principle of AIF-TEM is to maintain the adaptive bias close to the temporal maximum amplitude of the signal, while ensuring the reconstruction condition remains satisfied at all times.} Our study focuses on analog band-limited (BL) signals to demonstrate the performance benefits of AIF-TEM in both sampling-reconstruction and quantization accuracy.}

We thoroughly investigate AIF-TEM's oversampling characteristics and \newblue{reconstruction} distortion \newblue{from the time-encoded samples}, establishing the reconstruction\off{sampling} distortion upper bound as a function of the sampling rate in a practical finite regime. This analysis provides deep insights into the effectiveness of our proposed adaptive design.

Additionally, we investigate the AIF-TEM sampler with quantization using a uniform quantizer. We present a tighter upper bound on the quantization MSE for both IF-TEM and its adaptive design. Our results show that adaptive adjustments in AIF-TEM achieve significant performance improvements in a practical finite regime in terms of reconstruction\off{sampling} and quantization distortion. We also introduce a dynamic quantization scheme that tracks the step size in a practical finite regime based on the estimated amplitude, thereby reducing the overall distortion of the recovered signals.

Finally, we conducted numerical evaluations using synthetic randomized BL signals and real audio signals to test the sampling process, adopting MSE as our primary evaluation metric. We further evaluated both classic and dynamic quantization schemes. The results demonstrate that AIF-TEM significantly outperforms IF-TEM and periodic sampling methods, both in reconstruction\off{sampling} and quantization MSE.

The remainder of this paper is organized as follows: Section~\ref{PRELIMINARIES} provides essential background information and formulates the problem. In Section~\ref{AIF-TEM Algorithm}, we detail our proposed AIF-TEM encoding and decoding algorithm and analyze its oversampling characteristics and distortion. We then present numerical simulations testing the sampling \newblue{and reconstruction} process. Section~\ref{Quantization for AIF-TEM} analyzes the quantization process, introduces the dynamic quantization strategy, and discusses the distortion of the quantization, followed by numerical simulations comparing the quantization performance of AIF-TEM, IF-TEM, and dynamic quantization for AIF-TEM. 

\begin{figure}
\centering
\includegraphics[width=0.45\textwidth]{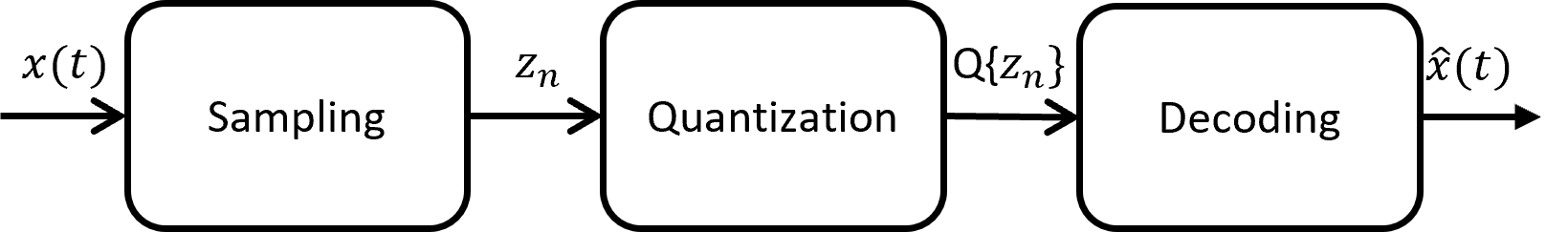}
\caption{\small \label{Encoding_decoding}\off{Generalized scheme for sampling, quantization, and decoding.} \newblue{Generalized signal processing chain comprising time encoding (sampling), quantization, and reconstruction (recovery/decoding).}}
\vspace{-0.5cm}
\end{figure}


\section{Problem Formulation And Preliminaries}\label{PRELIMINARIES}
This section introduces the problem formulation and provides background on IF-TEM and periodic samplers.
\subsection{Problem Formulation}
We address the problem of sampling an analog signal $x(t)$ and then reconstructing it. 
We characterize $x(t)$ as follows.
\off{\begin{definition}
    A signal $x(t)$ is termed $c_{\text{max}}$-bounded and $2\Omega \text{-BL}$ signal, if its amplitude is confined within $|x(t)| \leq c_{\text{max}}$ and its Fourier transform is zero for frequencies outside the closed interval $[- \Omega, \Omega]$. 
\end{definition}
\ale{Optional only for Alejandro: Let $x(t), t \in \mathbb{R}$, be a real-valued signal band-limited to the frequency support $\mathcal{B} = [-\Omega, \Omega]$, where $2\Omega$ denotes the total bandwidth and $\Omega$ is the maximum frequency component.}}
\begin{definition}
    A signal $x(t), t \in \mathbb{R}$ is termed $c_{\text{max}}$-bounded and $2\Omega \text{-BL}$ signal, i.e., band-limited to the frequency support $\mathcal{B} = [-\Omega, \Omega]$, where $2\Omega$ denotes the total bandwidth and $\Omega$ is the maximum frequency component, if its amplitude is confined within $|x(t)| \leq c_{\text{max}}$ and its Fourier transform is zero for frequencies outside the closed interval $\mathcal{B}$. 
\end{definition}
We assume that $x(t)$ has finite energy $E$ as follows.
\begin{definition}
A signal $x(t)$ has finite energy $E \in \mathbb{R}$ if
\[
E = \int_{-\infty}^{\infty} \abs{x(t)}^2 \ dt < \infty.
\]
\end{definition}
Moreover, we consider that for $c_{\text{max}}$-bounded and $2\Omega \text{-BL}$ signal with finite energy $E$, the amplitude upper bound $c_{\text{max}}$ is related to the bandwidth $\Omega$ as derived in \cite{papoulis1967limits}
\begin{equation}\label{c_max and frequncy relation}
    c_{\text{max}} = \sqrt{E\frac{\Omega}{\pi}}.
\end{equation}
\off{
We address the problem of sampling an analog signal $x(t)$ and then reconstructing it. 
We characterize $x(t)$ as follows
\begin{definition}
    A signal $x(t)$ is termed $c_{\text{max}}$-bounded and $2\Omega \text{-BL}$ signal, if its amplitude is confined within $|x(t)| \leq c_{\text{max}}$ and its Fourier transform is zero for frequencies outside the closed interval $[- \Omega, \Omega]$. 
\end{definition}
\begin{definition}
    A signal $x(t)$ is termed $2\Omega \text{-BL}$  if its Fourier transform is zero for frequencies outside the closed interval $[- \Omega, \Omega]$.
\end{definition}
\begin{definition}
    A signal $x(t)$ is $c_{\text{max}}$-bounded if its amplitude is confined within $|x(t)| \leq c_{\text{max}}$
\end{definition}

\begin{definition}
A signal $x(t)$ has finite energy $E \in \mathbb{R}$ if
\[
E = \int_{-\infty}^{\infty} \abs{x(t)}^2 \ dt < \infty
\]
\end{definition}

Given a $2\Omega \text{-BL}$ and $c_{\text{max}}$-bounded signal with finite energy $E$, the amplitude's upper bound $c_{\text{max}}$ is related to the bandwidth $\Omega$ as derived in \cite{papoulis1967limits}
\begin{equation}\label{c_max and frequncy relation}
    c_{\text{max}} = \sqrt{E\frac{\Omega}{\pi}}.
\end{equation}
}
Fig.~\ref{Encoding_decoding}\off{outlines a generalized scheme that encompasses sampling, quantization, and recovery processes.} \newblue{outlines a generalized signal processing chain that encompasses time encoding (sampling), quantization, and reconstruction (recovery/decoding)\footnote{\newblue{We note that in this work, the terms \emph{time encoding} and \emph{sampling} are used interchangeably to describe the non-uniform event-based sampling operation of IF-TEM and AIF-TEM. Likewise, the terms \emph{reconstruction}, \emph{recovery}, and \emph{decoding} are used synonymously to denote the inverse operation that reconstructs the input signal from its time-encoded samples.}}.} The input signal $x(t)$ undergoes a sampling process that results in discrete measurements $\{z_n\}_{n \in \mathcal{Z}}$. 
Sampling may be performed using a conventional sampling method as shown in Fig. \ref{fig:TEM}(a) or through IF-TEM or AIF-TEM methods as shown in Fig. \ref{fig:TEM} (b).
After the sampling process, discrete measurements from the sampling stage are quantized to produce bit representations $Q(z_n)=\tilde{z}_n$.
The quantized measurements are then decoded to reconstruct the signal $\hat{x}(t)$.
In practical scenarios involving finite regimes and quantization, a distortion arises between the original input signal $x(t)$ and the recovered signal $\hat{x}(t)$. This distortion is quantified using the Mean Squared Error (MSE), expressed in decibels (dB) as follows
\begin{equation}\label{MSE}
\text{MSE} \triangleq 20 \log_{10} \left( (1/\sqrt{T})\|x(t) - \hat{x}(t)\|_{L_2[0,T]} \right)\quad\text{[dB]}.
\end{equation}
\newblue{We aim to refine the chain process (see Fig.~\ref{Encoding_decoding}) by reducing the overall reconstruction distortion arising from the entire stages in the chain, for a given number of bits used to quantize the signal, or by reducing the total number of bits required to achieve a target reconstruction distortion.}

\subsection{Related Work}
\blue{Integrate-and-Fire (IF) models are widely used in neuroscience to describe brain activity and neuron spiking behavior \cite{andrew2003spiking}. To better mimic neuronal spiking, Leaky IF (LIF) model introduces a leakage term, while the Exponential IF (EIF) model incorporates exponential spike generation \cite{fourcaud2003spike}. To further capture neural adaptation behavior, the Adaptive Exponential IF (AdEx) model \cite{brette2005adaptive} adds an adaptation current that dynamically influences the neuron's firing rate. In addition, \cite{liu2001spike} proposed a general adaptation mechanism for LIF neurons. These models primarily aim to prevent excessive neuronal firing. Beyond theoretical neuroscience, several studies \cite{kwon2021low,millner2010vlsi} have implemented IF models in VLSI circuits, enabling real-time, low-power neuromorphic processing. These hardware-based models emulate biological neurons for neuromorphic computing applications and often include bias currents as part of the neuron dynamics.}

\newblue{While the aforementioned works focus on biological neuron modeling and neuromorphic implementations, where bias terms can be employed (e.g.,~\cite{davies2018loihi}), their objectives differ from the present work. In neuromorphic systems, the bias typically serves as a neuronal parameter governing firing behavior, whereas in the proposed AIF-TEM (see Section~\ref{AIF-TEM Algorithm}), it is adaptively controlled to regulate the effective Nyquist ratio and support reconstruction guarantees for bandlimited signals.}
\blue{
Unlike AdEx, which adapts spiking dynamics, the proposed method herein derives and adapts the Nyquist condition to ensure perfect signal recovery. It dynamically adjusts event-based sampling to improve the efficiency of analog signal encoding. \off{This is achieved by reducing both \newblue{reconstruction}\off{sampling} and quantization distortion, as well as minimizing bit usage.}\textcolor{black}{This is achieved by reducing the overall reconstruction distortion arising from the entire stages in the chain (see Fig.~\ref{Encoding_decoding}), while minimizing bit usage.} Our results demonstrate that this adaptive bias enhances signal reconstruction efficiency compared to non-adaptive methods, making it fundamentally distinct from neuroscience-inspired adaptive IF models that focus on capturing neural dynamics, rather than linking their results to the derived adaptive Nyquist requirement suggested herein for BL signals.}

\blue{Asynchronous Sigma-Delta Modulator (ASDM) was first introduced by \cite{kikkert1975asynchronous} as a time encoding scheme based on threshold-triggered integration.\off{Later, Lazar introduced the Integrate and Fire Time Encoding Machine (IF-TEM) in \cite{lazar2004time}, following a similar event-based encoding approach.} \newblue{Building on earlier integrate-and-fire neuron models (e.g.,~\cite{stein1965theoretical}), Lazar et al. in \cite{lazar2004time} formalized the Integrate-and-Fire Time Encoding Machine (IF-TEM) framework for bandlimited signal processing and established associated reconstruction guarantees.} Both ASDM and IF-TEM share a common structure, consisting of an adder, an integrator, and a comparator.} \blue{The key difference between these models lies in that ASDM alternates between positive and negative fixed biases and compares against two thresholds (positive and negative). In contrast, IF-TEM applies only a positive bias and uses a single fixed threshold for comparison. Amplitude Adaptive ASDM (A-ASDM) was introduced in \cite{ozols2013amplitude} as an extension of ASDM with an adaptive bias mechanism that adjusts based on the time-varying envelope of the signal to reduce power consumption. Later, \cite{ozols2016amplitude} proposed an A-ASDM design that eliminates explicit envelope encoding while still adapting the bias through a signal-derived function. Further developments in \cite{shavelis2017amplitude} focused on hardware circuit implementations, demonstrating the feasibility of A-ASDM for real-time signal processing. }

\blue{While A-ASDM primarily focuses on power efficiency and relies on amplitude envelope estimation using the input signal, \newblue{A-ASDM requires direct access to the analog input waveform. This approach typically involves additional analog circuitry and continuous observation of the signal amplitude. } In contrast, the proposed Adaptive IF-TEM \newblue{detailed in Section~\ref{AIF-TEM Algorithm}, performs bias adaptation based on amplitude estimates inferred from output event timing without direct access to the input waveform. This enables a fully event-driven architecture that avoids the need for additional analog components.} \newblue{In scenarios where only the event stream is available, such as distributed sensing systems, neuromorphic hardware, or ultra-low-power time-domain implementations, adaptations based on output timings remain applicable and preserve the asynchronous nature of the system.} In addition, AIF-TEM introduces a novel adaptive sampling condition based on an analytically derived adaptive Nyquist requirement. The sampling rate is related to the amplitude and frequency variations of the input signal, enabling dynamic control of oversampling while preserving the signal reconstruction guarantees. These contributions result in improved sampling efficiency, reduced quantization error, and a lower number of transmitted bits.}

\subsection{IF-TEM vs. Periodic Sampler}\label{IF-TEM vs. Periodic Sampler}

Conventional sampling techniques, such as periodic sampling depicted in Fig.~\ref{fig:TEM}(a), involve measuring the amplitude of a signal at uniform time intervals. For an input signal $x(t)$, this approach yields discrete samples $x(nT_s)$ with a consistent sampling interval $T_s$. In contrast, the IF-TEM technique samples $x(t)$ non-uniformly\footnote{\newblue{Although IF-TEM sampling is inherently non-uniform, some reconstruction approaches map the resulting event times onto a uniform or structured \emph{computational} grid for recovery; see, e.g., \cite{florescu2015novel}.}}, focusing on capturing time instances rather than amplitude values.

An IF-TEM is characterized by three parameters: a fixed bias $b_{\text{IF}}$, a scaling factor $\kappa$, and a threshold $\delta$, as depicted in Fig.~\ref{fig:TEM}(b) (solid lines). The input to the IF-TEM, $x(t)$, is a $c_{\text{max}}$-bounded signal. The time-encoding process begins by adding the bias $b_{\text{IF}}$ to $x(t)$. This augmented signal, $x(t) + b_{\text{IF}}$, is subsequently scaled by $1/\kappa$ and integrated. To ensure that the integrator's output continuously rises, it is essential that $b_{\text{IF}} > c_{\text{max}}$. The moments, or firing times, denoted as ${\{t_n\}_{n\in \mathcal{Z}}}$, are recorded when the integral surpasses the threshold $\delta$. After each recording, the integrator is reset to zero.
For $t_n > t_{n-1}$, the integrator's output is given by
\begin{equation}\label{Integ out-IF-TEM}
 \frac{1}{\kappa}\int_{t_{n-1}}^{t_n} (x(s)+b_{\text{IF}}) \ ds = \delta.
\end{equation}

Thus, the relationship between the input $x(t)$ and its time output ${{t_n},{n\in \mathcal{Z}}}$ of the IF-TEM is given by 
\begin{equation}\label{IF relation times ampl}
P_n \cong \int_{t_{n-1}}^{t_n} x(s) \ ds = -b_{\text{IF}}(t_n-t_{n-1}) +\kappa\delta.
\end{equation}
\off{
As a result of applying the mean theorem to the term on the left side of the previous Eq, the following result is obtained:
\begin{equation}\label{Tn Equall}
    T_n = \frac{\kappa\delta}{b_{\text{IF}}+x(\zeta_n)}
\end{equation}
Where $ \zeta_n \in [t_{n+1},t_n]$,}
Subsequently, based on \cite{lazar2004time}, the time differences between firing times, denoted as $T_n=t_n-t_{n-1}$, are bounded by
\begin{equation}\label{Tn Bound -IF-TEM}
\Delta t_{c_{\text{min}}} \triangleq \frac{\kappa \delta}{b_{\text{IF}}+c_{\text{max}}} \leq T_n \leq \frac{\kappa \delta}{b_{\text{IF}}-c_{\text{max}}} \triangleq 
\Delta t_{c_{\text{max}}}.
\end{equation}

\off{\subsection{Recovery of Signal from its Samples}}
\subsection{\newblue{Reconstruction from IF-TEM and Periodic Sampling}}
For the periodic sampling method, the Shannon-Nyquist theorem dictates that a $2\Omega$-BL signal, $x(t)$, can be perfectly reconstructed from its discrete samples $x(nT_s)$ when sampled at a rate no less than the Nyquist rate, $\frac{\Omega}{\pi}$ \cite{nyquist1928certain}.

\blue{
The recovery of an analog signal from its time-encoded output in IF-TEM has been extensively studied in previous works. Lazar and Tóth \cite{lazar2004perfect} proposed an algorithm for the perfect reconstruction of BL and 
$c_{\text{max}}$-bounded signals from an ASDM sampler. This approach was later extended in \cite{lazar2004time} to adapt the method for IF-TEM with a refractory period. Subsequently, Lazar et al.\ in \cite{lazar2006real,lazar2008overcomplete} further developed the work in \cite{lazar2004perfect} by introducing a real-time recovery algorithm for ASDM output that utilizes a segment-based recovery approach. Moreover, Lazar et al.\ in \cite{lazar2005fast} proposed a fast recovery algorithm for ASDM.}

\blue{While these works focused on bandlimited signals, Thao et al.\ \cite{thao2023bandlimited} examined various methods for decoding bandlimited signals from leaky IF-TEM output, where the leaky mechanism introduces an additional decay term that affects the timing of threshold crossings. They extended the Projection Onto Convex Sets (POCS) method and compared it to previous decoding approaches. In contrast, Alexandru et al.\ \cite{alexandru2019reconstructing} analyzed the reconstruction of non-bandlimited signals from IF-TEM output, addressing a different class of signals. Recent works such as \cite{florescu2023time} considered recovery of sparse signals using flexible filter-based time encoding frameworks.}

\off{
With IF-TEM, the reconstruction of a $2\Omega \text{-BL}$ signal from its time output has been extensively studied for input signals that are $c_{\text{max}}$-bounded with finite energy $E$ \cite{lazar2004time, lazar2005time, lazar2003time}.}

\blue{In this work, we} adopt the IF-TEM sampling and reconstruction mechanism as outlined in \cite{lazar2004time} with zero refractory period, which demonstrates that such signals can be perfectly reconstructed using an IF-TEM with parameters $\{b_{\text{IF}}, \kappa, \delta\}$, if $b_{\text{IF}} > c_{\text{max}}$ and the Nyquist ratio, $r_c$, is given by
\begin{equation}\label{recovery}
r_c\triangleq\frac{\kappa \delta}{b_{\text{IF}}-c_{\text{max}}}\frac{\Omega}{\pi} < 1.
\end{equation}
This constraint stipulates that the interval between two successive trigger times must not exceed the inverse of the Nyquist rate. We note that the IF-TEM employs a fixed upper-bound bias, setting an average sampling and Nyquist ratio that remains unaffected by variations in the input signal’s amplitude and frequency, which significantly limits its performance.

\subsection{Quantization}

In classical sampling, each amplitude measurement $ z_n = x(nT_s) $ is quantized. Given that these measurements fall within the amplitude range $ [-c_{\text{max}}, c_{\text{max}}] $, the quantization step size $ \Delta_{\text{per}} $ for a uniform $ K $-level quantizer is given by
\begin{equation*}
    \Delta_{\text{per}} = \frac{2c_{\text{max}}}{K}.
\end{equation*}

\blue{In contrast, for time encoding machines (TEMs), the sampled output consists of trigger times $ z_n = t_n $, which are then quantized. Several works have explored the quantization of time-encoded signals. The authors of \cite{lazar2004perfect} studied uniform quantization of time samples from an ASDM and provided a bound on quantization-induced reconstruction distortion. The authors of \cite{koscielnik2011natural} further investigated ASDM quantization, showing that the sampling density varies with the input signal amplitude, resulting in natural compression or expansion effects. The authors of \cite{florescu2023model} introduced a model-driven quantization framework for ASDM, proposing the QTEM (Quantized Time Encoding Machine) approach, in which sampling and quantization are combined into a single operation. Their results show that QTEM achieves lower quantization reconstruction distortion compared to conventional uniform quantization.}

For IF-TEM, uniform quantization of time samples has been studied in \cite{naaman2021time}, where it is compared against uniform amplitude quantization in classical sampling. The authors derived an upper bound on the quantization mean squared error (MSE) and analyzed how IF-TEM quantization is influenced by signal frequency and amplitude dynamics.

The quantization step size $ \Delta_{\text{IF}} $ for a $ K $-level uniform quantizer is determined based on the dynamic range of the time differences $ T_n $, as follows \cite{naaman2021time}

\begin{equation}\label{step size for IF}
    \Delta_{\text{IF}} = \frac{\kappa \delta}{(b_{\text{IF}}+c_{\text{max}})(b_{\text{IF}}-c_{\text{max}})}\frac{2c_{\text{max}}}{K}.
\end{equation} 

Considering any $ \gamma > 1 $, let $ b_{\text{IF}} = \gamma c_{\text{max}} $, with $ b_{\text{IF}} > c_{\text{max}} $. Unlike classical sampling, where increasing the sampling frequency increases the quantization step size, IF-TEM exhibits an inverse relationship between signal frequency and quantization resolution \cite{naaman2021time}. Specifically, an increase in signal frequency leads to a decrease in the quantization step size $ \Delta_{\text{IF}} $, since higher frequency signals exhibit larger amplitude variations, affecting inter-event times. The upper bound for the quantization MSE in IF-TEM is given by

\begin{equation}\label{quantization mse bound IF-TEM}
    \mathbb{E}[ \mathcal{E}^2 ]_{IF} \leq \frac{\Omega(b_{\text{IF}}+c_{\text{max}})}{\pi\kappa \delta}\left(\frac{b_{\text{IF}}+c_{\text{max}}}{1-r_c}\right)^2\frac{\Delta^2_{IF}}{12}.
\end{equation}

\section{AIF-TEM Algorithm}\label{AIF-TEM Algorithm}

In this section, we introduce the Adaptive Integrate-and-Fire Time Encoding Machine (AIF-TEM), a novel machine that dynamically adapts its operation to the amplitude and frequency variations of the input. \blue{The core idea of AIF-TEM is to adaptively control the sampling rate by modifying the bias $b_n$ based on recent amplitude behavior, allowing better alignment with signal dynamics, unlike classical IF-TEM, which maintains a fixed rate. A key feature of AIF-TEM is the Adaptive Bias Estimator, depicted by dashed lines in Fig.~\ref{fig:TEM}(b). This estimator dynamically adjusts the bias $b_n$ to account for variations in signal amplitude. The estimator includes the Max Amplitude Predictor (MAP) block, which estimates amplitude variations and assists in bias adaptation.}

Consider a $c_{\text{max}}$-bounded input signal $x(t)$. For AIF-TEM, the time output is represented as $\{t_n\}, \forall n \in \mathcal{Z}$. In this context, each iteration, represented by $n$, captures the duration between two successive trigger events, $t_{n-1}$ and $t_n$. We aim to determine the maximum amplitude value, $c_n$, within each iteration $n$. This is achieved by examining a time window that spans from the current trigger time $t_n$ back to the preceding $w$ trigger times. Here, $w$ denotes the number of past trigger intervals in the window. Therefore, the maximum amplitude value $c_n$ within this window is given by

\begin{equation}\label{c_n}
    c_n \triangleq \max_{t_{n-w} \leq t \leq t_n} (|x(t)|).
\end{equation}

The MAP block estimates $c_n$, which is then used to adapt the bias $b_n$ accordingly. To ensure that the integrator output increases consistently, we introduce the following definition.
\blue{\begin{definition} [Correct MAP Operation] \label{map_success} 
The MAP block operates \textit{correctly}, ensuring the \textit{successful} functioning of AIF-TEM, if and only if
\begin{equation*} 
b_n \geq c_n  \quad \forall n \in \mathcal{Z}. 
\end{equation*} 
\end{definition}}
Ideally, a desired operational mode ensures that the estimated amplitude $\hat{c}_n$ \newblue{tracks the temporal maximum amplitude $c_n$ while preserving the\off{ Nyquist reconstruction condition} adaptive Nyquist condition required for reconstruction (see \eqref{r_a general}). This alignment reduces unnecessary oversampling when the signal amplitude is low, while maintaining the sampling density required for accurate recovery when the amplitude increases. Encoding stability is ensured by enforcing $b_n \ge c_n$ (Definition~\ref{map_success}), which guarantees monotonic integrator growth. In addition, the bias is constrained within predefined limits, $b_{\min} \le b_n \le b_{\max}$, to prevent divergence of the adaptive update and to maintain robustness against estimation errors.}


\blue{The AIF-TEM algorithm, with a particular emphasis on the bias estimator, supports various operational modes, offering flexibility in how the bias is updated. This adaptability allows AIF-TEM to be optimized for various signal characteristics, opening avenues for further research and improved performance in real-world applications.} The subsequent sections detail the encoding process, the implementation of the bias estimator, and alternative operational modes.

\subsection{Encoding Process}

The input signal is first biased by $b_n$, resulting in $x(t) + b_n$. For all $t \in \mathbb{R}$ and $n \in \mathcal{Z}$ with $t \geq t_{n-1}$, the output of the integrator is given by
\begin{equation}\label{Integ out-AIF-TEM}
y_n(t) \triangleq  \frac{1}{\kappa}\int_{t_{n-1}}^{t} (x(s)+b_n) \ ds.
\end{equation}
Assuming that the MAP block operates correctly as defined in Definition~\ref{map_success}, the output $y_n(t)$ will increase monotonically. \blue{At time $t = t_n$ (with $t_n > t_{n-1}$), the output $y_n(t)$ reaches the threshold $\delta$, satisfying the relation in \eqref{IF relation times ampl} with the adaptive bias $b_n$ replacing the fixed bias $b_{\text{IF}}$.}

\off{
\begin{equation}\label{integ}
P_n \triangleq \int_{t_{n-1}}^{t_{n}} x(s) \ ds = -b_n(t_{n}-t_{n-1}) + \kappa\delta.
\end{equation}
\blue{Equation~\eqref{integ} follows directly from \eqref{IF relation times ampl}, where the bias is now adaptive rather than fixed.}

\blue{Since the input signal is bandlimited and continuous, applying the mean value theorem to the left-hand side of \eqref{IF relation times ampl} guarantees the existence of a point $\zeta_n \in [t_{n-1}, t_n]$ such that
\begin{equation}\label{Tn equall AIF}
   x(\zeta_n) =  -b_n + \frac{\kappa\delta}{T_n}.
\end{equation}}

\blue{This equation establishes a direct link between time differences and signal amplitude, allowing us to estimate amplitude variations.}}
Once $t_n$ is recorded, the integrator resets, and the algorithm uses the time difference $T_n = t_n - t_{n-1}$ to predict the next bias value $b_{n+1}$.\off{ In this context, the MAP block is crucial, estimating $\hat{c}_n$ and forecasting $\hat{c}_{n+1}$ from the previous $w$ estimated values $\hat{c}_k$, for which $n-w \leq k \leq n-1$.}

During the interval $t_{n-1} < t \leq t_{n}$, the signal's amplitude is constrained by $|x(t)| \leq c_n < b_n$. By leveraging this inequality and substituting it into \eqref{IF relation times ampl} with the adaptive bias $b_n$, we determine a bound for the duration between successive trigger times, $T_n$. This bound is given by
\begin{equation}\label{Tn bound aif}
\Delta t_{\text{min}}[n] \triangleq \frac{\kappa \delta}{b_n+c_n} \leq T_n \leq \frac{\kappa \delta}{b_n-c_n} \triangleq \Delta t_{\text{max}}[n].
\end{equation}

Initially, the bias is set such that $b_{n=1} > c_{\text{max}}$. This configuration ensures that the integrator's output, $y_1(t)$, increases monotonically for all $t \geq t_0$. As a result, $y_1(t)$ surpasses the threshold $\delta$ at a time $t_1 > t_0$. Therefore, the equalities in \eqref{IF relation times ampl} and \eqref{Tn bound aif} hold for $n=1$.

\off{
However, since $b_n = \hat{c}_n +\beta = c_n+\epsilon_n+\beta$, where $\epsilon_n$ is the estimation error. Let $\epsilon_{\text{max}}$ be the max estimation error that could occur, 
\[
\epsilon_{\text{max}} = \max_n{\abs{\epsilon_n}}
\]
We get 
\begin{equation}\label{right Tn bound}
\Delta t_{\text{max}}[n]  \leq  \frac{\kappa \delta}{c_n +\beta + \epsilon_n-c_n} = \frac{\kappa \delta}{\beta - \epsilon_{\text{max}}} = T_{\text{max}}
\end{equation}
And the other side
\begin{equation}\label{left T_n bound}
\Delta t_{\text{min}}[n] \geq  \frac{\kappa \delta}{c_n +\beta + \epsilon_n+c_n} = \frac{\kappa \delta}{2c_{\text{max}}+ \beta+ \epsilon_{\text{max}}} = T_{\text{min}}
\end{equation}
}

\blue{Next, we detail the operation of the Adaptive Bias Estimator, describing our primary mode of operation while briefly outlining alternative approaches.}

\subsection{Adaptive Bias Estimator Operation} \label{MAP Modes}

In this work, we focus on a specific practical and computationally efficient MAP operation mode that demonstrates the effectiveness of the adaptive mechanism in AIF-TEM.
In this mode, the bias is updated using the relation
$b_n = \hat{c}_n + \beta$,
where $\beta>0$ serves as a safety margin \newblue{that is chosen to help maintain the condition $b_n \ge c_n$ in Definition~\ref{map_success} in practice},\off{to ensure encoding stability} while preserving the Nyquist ratio (discussed in Section~\ref{decoding process}).

For amplitude estimation, we adopt an Exponentially Weighted Moving Average (EWMA) filter due to its simplicity, computational efficiency, and ability to track smooth amplitude variations while filtering out noise. These properties make it well-suited for the bandlimited input signals considered in our application. 
\newblue{Dividing both sides of \eqref{IF relation times ampl} by $T_n$ and with the adaptive bias $b_n$ (replacing the fixed bias $b_{\text{IF}}$) yields $\bar{x}_n \triangleq \frac{1}{T_n} \int_{t_{n-1}}^{t_n} x(s)\,ds 
= -b_n + \frac{\kappa\delta}{T_n}$, where $\bar{x}_n$ represents the average value of the input signal over the interval $(t_{n-1}, t_n)$ and serves as the amplitude-related quantity used by the MAP block.}
The estimated amplitude is then computed as
\begin{equation*} 
\hat{c}_n = \alpha_1  \newblue{\bar{x}_n}   + (1-\alpha_1)\hat{c}_{n-1},
\end{equation*}
where  $0\leq\alpha_1\leq 1$  is a weighting  factor\footnote{\label{note:iir_p}\newblue{The parameters $\alpha_1$ and $\alpha_2$ control the adaptation dynamics of the estimator. In the EWMA recursion, values of $\alpha_1$ close to $1$ yield faster tracking with reduced smoothing, whereas smaller values provide stronger smoothing but slower adaptation. The effective memory length of the filter is approximately $1/(1-\alpha_1)$ samples, which provides a practical guideline for selecting $\alpha_1$ according to the expected rate of amplitude variation. The parameter $\alpha_2$ scales the standard-deviation-based correction term and determines the responsiveness of the bias update to amplitude fluctuations.}}. To enhance prediction accuracy, we iteratively compute the standard deviation of past estimates, $s_n$, via Welford's method \cite{welford1962note}
\begin{equation*} 
s_n^2 = \frac{1}{n}\sum_{i=1}^{n}(\hat{c}_i-\mu_n)^2, \quad \text{where} \quad \mu_n =  \frac{1}{n}\sum_{i=1}^{n}\hat{c}_i.
\end{equation*}
Using this estimate, the next amplitude value is predicted as 
\[
\hat{c}_{n+1} = \hat{c}_n + \alpha_2 s_n,
\]
where $\alpha_2$ is a weighting factor decided by the user$^{\ref{note:iir_p}}$. Finally, we compute the maximum value among the previous $w$ predicted amplitude values.

\newblue{The performance of the MAP block in practice is defined as its ability to satisfy the correct MAP operation condition $b_n \ge c_n$ (Definition~3), which guarantees monotonic growth of the integrator and preserves the adaptive Nyquist condition\off{ required for stable reconstruction, while maintaining the bias close to the \textcolor{red}{temporal} amplitude envelope of the signal}. Since the amplitude estimate $\hat{c}_n$ is obtained herein via causal EWMA filtering and statistical prediction, transient underestimation may occur, potentially yielding $b_n < c_n$. In this case, the integrator may momentarily violate the strict monotonicity condition; however, this event is mitigated in practice by incorporating the design margin parameter $\beta$, constraining $b_n \in [b_{\min}, b_{\max}]$, and including the variance-based correction term $\alpha_2 s_n$ in the prediction step. These mechanisms significantly reduce the probability and duration of such events and restore the condition $b_n \ge c_n$ in subsequent iterations.}
%
%
%
While the considered mode herein may not be optimal in practice, its simplicity enables efficient processing of AIF-TEM, which significantly outperforms existing approaches, as demonstrated in Sections~\ref{EVALUATION RESULTS} and~\ref{Evaluation results for Quantization Process}. We note that alternative bias estimation approaches exist in the literature, which may offer opportunities for further improvement. Potential alternatives that can meet the requirements in Definition~\ref{map_success} include: 1) \blue{Autoregressive with Exogenous Input (ARX) Model – predicts future values based on a linear combination of past values \cite{xie2021robust}.}  
2) \blue{Kalman Filter – uses a series of time-based measurements to produce optimal estimates of unknown variables \cite{chui2017kalman}.}  
3) \blue{Recursive Least Squares (RLS) – an adaptive filtering algorithm that efficiently updates estimates in real time \cite{cioffi1984fast}.} \blue{We leave the exploration of these methods for future work.}

\subsection{Decoding Process}\label{decoding process}
This section outlines the decoding process for a $2\Omega$ BL input signal, $x = x(t), t\in \mathbb{R}$, using the output of AIF-TEM, denoted by $\{t_n\}_{n = 0}^{N}$, with $N$ denotes the total number of samples.
The decoding algorithm proposed herein extends the framework introduced by Lazar \cite{lazar2004time,lazar2004perfect} and builds upon classical reconstruction techniques employed in non-adaptive TEM schemes \cite{benedetto1994theory,duffin1952class}. In particular, we adopt a segment-based reconstruction approach, where the decoder dynamically adjusts the bias during reconstruction in accordance with the estimated maximum amplitude observed over the preceding $w$ sampling intervals, as defined in \eqref{c_n}. 
\newblue{In practical implementation, synchronization between the encoder and decoder is achieved by sharing the adaptive bias update rule and its initialization parameters. In the sampling framework, the adaptive bias sequence ${b_n}$ is deterministically generated from the time differences ${T_n}$ via the MAP block described in Section~\ref{MAP Modes}. Since both encoder and decoder implement the same update mechanism with identical initialization, the decoder can regenerate the corresponding bias values.}

\off{, particularly for scenarios where the bias changes with each sampling iteration.}
\off{He established that perfect recovery of BL signals is feasible from a noise-free output.}

Thus, the decoding process utilizes $\mathcal{S}$ segments $W_i, i \in [1:\mathcal{S}]$, each characterized by a continuous interval, $W_i =[t_{a},t_{b}]$. Here, $a=\sum_{k=0}^{i-1}L_{k}+1$ and $b=\sum_{k=1}^{i}L_{k}$, where $L_i$ denotes the number of discrete sampling times within the $i-$th segment, starting with $L_0=0$. 
The duration of each window $W_i$ in time is $|W_i| = t_b - t_a$. Let $S_{W_i} = \{a,..., b\}$ denote the specific samples within each $W_i$ segment. 
The total number of samples in the recovery process is given by $N = \sum_iL_i$.

For each $i$-th segment with $\{b_n\}_{n\in S_{W_i}}$, to decode the signal within each segment $W_i$ from AIF-TEM output, represented by $\{t_n\}_{n\in S_{W_i}}$, we employ the operator $\mathcal{A}$, defined as
\begin{equation}\label{Operator A}
\mathcal{A}x =\sum_{n\in S_{W_i}} \int_{t_{n-1}}^{t_n} x(u)du\ g(t-\theta_n)
= \sum_{n\in S_{W_i}} P_n\ g(t-\theta_n),
\end{equation}
where $g(t) = sin(\Omega t)/\pi t$ is the sinc function, and $\theta_n = (t_{n-1} + t_n)/2$ denotes the midpoints of each pair of consecutive sampling points. This operator distinguishes itself from Lazar's methodology by implementing an adaptive bias. The coefficients $P_n, n \in  S_{W_i}$, derived from the sequences $t_n, b_n, n \in  S_{W_i}$, as given in \eqref{IF relation times ampl}.
The operator $\mathcal{A}$ effectively generates Dirac-delta pulses generated at times $\theta_n$ with corresponding weights $P_n$, and then employs a low-pass filter to smooth these pulses, facilitating the recovery of the bandlimited signal.

Focusing on samples within $S_{W_i}$ for each $i$-th segment, we define the Nyquist ratio for AIF-TEM as
\begin{equation}\label{r_a general}
r_{a_n} \triangleq \frac{\kappa \delta}{b_n - c_n} \frac{\Omega}{\pi} < 1, \quad \forall n \in  S_{W_i},
\end{equation}
and the maximum Nyquist ratio as $r_{w_i} \triangleq \max_{n \in  S_{W_i}}\{r_{a_n}\}$.

To recover the signal for $t \in W_i$, we define the sequence $x_l = x_l(t)$ via the recursion $x_{l+1} = x_l +\mathcal{A}(x-x_l)$, starting with $x_0 =\mathcal{A}x$. This recursive approach allows us to refine the signal approximation incrementally. By induction, we deduce that 
\begin{equation}\label{x_l}
 x_l = \sum_{n=0}^{l} (I-\mathcal{A})^{n}\mathcal{A}x,   
\end{equation}
where $I$ denotes the identity operator. The recovery process in \eqref{x_l} can be redefined by practical matrices formulation \cite{lazar2004time}. 
\newblue{In particular, in AIF-TEM, the bias $b_n$ is updated as $b_n = \hat{c}_n + \beta$, where $\beta > 0$ is a safety margin that is chosen to help maintain the condition $b_n \ge c_n$ in Definition~\ref{map_success} in practice.\off{ When this condition holds, the operator norm in Lemma~\ref{key_lemma2} satisfies $\|I - \mathcal{A}\| < 1$, leading to a convergent reconstruction.}} \textcolor{black}{When this condition holds, Lemma~\ref{key_lemma} implies that the operator $I-\mathcal{A}$ is a contraction on the segment $W_i$, i.e., $\|(I-\mathcal{A})x\|_{W_i} \le r_{w_i}\|x\|_{W_i}$ and  $r_{w_i}<1$, which leads to a convergent reconstruction.}


\subsection{Analytical Results}\label{Analytical Results}
This section delves into the performance analysis of the proposed AIF-TEM, focusing first on its oversampling characteristics, followed by an examination of reconstruction distortion as a function of the sampling rate in a practical, finite regime.  
\blue{To maintain clarity in notation, we use the overline symbol $\overline{x}$ to denote the arithmetic mean (average) of a sequence.}

\begin{definition}\label{TEM Oversampling}[TEM Oversampling]
For a $2\Omega$-BL signal, the average oversampling in TEM is defined as $OS \triangleq f_s \times \frac{\pi}{\Omega}$, where the average sampling frequency, $f_s$, is given by $f_s \triangleq 1/\blue{\overline{T_n}}$.
\end{definition}

\blue{With this definition in place, we now analyze the average oversampling rate of the AIF-TEM and provide a practical upper bound that reflects the influence of its adaptive parameter $b_n$.
The oversampling rate is inversely related to the average time difference between consecutive events, $\overline{T_n}_{\text{AIF}}$\off{ \textcolor{red}{for $n\in S_{W_i}$}}, which in the AIF-TEM model depends on both the input amplitude $c_n$ and the adaptive bias $b_n$. Since both $b_n$ and $c_n$ are strictly positive by design, the minimum time difference $\Delta t_{\text{min}}[n]$, defined in Equation~\eqref{Tn bound aif}, is a convex function of these variables.}

\blue{Applying Jensen’s inequality to the convex function $\Delta t_{\text{min}}[n] = \frac{\kappa \delta}{b_n + c_n}$ yields the following lower bound}

 \begin{equation}\label{mean T-AIF}
\overline{T_n}_{\text{AIF}} \geq\frac{\kappa \delta}{\overline{b_n+c_n}} = \frac{\kappa \delta}{\overline{b_n} + \overline{c_n}}.
\end{equation}
\blue{Substituting this result into the oversampling rate expression from Definition~\ref{TEM Oversampling}, we obtain the following upper bound on the average oversampling rate}
\begin{equation}\label{OS upper bound AIF-TEM}
    OS_{a} \leq \frac{\overline{b}_n+\overline{c}_n}{\kappa \delta}\;\frac{\pi}{\Omega} \leq \frac{\overline{b}_n+c_{\text{max}}}{\kappa \delta}\;\frac{\pi}{\Omega} \triangleq OSU_{a}.
\end{equation}
\blue{This bound provides insight into how the adaptive parameter $b_n$ directly impacts the sampling density of the AIF-TEM. As the input amplitude decreases, both $b_n$ and $c_n$ decrease, leading to sparser sampling, which improves efficiency and reduces oversampling.}

\off{, we introduce the average oversampling rate for AIF-TEM, $OS_{a}$, alongside an upper bound, $OSU_{a}$, taking into account its adaptive parameter, $b_n$.
\begin{prop}\label{OS upper bound AIF-TEM theorem}
For a $2\Omega\text{-BL } c_{\text{max}}$-bounded signal sampled using an AIF-TEM with parameters $\{\kappa,\delta\}$ and a correct operating MAP block, the average oversampling is constrained by
\begin{equation}\label{OS upper bound AIF-TEM}
    OS_{a} \leq \frac{\overline{b}_n+\overline{c}_n}{\kappa \delta}\frac{\pi}{\Omega} \leq \frac{\overline{b}_n+c_{\text{max}}}{\kappa \delta}\frac{\pi}{\Omega} \triangleq OSU_{a}.
\end{equation}
\end{prop}
\begin{proof}[Proof of Proposition~\ref{OS upper bound AIF-TEM theorem}]
 Since $b_n > 0$ and $c_n > 0$ by the design of AIF-TEM, the term $\Delta t_{\text{min}}[n]$ (see \eqref{Tn bound aif}) is convex, allowing the use of Jensen’s Inequality. \cite{jensen1906fonctions}, we obtain
 \begin{equation}\label{mean T-AIF}
\overline{T_n}_{AIF} \geq \overline{\left(\frac{\kappa \delta}{b_n+c_n}\right)} \geq\frac{\kappa \delta}{\overline{b_n+c_n}} = \frac{\kappa \delta}{\overline{b_n} + \overline{c_n}}.
\end{equation}
Substituting $\overline{T_n}_{AIF}$ into the oversampling rate expression from Definition~\ref{TEM Oversampling} yields the stated upper bound.
\end{proof}
}


\newblue{Reconstruction distortion refers to the discrepancy between the input signal and its reconstructed estimate obtained from the time-encoded samples. Since the time-encoding stage is modeled as ideal in this work \cite{lazar2004time,sankar2007analysis,mekel2025self}, the analyzed distortion originates from the reconstruction (decoding) process.} This distortion in AIF-TEM is analyzed here as a function of the minimum sampling rate, $f_{s_{\text{min}}} \triangleq \frac{\Omega}{\pi \max_{i}\{r_{w_i}\}}$.
\off{ We define the maximum Nyquist ratio for AIF-TEM, for any segment $W_i$ within $t_{(i-1)w+1}\leq t_n \leq t_{iw}$\off{with size $|t_{n-w}-t_n|$}, by $r_{w_i} = \max_{n\in S_{W_i}}\{r_{a_n}\}$. Hence, the minimum sampling rate is}\off{We do note that when the signal amplitude within this segment is positive, the determination of $r_{w_i}$ is refined. That is, for a positive segment, the right terms in \eqref{IF relation times ampl} are positive, i.e., $0\leq-b_nT_n + \kappa\delta$. Thus, $r_{w_i}$ is more precisely given by $\max_{n\in S_{W_i}}\left\{\frac{\kappa \delta}{b_n} \frac{\Omega}{\pi}\right\}\leq \max_{n\in S_{W_i}}\left\{\frac{\kappa \delta}{b_n-c_n} \frac{\Omega}{\pi}\right\}$.}First, we introduce the following two lemmas, which are key steps in upper-bounding this distortion. 
\newblue{Before stating the lemmas, we define a tapering function (i.e., a finite-duration window) used to localize the analysis to a finite segment\footnote{\label{sec:sup_m}\newblue{In the supplementary material of this work, a discussion is provided on tapering functions in the finite regime in signal processing literature.}}.
For each segment $W_i=[t_s^{(i)},t_e^{(i)}]$, let $\varphi_i\in C^1(\mathbb R)$ be a real-valued taper such that $0\le \varphi_i(t)\le 1$, $\varphi_i(t)=1$ for $t\in W_i$, and $\mathrm{supp}(\varphi_i)\subseteq [t_s^{(i)}-\Delta,\,t_e^{(i)}+\Delta]$ for some $\Delta>0$. Hence, the transition region has total width at most $2\Delta$, and the associated derivative $\varphi_i'(t)$ is supported only near the segment boundaries. The taper function enables a smooth transition between the segment $W_i$ \cite{oppenheim1999discrete,harris2005use,lazar2006real}. The leakage term introduced in Lemma~\ref{lemma:Bernstein} depends on these taper properties through both the boundary-transition contribution and the spectral-leakage contribution induced by windowing. These taper-dependent quantities are collected into $\mathcal{J}_{\varphi_i}(x)$ and subsequently propagated into the distortion bound of Theorem~\ref{Distortion AIF-TEM} via $\mathcal{L}_{\varphi_i}^{\max}(x)$.}
\blue{\begin{lemma}\label{key_lemma} Assume a $2\Omega$-BL, $c_{\text{max}}$-bounded signal with finite energy $E$, sampled using an AIF-TEM. Then, the norm of the discrepancy between $x$ and $\mathcal{A}x$ within $W_i$ is bounded by \off{$\| x-\mathcal{A}x\|_{W_i} \ \leq r_{w_i}\newblue{\|{x}\|_{2}}$.}
\newblue{
\begin{equation*}
\| x-\mathcal{A}x\|_{W_i}^2
\le
r_{w_i}^2\Big(\|x\|_{W_i}^2 + \tfrac{1}{\Omega^2}\mathcal{J}_{\varphi_i}(x)\Big).
\end{equation*}}
\newblue{where $\mathcal{J}_{\varphi_i}(u)$ 
is the leakage term given in Lemma~\ref{lemma:Bernstein} associated with the taper $\varphi_i$ on $W_i$.}
\end{lemma}}
\blue{\begin{lemma}\label{key_lemma2}
 Under the same setting as Lemma~\ref{key_lemma}, the difference between $x$ and $x_{L_i}$ for $t \in \mathbb{R}$ is given by $\| x-x_{L_i}\|_{W_i} = \| (I-\mathcal{A})^{L_i+1}x \|_{W_i}$.
\end{lemma}}

\blue{{\em Proof:} The proofs of Lemmas~\ref{key_lemma} and~\ref{key_lemma2}
are deferred to Appendix~\ref{proof:lem1} and Appendix~\ref{proof:lem2}, respectively.}

\blue{Using Lemmas~\ref{key_lemma} and~\ref{key_lemma2},} the following Theorem provides an upper bound for the \newblue{reconstruction} distortion in the proposed AIF-TEM as a function of the minimum sampling rate. \newblue{To obtain this bound, we first define the residual sequence 
$z_k \triangleq (I-\mathcal A)^k x$ and
$\mathcal{J}_i^{\max}(x)
\triangleq
\max_{0\le k\le L_i}
\mathcal{J}_{\varphi_i}(z_k)$ for $k\ge0$.}

\begin{theorem}\label{Distortion AIF-TEM}
 Consider a $2\Omega$-BL, $c_{\text{max}}$-bounded signal with finite energy $E$, sampled using an AIF-TEM. Then the \newblue{reconstruction} distortion as a function of the minimum sampling rate $f_{s_{\text{min}}}$, for maximum Nyquist ratio $r_{w_i} < 1$ in each segment $W_i$, is upper bounded by
\begin{multline}\label{sampling distortion bound}
\newblue{
\textcolor{black}{D_R}(f_{s_{\min}})
\le
\frac{1}{\mathcal S}\sum_{i=1}^{\mathcal S}
\Big(
r_{w_i}^{2(L_i+1)}\|x\|_{W_i}^2
+
\mathcal{L}_{\varphi_i}^{\max}(x)
\Big)} \\
\newblue{\le
\frac{1}{\mathcal S}\sum_{i=1}^{\mathcal S}
\Big(
\left(\tfrac{\Omega}{\pi f_{s_{\min}}}\right)^{2(L_i+1)}\|x\|_{W_i}^2
+
\mathcal {L}_{\varphi_i}^{\max}(x)
\Big).}
\end{multline}
where $\mathcal{S}$ is the number of segments, $L_i$ is the number of samples in the $i$-th segment\newblue{, and the leakage term defined as
\[
\mathcal {L}_{\varphi_i}^{\max}(x)
\triangleq
\frac{r_{w_i}^2(1-r_{w_i}^{2(L_i+1)})}
{\Omega^2(1-r_{w_i}^2)}
\,\mathcal{J}_i^{\max}(x).
\]}
\end{theorem}
\begin{proof}[Proof of Theorem~\ref{Distortion AIF-TEM}]
\newblue{By Lemma~\ref{key_lemma2}, 
\[
\mathcal{E}_{s_i}^{2}
\triangleq \|x-x_{L_i}\|_{W_i}^{2}
=
\|(I-\mathcal A)^{L_i+1}x\|_{W_i}^{2}
=\|z_{L_i+1}\|_{W_i}^{2}.
\]
Applying Lemma~\ref{key_lemma} to $z_k=(I-\mathcal A)^k x$ yields
\begin{equation}\label{eq:recursion_thm1}
\|z_{k+1}\|_{W_i}^2
=
\|z_k-\mathcal A z_k\|_{W_i}^2
\le
r_{w_i}^2\Big(\|z_k\|_{W_i}^2 + \tfrac{1}{\Omega^2}\mathcal{J}_{\varphi_i}(z_k)\Big).
\end{equation}
Using $\eqref{eq:recursion_thm1}$ recursively, we obtain
{\small
\[
\|z_{k}\|_{W_i}^2
\le
r_{w_i}^{2k}\|x\|_{W_i}^2
+
\frac{r_{w_i}^2}{\Omega^2}
\sum_{m=0}^{k-1}
r_{w_i}^{2m}
\mathcal{J}_i^{\max}(x).
\]
}
Now, setting $k=L_i+1$, we have
\[
\|z_{L_i+1}\|_{W_i}^2
\le
r_{w_i}^{2(L_i+1)}\|x\|_{W_i}^2
+
\frac{r_{w_i}^2}{\Omega^2}
\sum_{m=0}^{L_i}
r_{w_i}^{2m}
\mathcal{J}_i^{\max}(x),
\]}
\newblue{and evaluating the geometric sum yields
\[
\|z_{L_i+1}\|_{W_i}^2
\le
r_{w_i}^{2(L_i+1)}\|x\|_{W_i}^2
+
\frac{r_{w_i}^2(1-r_{w_i}^{2(L_i+1)})}
{\Omega^2(1-r_{w_i}^2)}
\mathcal{J}_i^{\max}(x).
\]
Finally, averaging over all segments establishes the first bound in \eqref{sampling distortion bound}. 
The second bound follows directly since $r_{w_i}\le\max_j r_{w_j}=\Omega/(\pi f_{s_{\min}})$.}
\end{proof}
 
\off{
\begin{theorem}\label{Distortion AIF-TEM}
 Consider a $2\Omega$-BL, $c_{\text{max}}$-bounded signal with finite energy $E$, sampled using an AIF-TEM. Then the reconstruction distortion as a function of the minimum sampling rate $f_{s_{\text{min}}}$, for maximum Nyquist ratio $r_{w_i} < 1$ in each segment $W_i$, is upper bounded by
    \begin{equation}\label{sampling distortion bound}
    \small{
        D_{S}(f_{s_{\text{min}}}) \leq \frac{1}{\mathcal{S}}\sum_{i=1}^\mathcal{S}r_{w_i}^{2(L_{i}+1)}\newblue{\|{x}\|_{2}} \leq \frac{1}{\mathcal{S}}\sum_{i=1}^\mathcal{S} \left(\frac{\Omega}{\pi f_{s_{\text{min}}}}\right)^{2(L_{i}+1)}\newblue{\|{x}\|_{2}},}
    \end{equation}
where $\mathcal{S}$ is the number of segments and $L_{i}$ is the number of samples in the $i$-th segment.
 \end{theorem}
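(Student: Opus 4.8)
The plan is to follow the operator-theoretic reconstruction framework of Lazar \cite{lazar2004time}, adapted to the time-varying bias, and to control the reconstruction error segment by segment through a contraction estimate for the operator $I-\mathcal{A}$. Since $\mathcal{A}$ in \eqref{Operator A} sums only over $n\in S_{W_i}$, every step is carried out on an individual segment $W_i$ and the results are then aggregated.

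First I would establish the error telescoping for the per-segment recursion of Section~\ref{decoding process}. From $x_{l+1}=x_l+\mathcal{A}(x-x_l)$ one gets $x-x_{l+1}=(I-\mathcal{A})(x-x_l)$, and since $x_0=\mathcal{A}x$ gives $x-x_0=(I-\mathcal{A})x$, induction yields the exact error representation
\begin{equation*}
x - x_l = (I-\mathcal{A})^{\,l+1}\,x .
\end{equation*}
Hence $\|x-x_l\|_{L_2(W_i)}\le \|I-\mathcal{A}\|^{\,l+1}\,\|x\|_{L_2(W_i)}$, so the entire argument reduces to bounding the operator norm of $I-\mathcal{A}$ on each segment.

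Second, and this is the crux, I would prove the contraction estimate $\|(I-\mathcal{A})x\|_{L_2(W_i)}\le r_{w_i}\,\|x\|_{L_2(W_i)}$. The idea is that for a $2\Omega$-BL signal we may write $x=x*g$ with $g$ the ideal low-pass (sinc) kernel, so that $x(t)=\sum_{n\in S_{W_i}}\int_{t_{n-1}}^{t_n}x(u)\,g(t-u)\,du$, whereas $\mathcal{A}x$ replaces $g(t-u)$ by its midpoint value $g(t-\theta_n)$. Bounding the resulting quadrature error via the mean value theorem and Bernstein's inequality ($\|x'\|\le\Omega\|x\|$ for $2\Omega$-BL signals), combined with the uniform bound $T_n\le\Delta t_{\text{max}}[n]$ from \eqref{Tn bound aif}, shows that the per-segment error is governed by $\max_{n\in S_{W_i}}\frac{\kappa\delta}{b_n-c_n}\frac{\Omega}{\pi}=r_{w_i}$; this is precisely the frame-bound step of \cite{lazar2004time} executed with the adaptive bias $b_n$. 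I expect this to be the main obstacle, since one must check that the time-varying bias preserves the frame inequality and that the bound is uniform over all samples within the segment.

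Finally I would assemble the per-segment estimates. Running the recursion for $l=L_i$ iterations on $W_i$ gives $\|x-x_{L_i}\|_{L_2(W_i)}\le r_{w_i}^{\,L_i+1}\,\|x\|_{L_2(W_i)}$; squaring and normalizing turns this into a relative segment distortion bounded by $r_{w_i}^{\,2(L_i+1)}$. Averaging over the $\mathcal{S}$ segments yields the first inequality $D_{S}(f_{s_{\text{min}}})\le\frac{1}{\mathcal{S}}\sum_{i=1}^{\mathcal{S}}r_{w_i}^{\,2(L_i+1)}$. For the second inequality I would invoke the definition $f_{s_{\text{min}}}=\frac{\Omega}{\pi\max_i\{r_{w_i}\}}$, which gives $r_{w_i}\le\max_i\{r_{w_i}\}=\frac{\Omega}{\pi f_{s_{\text{min}}}}<1$ for every $i$; replacing each $r_{w_i}$ by this common bound (noting $2(L_i+1)>0$) produces the stated right-hand side and completes the proof.
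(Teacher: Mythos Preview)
Your overall architecture matches the paper exactly: the telescoping identity $x-x_l=(I-\mathcal{A})^{l+1}x$ is the paper's Lemma~\ref{key_lemma2}, the contraction bound $\|x-\mathcal{A}x\|_{W_i}\le r_{w_i}\|x\|_{W_i}$ is Lemma~\ref{key_lemma}, and the final assembly (square, normalize, average, then replace each $r_{w_i}$ by $\Omega/(\pi f_{s_{\text{min}}})$) is identical.

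The one place your sketch diverges is the mechanism for the contraction estimate. You propose to view $\mathcal{A}x$ as a midpoint quadrature of $x=x*g$ and bound the error by the mean value theorem plus Bernstein. Two issues: first, the identity $x(t)=\sum_{n\in S_{W_i}}\int_{t_{n-1}}^{t_n}x(u)g(t-u)\,du$ fails on a finite segment (the sum must run over all of $\mathbb{Z}$), so the quadrature picture needs care; second, the mean value theorem alone does not produce the factor $1/\pi$ that turns $\max_n T_n\cdot\Omega$ into $r_{w_i}$. The paper instead bounds $\|x-\mathcal{A}^*x\|_{W_i}$ via the adjoint, applies Wirtinger's inequality to extract the $4/\pi^2$ (hence the $1/\pi$), and then a windowed Bernstein inequality (Lemma~\ref{lemma:Bernstein}) to pass from $\|x'\|_{W_i}$ to $\Omega\|x\|_{W_i}$. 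Since you already cite the frame-bound step of \cite{lazar2004time}, you are morally pointing at the right argument; just be aware that Wirtinger, not the mean value theorem, is what delivers the sharp constant $r_{w_i}$.
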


\begin{proof}[Proof of Theorem~\ref{Distortion AIF-TEM}]
\sloppy
Focusing on the MSE for each segment $W_i$, and applying Lemma~\ref{key_lemma} and \ref{key_lemma2}, we derive an upper bound for the recovery error as follows
\[
 \mathcal{E}{_{s_i}}
 =\| x - x_{L_{i}}\|^2_{W_i} \leq \| I-\mathcal{A}\|_{w_i}^{2(L_{i}+1)}\|x\|^2_{W_i} \leq r_{w_i}^{2(L_{i}+1)}\newblue{\|x\|^2_{2}}.
\]
This leads to the overall sampling distortion, defined as the average MSE across all segments
\begin{equation}\label{distortion_formula}
\textcolor{black}{D_R} =  \frac{1}{\mathcal{S}}\sum_{i=1}^\mathcal{S} \mathcal{E}{_{s_i}}
\leq \frac{1}{\mathcal{S}}\sum_{i=1}^\mathcal{S} r_{w_i}^{2(L_{i}+1)}\newblue{\|x\|^2_{2}}.
\end{equation}

Given that the minimum sampling rate satisfies $f_{s_{\text{min}}} = \frac{\Omega}{\pi \max_{i}\{r_{w_i}\}}$, we arrive at the refined upper bound on distortion as specified in Theorem~\ref{Distortion AIF-TEM}, completing the proof.
\end{proof}
}

\newblue{As detailed in Lemma~\ref{lemma:Bernstein} and discussed in the supplementary materials of this work, the leakage term $\mathcal{L}_{\varphi_i}^{\max}(x)$ in Theorem~\ref{Distortion AIF-TEM} arises from the finite windowing operation and consists of two main components. The first component is directly related to the boundary transitions of the taper function $\varphi_i(t)$ and depends on the region where $\varphi_i'(t)\neq 0$. The second component captures spectral leakage introduced by tapering.} 
\newblue{In the special case where $\varphi_i(t)\equiv 1$ on $\mathbb{R}$ (i.e., $\Delta=0$), we have $\varphi_i'(t)=0$ and the transition region is empty. Consequently, all time-domain leakage terms vanish. 
In particular, leakage becomes small when the plateau region of the window is large relative to the transition width $\Delta$ and when the taper varies smoothly. In this regime, the boundary region occupies only a small fraction of the segment, and the inequality in~\eqref{eq:winBern_leak} approaches the classical Bernstein bound \cite{oppenheim1999discrete,harris2005use,lazar2006real}.} 
\newblue{In particular, the dependence on the taper regularity, transition width $\Delta$, and boundary behavior is fully absorbed into $\mathcal{J}_{\varphi_i}(x)$ and therefore into $\mathcal{L}_{\varphi_i}^{\max}(x)$ in Theorem~\ref{Distortion AIF-TEM}.}

In the corollary below, we express the distortion as a function of the maximum Nyquist ratio $r_{w_i}$ by deriving a bound on $L_i$ based on the average time interval ${\overline{T}}_{W_i}$. Formally, we have
\begin{equation}\label{samples num}
 L_i = \frac{|W_i|}{{\overline{T}}_{W_i}} 
\geq \frac{|W_i|}{\max_{n\in S_{W_i}} (T_n)} = \frac{|W_i|}{r_{w_i}\frac{\pi}{\Omega}}.
\end{equation}
 
\begin{cor}\label{cor:dis}
Assume the setting in Theorem~\ref{Distortion AIF-TEM}. Then, the reconstruction distortion as a function of the maximum Nyquist ratio $r_{w_i}$ in AIF-TEM is upper bounded by
 \[
 \textcolor{black}{D_R}(r_{w_i})\leq  \frac{1}{\mathcal{S}}\sum_{i=1}^\mathcal{S} \Big(r_{w_i}^{2\left(\frac{|W_i|}{r_{w_i}}\frac{\Omega}{\pi}+1\right)}\newblue{\|x\|_{W_i}^2 +\mathcal {L}_{\varphi_i}^{\max}(x)\Big).}
 \]
\end{cor}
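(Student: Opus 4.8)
The plan is to derive the corollary directly from the distortion bound already established in Theorem~\ref{Distortion AIF-TEM}, by eliminating the dependence on the per-segment sample count $L_i$ in favour of the window length $|W_i|$ and the maximum Nyquist ratio $r_{w_i}$. The natural starting point is the inequality $D_{S} \leq \frac{1}{\mathcal{S}}\sum_{i=1}^\mathcal{S} r_{w_i}^{2(L_i+1)}$ from \eqref{distortion_formula}, so that it suffices to upper bound each summand $r_{w_i}^{2(L_i+1)}$ individually and then average.

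First I would observe that, since $r_{w_i} < 1$ by the hypothesis of Theorem~\ref{Distortion AIF-TEM}, the map $L \mapsto r_{w_i}^{2(L+1)}$ is strictly decreasing in $L$. Consequently, any \emph{lower} bound on $L_i$ translates into an \emph{upper} bound on $r_{w_i}^{2(L_i+1)}$, which is exactly what is needed to preserve the direction of the distortion inequality.

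Next I would invoke the sample-count lower bound already recorded in \eqref{samples num}, namely $L_i \geq \frac{|W_i|}{r_{w_i}}\frac{\Omega}{\pi}$. This follows from writing $L_i = |W_i|/\mathbb{E}[T_n]_{W_i}$ and bounding the average inter-firing time from above by $\max_{n \in S_{W_i}} T_n = r_{w_i}\frac{\pi}{\Omega}$, the latter identification being a restatement of the definition of $r_{w_i}$ in \eqref{r_a general} together with the upper bound $\Delta t_{\max}[n]$ from \eqref{Tn bound aif}. Substituting this lower bound into the decreasing exponent gives $r_{w_i}^{2(L_i+1)} \leq r_{w_i}^{2\left(\frac{|W_i|}{r_{w_i}}\frac{\Omega}{\pi}+1\right)}$ for each $i$, and averaging over the $\mathcal{S}$ segments delivers the claimed bound, completing the argument.

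The computation is essentially a one-line substitution, so I do not anticipate a genuine obstacle; the only point that requires care is the direction of the inequality. Because the base $r_{w_i}$ lies in $(0,1)$, enlarging the exponent shrinks the term, so it is precisely the lower bound on $L_i$ — not an upper bound — that must be inserted. I would additionally double-check that the lower bound \eqref{samples num} is valid under the stated hypotheses, verifying the trivial step $\mathbb{E}[T_n]_{W_i} \leq \max_{n} T_n$ and confirming that the maximal inter-firing time is correctly matched to $r_{w_i}\frac{\pi}{\Omega}$, so that the resulting chain of inequalities is airtight.
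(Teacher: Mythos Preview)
Your proposal is correct and follows essentially the same approach as the paper: the paper's proof is a one-line remark that the corollary follows by substituting the lower bound on $L_i$ from \eqref{samples num} into the distortion formula \eqref{distortion_formula}. Your write-up simply makes explicit the monotonicity step (that $r_{w_i}\in(0,1)$ implies a lower bound on $L_i$ yields an upper bound on $r_{w_i}^{2(L_i+1)}$), which the paper leaves implicit.
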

{\em Proof:} This corollary follows by substituting the bound on $L_i$ from Equation~\eqref{samples num} into the distortion formula in Equation~\eqref{sampling distortion bound}.
\off{
It is essential to acknowledge that the value of $r_{w_i}$, which is integral to establishing the sampling distortion bounds in AIF-TEM, adapts based on the amplitude of the signal within each segment. Specifically, if the amplitude of the signal in segment $i$ is positive, then, as derived from Equation \eqref{IF relation times ampl}, $r_{w_i}$ is calculated as $r_{w_i} = \max_{n\in S_{W_i}}\left\{\frac{\kappa \delta}{b_n} \frac{\Omega}{\pi}\right\}$. Conversely, if the signal amplitude within the segment is negative, $r_{w_i}$ is determined by $r_{w_i} = \max_{n\in S_{W_i}}\{r_{a_n}\}$.
}
\begin{cor}\label{cor: perfect recovery}
   \newblue{Assume the setting of Corollary~\ref{cor:dis} and that $r_{w_i}<1$ for all $i\in[1:\mathcal S]$.\off{and sufficiently large segments $W_i, i \in [1:\mathcal{S}]$ with size $|W_i|$,} If the leakage term $\mathcal{L}_{\varphi_i}^{\max}(x) = 0$ for all $i\in[1:\mathcal S]$, the reconstruction distortion of AIF-TEM approaches zero as $L_i\to\infty$, i.e., $D_R\rightarrow 0$, with convergence rate $\mathcal{O}\!\left(r_{\max}^{2L_{\min}}\right)$ for $r_{\max} \triangleq \max_{1\le i\le \mathcal S} r_{w_i} < 1$ and $L_{\min}\triangleq \min_{1\le i\le \mathcal S} L_i$.}
\end{cor}
\newblue{Corollary~\ref{cor: perfect recovery} characterizes the ideal no-leakage case, which is attained, for example, when $\varphi_i(t)\equiv 1$. In finite-window settings with tapering, the leakage term $\mathcal{L}_{\varphi_i}^{\max}(x)$ is generally not exactly zero. Nevertheless, if $\mathcal{L}_{\varphi_i}^{\max}(x)\to 0$ as $L_i\to\infty$, then Theorem~\ref{Distortion AIF-TEM} still implies $D_R\to 0$. In this more general case, however, the exponential rate $\mathcal{O}\!\left(r_{\max}^{2L_{\min}}\right)$ is not guaranteed unless the leakage term decays at least at the same order.}
%
\begin{proof}
\newblue{Under the stated condition $\mathcal{L}_{\varphi_i}^{\max}(x)= 0$\off{ (see the discussion after Theorem~\ref{Distortion AIF-TEM})},
the bound in \eqref{sampling distortion bound} reduces to
\begin{equation*}
D_R \le \frac{1}{\mathcal S}\sum_{i=1}^{\mathcal S} r_{w_i}^{2(L_i+1)}\|x\|_{W_i}^2.
\end{equation*}
Let $r_{\max} \triangleq \max_{1\le i\le \mathcal S} r_{w_i} < 1$.
Then, for all $i$, $r_{w_i}^{2(L_i+1)} \le r_{\max}^{2(L_i+1)}$. 
Hence, we obtain
\[
D_R \le \frac{1}{\mathcal S}\sum_{i=1}^{\mathcal S} r_{\max}^{2(L_i+1)}\|x\|_{W_i}^2
\le r_{\max}^{2(L_{\min}+1)}\cdot \frac{1}{\mathcal S}\sum_{i=1}^{\mathcal S}\|x\|_{W_i}^2.
\]}
\newblue{Since $x$ has finite energy and the segments $\{W_i\}_{i=1}^{\mathcal S}$ are disjoint, we have $\frac{1}{\mathcal S}\sum_{i=1}^{\mathcal S}\|x\|_{W_i}^2 \le \|x\|^2 < \infty$. Then, because $r_{\max}<1$, we have $r_{\max}^{2(L_{\min}+1)}\to 0$ as $L_{\min}\to\infty$, which implies $D_R\to 0$. Moreover, the above inequality yields the exponential convergence rate $D_R=\mathcal{O}\!\left(r_{\max}^{2L_{\min}}\right)$.}
\end{proof}
\newblue{The corollary is stated in terms of $L_i$, the number of samples within segment $W_i$. Based on~\eqref{samples num}, we have $L_i ={|W_i|}/{\overline{T}_{W_i}}$,
where $\overline{T}_{W_i}$ denotes the average time interval within the segment. Applying Definition~\ref{TEM Oversampling}, this can be expressed as
$L_i = f_{s,i}\,|W_i|$,
where $f_{s,i}$ denotes the average sampling rate in segment $i$. Therefore, achieving a target $L_i$ requires selecting the segment duration $|W_i|$ sufficiently large for the expected sampling rate $f_{s,i}$.}
\newblue{Furthermore, since the distortion decays exponentially as $O(r_{w_i}^{2L_i})$, a target distortion level $\varepsilon$ can be achieved by enforcing
$r_{w_i}^{2L_i} \le \varepsilon$, which yields the condition $L_i \ge{\log(1/\varepsilon)}/{(2|\log r_{w_i}|)}$. Thus, the required number of samples per segment grows only logarithmically with respect to the desired reconstruction accuracy.}

\begin{figure}
\includegraphics[width=0.48\textwidth]{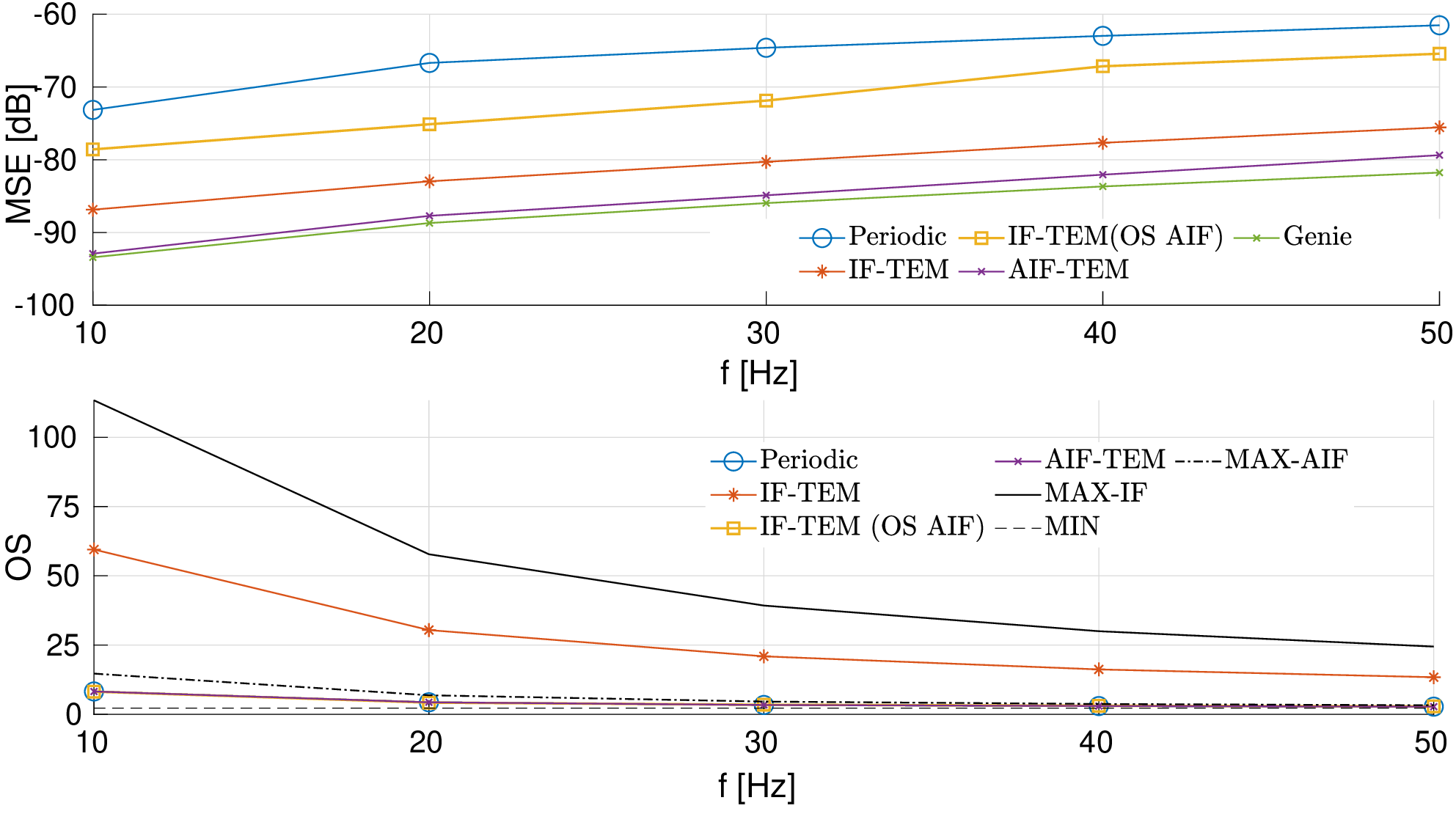}
\vspace{-0.3cm}
\caption{\small Performance comparison of Periodic sampler, IF-TEM, AIF-TEM, and 'Genie' sampler in terms of (a) MSE and (b) Oversampling factor (OS).}\label{MSE fig}\vspace{-0.5cm}
\end{figure}

\subsection{Evaluation Results for sampling process}\label{EVALUATION RESULTS}
This section provides a numerical evaluation of the proposed AIF-TEM, classical IF-TEM, and periodic sampling. The evaluation is conducted using both synthetic signals generated in MATLAB and a real audio signal.  \blue{For amplitude value estimation, the MAP block model is employed as proposed in Section~\ref{MAP Modes}. The weighting factors are configured as $\alpha_1=0.98$ and $\alpha_2=0.17$.} We begin by highlighting the advantages of the proposed AIF-TEM using a $2\Omega$-BL, $c_{\text{max}}$-bounded signal with finite energy $E$. Here, $c_{\text{max}} = \sqrt{(E\Omega)/\pi}$ \cite{papoulis1967limits}, and $\Omega$ varies between $2\pi \cdot (10{-}50)$ Hz. The input signal is defined as
\begin{equation}\label{input signal}
   x(t) =  \sum_{n=-M}^{M}a[n]\frac{\sin\left(\Omega \left(t-n\frac{\pi}{\Omega}\right)\right)}{\Omega \left(t-n\frac{\pi}{\Omega}\right)},
\end{equation}
where $M=2$, $t \in \mathbb{R}$, and coefficients $a[n]$ are randomly selected 100 times from [-1,1]. The signal energy $E$ is varied within the range [0.25,2.5]. For the classical IF-TEM, the bias $b_{\text{IF}}$ is set to achieve a Nyquist ratio of $r_c = 0.45$. In the case of AIF-TEM, the safety margin parameter $\beta$ is chosen such that $r_{a_n} \leq 0.45$ for all $n$, with a window size $w = 1$.

\begin{figure}
    \vspace{-0.0cm}
    \begin{subfigure}{0.155\textwidth}
        \includegraphics[width=\textwidth]{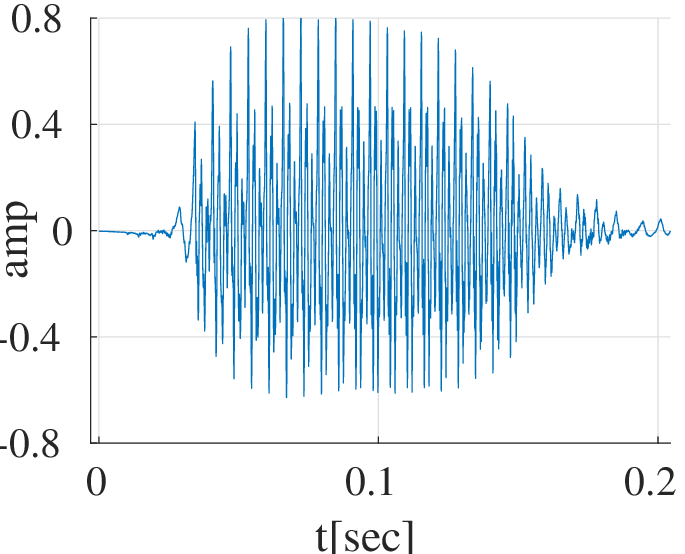}
        \caption{\vspace{-0.1cm}Segment A}
    \end{subfigure}
    \begin{subfigure}{0.155\textwidth}
        \includegraphics[width=\textwidth]{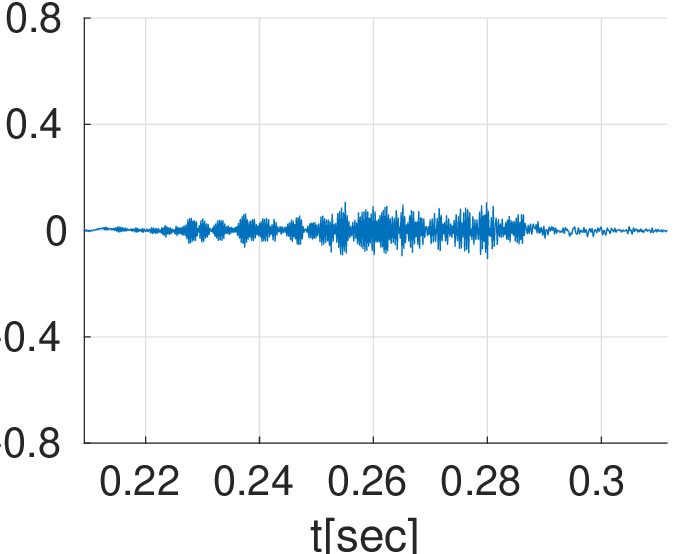}
        \caption{\vspace{-0.1cm}Segment B}
    \end{subfigure}
    \begin{subfigure}{0.155\textwidth}
        \includegraphics[width=\textwidth]{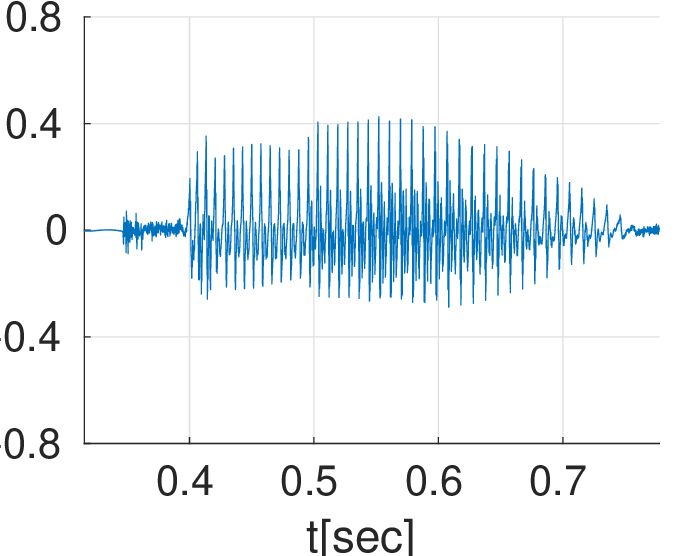} 
        \caption{\vspace{-0.1cm}Segment C}
    \end{subfigure}
        \begin{center}
            \footnotesize 
            \begin{tabular}{|l||l|l||l|l||l|l|}
            \hline
            & \multicolumn{2}{c||}{\bf Segment A} & \multicolumn{2}{c||}{\bf Segment B} & \multicolumn{2}{c|}{\bf Segment C} \\
            \hline\hline
            \textbf{Sampler}        & MSE & OS & MSE & OS & MSE & OS \\ \hline
            \textbf{Periodic}       & -54.77      & 4.79 & -52.61 &  4.79  & -48.95      & 4.79\\ \hline
            \textbf{IF-TEM}         & -68.52      & 12.99 & -64.33      & 13.02 & -68.61      & 12.99\\ \hline
            \textbf{AIF-TEM}        & -71.67      & 5.75  & -74.71      & 4.08 & -76.13      & 4.54\\ \hline
            \textbf{IF-TEM2} & -68.72      & 4.79 & -53.5       & 4.8 & -65.28      & 4.79 \\ \hline
            \end{tabular}
        \end{center}
    \vspace{-0.3cm}
    \caption{\small\label{Audio Sig}Comparative performance\off{ of Periodic, IF-TEM, and AIF-TEM samplers with bandlimited} with audio signal. Each subfigure presents a segment of an audio signal, where under the subfigures, the table represents the MSE in dB of reconstruction for each sampler and the average oversampling factor (OS).}\vspace{-0.3cm}
\end{figure}
\off{
\begin{figure}
    \centering
    \includegraphics[width=1\linewidth]{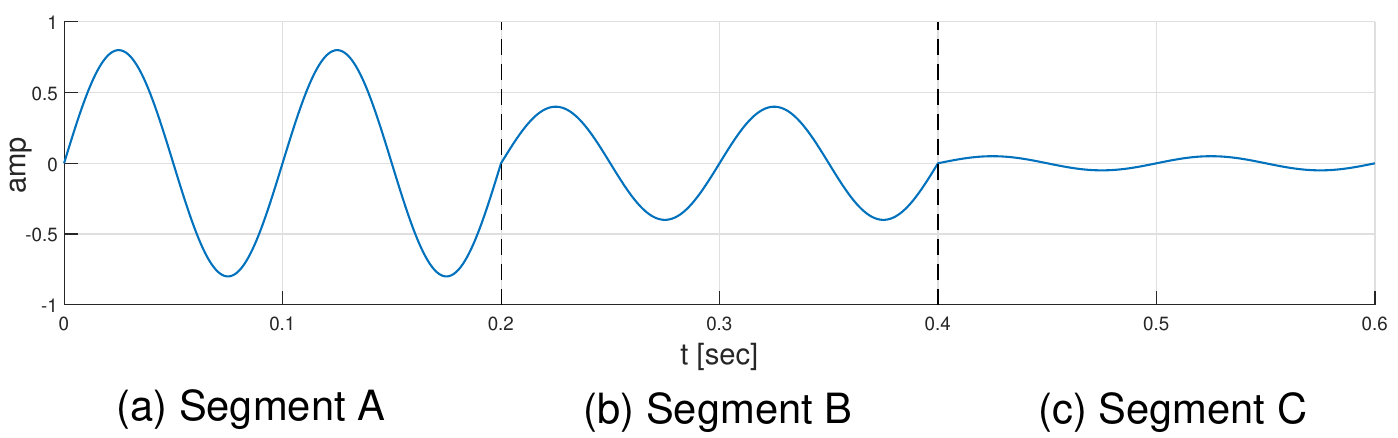}
    \vspace{-0.4cm}
    \begin{center}  
        \footnotesize 
        \begin{tabular}{|l||l|l|l||l|l|l||l|l|l|}
        \hline
        & \multicolumn{3}{c||}{\bf Segment A} & \multicolumn{3}{c||}{\bf Segment B} & \multicolumn{3}{c|}{\bf Segment C} \\
        \hline\hline
        \textbf{Sam.}   & NM    & B   & OS  & NM    & B   & OS & NM   & B   & OS   \\ \hline
        \textbf{IF.}      & -81  & -19 & 9  & -73  & -19 & 9.2 & -62 & -19 & 9.5 \\ \hline
        \textbf{AIF.}     & -101 & -84 & 13 & -99  & -69 & 9.2 & -73 & -46 & 5.5 \\ \hline
        \end{tabular}
    \end{center}
    \vspace{-0.3cm}
    \caption{\small\label{fig:Sin Sig}Comparative performance with bandlimited signal. The table represents the NMSE in dB (NM) of recovery for each sampler, the NMSE bound in dB (B), and the average oversampling factor (OS) for each segment of the signal.}\vspace{-0.5cm}
\end{figure}
}

\begin{figure}
    \centering
    \includegraphics[width=1\linewidth]{3_sin_segs_v2.pdf}
    \vspace{-0.4cm}
    \begin{center}  
        \footnotesize
        \begin{tabular}{|l||l|l||l|l||l|l|}
        \hline
        & \multicolumn{2}{c||}{\bf Segment A} 
        & \multicolumn{2}{c||}{\bf Segment B} 
        & \multicolumn{2}{c|}{\bf Segment C} \\
        \hline\hline
        \textbf{Sampler} & MSE & OS & MSE & OS & MSE & OS \\ \hline
        \textbf{IF-TEM}  & -86  & 9   & -84  & 9.2 & -91  & 9.5 \\ \hline
        \textbf{AIF-TEM} & -106 & 13  & -110  & 9.2 & -102  & 5.5 \\ \hline
        \end{tabular}
    \end{center}
    \vspace{-0.3cm}
    \caption{\small\label{fig:Sin Sig}
    Comparative performance with a bandlimited signal. 
    The table reports the MSE in dB of recovery for each sampler and 
    the average oversampling factor (OS) for each segment.}
    \vspace{-0.5cm}
\end{figure}
Fig.~\ref{MSE fig}(a) presents the MSE results as defined in \eqref{MSE}. The performance of AIF-TEM is indicated by the purple line, the red line denotes IF-TEM, both are configured with parameters $\kappa = 0.5$ and $\delta = 0.02$. The blue line demonstrates the MSE for periodic sampling, matched to the average sampling frequency of AIF-TEM. The yellow line represents IF-TEM operating with a larger $\delta$, resulting in an average oversampling rate similar to AIF-TEM, which violates the perfect reconstruction condition in \eqref{recovery}. The green line demonstrates the MSE for a hypothetical scenario with a "Genie" that informs us of the temporal amplitude, thereby determining the accurate temporal bias. In Fig.~\ref{MSE fig}(b), the average oversampling for the compared samplers is shown. The black dash-dotted line shows the AIF-TEM maximum oversampling bound from \eqref{OS upper bound AIF-TEM}. The dark line shows the IF-TEM maximum oversampling, computed by $\frac{\pi/\Omega}{\Delta t_{c_{\text{min}}}}$. In contrast, the dashed line points to the minimum oversampling, based on $1/r_c = 2.22$. Importantly, with similar oversampling, AIF-TEM achieves an MSE reduction of at least 12 dB compared to both IF-TEM and the periodic sampler. Moreover, with the Genie sampler, we observe nearly comparable oversampling and MSE results to those obtained using the estimation block.

Fig.~\ref{Audio Sig} contrasts the performances of the periodic, IF-TEM, and AIF-TEM samplers on a BL audio signal. It's evident that IF-TEM consistently displays a higher oversampling factor, but its MSE performance lags behind AIF-TEM. Notably, AIF-TEM dynamically modulates its oversampling in response to the signal amplitude, resulting in increased oversampling for larger amplitudes. In contrast, the oversampling in IF-TEM remains relatively fixed. When the periodic sampler's rate is aligned with AIF-TEM's oversampling rate, it yields a higher MSE. For IF-TEM2, adjustments to $\delta$ ensure that the average oversampling in IF-TEM corresponds to that of AIF-TEM.

Fig.~\ref{fig:Sin Sig} contrasts the performances of IF-TEM and AIF-TEM on a $2\Omega$-BL signal, $x(t)$ (with $\Omega = 2\pi10$) that is divided into 3 segments, each with distinct maximum amplitude, $x(t) = a\sin(\Omega t)$, where $a=[0.8, 0.4, 0.05]$ for each segment. The table presents the MSE\off{, the NMSE bound as per \eqref{distortion_formula} (for IF-TEM with $r_{w_i}=r_c$),} and the oversampling for each segment.
\off{ The average oversampling is assigned to be equal for every sampler over the signal.}

\off{While both AIF-TEM and IF-TEM ensure perfect recovery, the former can achieve this with less oversampling. Simulations further highlight AIF-TEM's advantages over IF-TEM and traditional periodic samplers using synthetic and audio signals.}

Figs.~\ref{fig:myfig1} and \ref{fig:myfig2} contrast the performances of the IF-TEM and AIF-TEM samplers over time.
The input signals are defined by equation \eqref{input signal}, characterized by $M=5$, $\Omega = 2\pi20$ Hz, and coefficients $a[n]$ randomly selected from [-1,1]. Given the signal's maximum energy $E =0.0869$, we calculate the maximum amplitude as $c_{\text{max}} = \sqrt{(E\Omega)/\pi} = 3.16$.
For the AIF-TEM configuration, the parameters are set to $\Delta =0.01,\kappa = 0.24$, $\beta = 0.1$ and the MAP block utilizes $w=5$. For the IF-TEM, the setup involves $\kappa = 0.24$ with a fixed bias $b_{\text{IF}} = c_{\text{max}}+\beta$. To ensure a comparable average oversampling rate to that of AIF-TEM, the threshold $\delta_c$ for IF-TEM is adjusted to $0.0433$ for the signal presented in Fig.~\ref{fig:myfig1}(a) and to $0.0262$ for the signal presented in Fig.~\ref{fig:myfig2}(a).

Sub-figures (a) shows the input signals with blue lines, alongside the output times, $t_n$, of IF-TEM and AIF-TEM, illustrated by Dirac pulses with the same amplitude as the sampled input signal, in red and purple lines, respectively. 
Sub-figures (b) illustrates the Nyquist ratio, $r$, with the purple line indicating the ratio for AIF-TEM as per \eqref{r_a general} and the red line showing the actual ratio for each time sample of IF-TEM with the fixed bias, i.e., for $b_n = b_{\text{IF}}$ (as obtained in practice by the classical scheme in Section~\ref{IF-TEM vs. Periodic Sampler}). For the ideal case with a genie (such that $b_n = \beta +c_n$), the target Nyquist ratio $r_a = \kappa \delta/\beta$, is depicted by a blue dotted line. The yellow line represents the Nyquist ratio bound, $r_c$, for IF-TEM as given in \eqref{recovery}. 
Sub-figures (c) compares the oversampling factor of IF-TEM and AIF-TEM, represented by red and purple lines, respectively. These factors are given by $OS_n \triangleq \frac{1}{T_n} \times \frac{\pi}{\Omega}$, where the blue dotted line indicates the mean oversampling as in Definition~\ref{TEM Oversampling}. The oversampling results, presented for both samplers, are for $T_n = t_{n+1} - t_n$, using the output times illustrated in Sub-figures (a).
Sub-figures (d) presents the reconstruction error between the input signals and recovered signals from IF-TEM and AIF-TEM outputs, showcased by red and purple lines, respectively. The error metric shown is $\abs{x(t) -\hat{x}(t)}$, where $\hat{x}(t)$ denotes the recovered signal. In the scenario presented in Fig.~\ref{fig:myfig1}, the MSE values in dB are $-103.78$ for AIF-TEM and $-70.4$ for IF-TEM. Additionally, for sampling using a periodic sampler, with a sampling frequency and oversampling rate equivalent to the average determined by AIF-TEM, the MSE value in dB is $-48.27$. 
\off{ Adjusting IF-TEM's threshold yields $r_c =  4.16 > 1$, indicating that the criterion for perfect recovery is not met.} Similarly, for Fig.~\ref{fig:myfig2}, the MSE values are $-109.14$ for AIF-TEM, $-79.78$ for IF-TEM, and $-69.75$ for periodic sampler.\off{ The increased threshold for IF-TEM results in $r_c = 1.25 > 1$, that is, the condition for perfect recovery doesn't hold.}

\begin{figure}
  \centering
  \includegraphics[width=\linewidth]{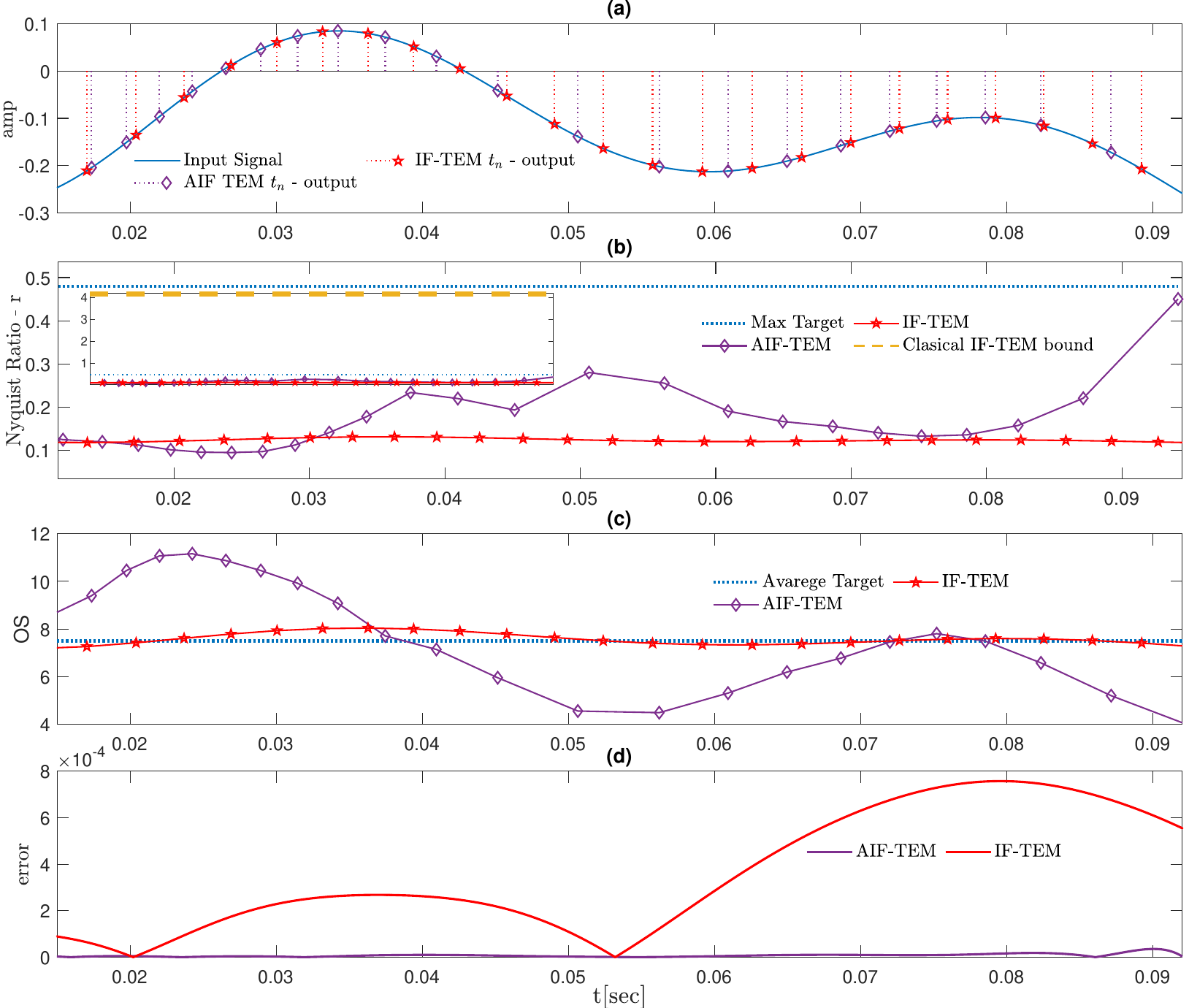}
 \caption{\small\off{Comparative performance of AIF-TEM versus IF-TEM samplers in time.}\newblue{Time-domain comparison of sampling density and event distribution for AIF-TEM and IF-TEM under varying signal conditions for signal realization 1. }}
  \label{fig:myfig1}
  \vspace{-0.5cm}
\end{figure}

\begin{figure}
  \centering
  \includegraphics[width=\linewidth]{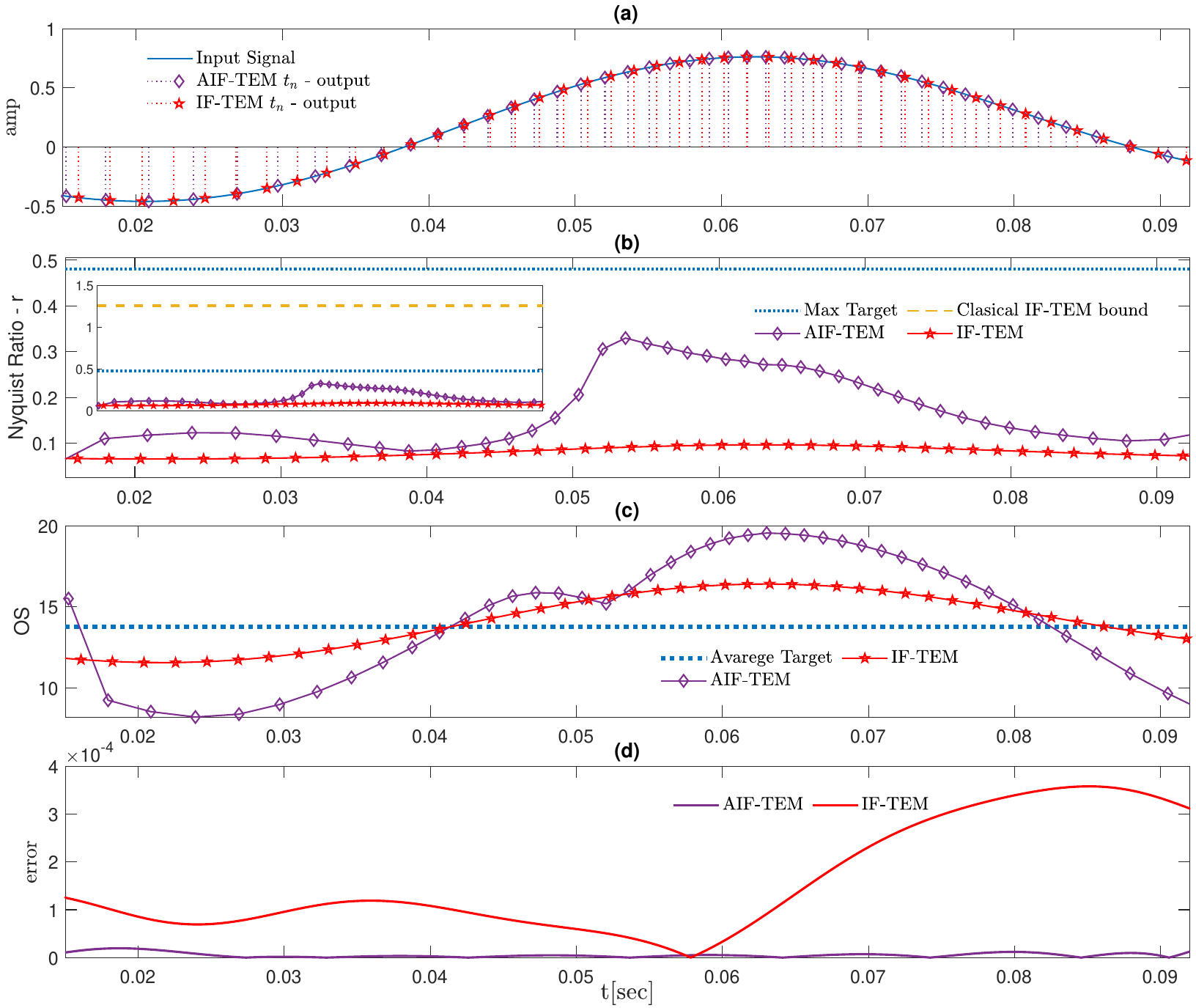}
  \caption{\small\newblue{Time-domain comparison of sampling density and event distribution for AIF-TEM and IF-TEM under varying signal conditions for signal realization 2.}}
  \label{fig:myfig2}
 \vspace{-0.5cm}
\end{figure}

We do note, that sub-figures (b) reveals that the Nyquist ratio for IF-TEM at each time sample significantly deviates from the classical IF-TEM bound (given in \eqref{recovery}). This deviation is attributed to the fixed bias, which exceeds the maximum amplitude value and lacks adaptation to temporal amplitude variations. Conversely, AIF-TEM dynamically adjusts its bias in response to temporal amplitude values, $c_n$, and the standard deviation of the preceding values, following the proposed adaptive scheme with the MAP block. This adjustment ensures that the Nyquist ratio is adapted to amplitude changes. Furthermore, Sub-figures (c) demonstrate that AIF-TEM's oversampling, at each time sample, is responsive to amplitude fluctuations compared to IF-TEM. For IF-TEM, the oversampling rates for each time sample remain near to the average oversampling, showing less adaptability. 

In summary, it is important to note that the observations from Figs.~\ref{fig:myfig1} and~\ref{fig:myfig2} illustrate that IF-TEM's \newblue{sampling behavior does not adapt to variations in the signal amplitude and frequency, resulting in a constant Nyquist ratio and oversampling rate.} In contrast, the proposed AIF-TEM's adaptive bias mechanism, by MAP block, allows for adjustments in the Nyquist ratio and oversampling rate in response to signal variations, showcasing its adaptability and results with significantly lower error in the scenarios evaluated.
%
%
\textcolor{black}{Finally, we extend the evaluation presented in this section to support overlapping between encoded segments, as commonly adopted in real-time applications, by incorporating tapering functions, i.e., finite-duration windows \cite{oppenheim1999discrete,harris2005use} (see Section~\ref{Analytical Results} and the supplementary materials for the analytical justification).
\textcolor{black}{In the tests, the signal is divided into segments of length $0.2\,\mathrm{s}$ with overlap width $\Delta = 0.02\,\mathrm{s}$ at each boundary. Specifically, each segment $W_i=[t_s^{(i)},t_e^{(i)}]$ is multiplied by a cosine taper whose plateau equals one on $[t_s^{(i)}+\Delta,\,t_e^{(i)}-\Delta]$, with smooth transition regions of total width $2\Delta$ at the segment boundaries. This overlap-add strategy follows the same framework used in \cite{lazar2006real}.}
In our simulations, the MSE difference between the non-tapered reconstruction and the tapered overlap-add reconstruction remained within approximately 3--4~dB across all samplers tested. Importantly, the relative performance gains of AIF-TEM over IF-TEM were preserved under tapering, confirming that the adaptive advantages remain robust in finite-duration, overlap-add implementations.}


In the following sections, we delve into the analysis of quantization for AIF-TEM.

\section{Quantization for AIF-TEM}\label{Quantization for AIF-TEM}

Following the sampling stage using AIF-TEM, we proceed to quantize the time differences, $T_n= t_{n+1}-t_n$. The quantized time differences are denoted by $\tilde{T}_{n} = \tilde{t}_{n+1} - \tilde{t}_n$.  
\off{While the biases $b_n$ are utilized in the signal recovery process, they are not subject to quantization.}
\newblue{While the biases $b_n$ are utilized in the signal recovery process, they are selected from a predefined discrete set with resolution $\Delta_b$ over the range $[b_{\min}, b_{\max}]$ and are not further quantized beyond this resolution. The MAP block updates $b_n$ according to the amplitude estimate $c_n$, as described in Section~\ref{MAP Modes}.}\off{ We assume that the MAP block is implemented such that it selects biases in a segmented fashion with a step size of $\Delta_b$ Given a uniform quantizer, the number of bits $R_b$ allocated for the biases is pre-determined during the sampling process and corresponds to $K_b = 2^{R_b} = \frac{b_{\text{max}} - b_{\text{min}}}{\Delta_b}$.} 
\newblue{We represent each bias value by its index on the known grid, $k_n \triangleq \frac{b_n - b_{\min}}{\Delta_b} \in \{0,1,\ldots,K_b-1\}$, for $K_b \triangleq \left\lceil \frac{b_{\max}-b_{\min}}{\Delta_b} \right\rceil$. Accordingly, the bias information can be conveyed to the decoder by transmitting the index sequence $\{k_n\}$ aligned with the event stream, which requires $R_b = \lceil \log_2 K_b \rceil$ bits per bias index. Since $b_{\min}$, $b_{\max}$, and $\Delta_b$ are shared between encoder and decoder, each received index uniquely determines the corresponding bias value via $b_n=b_{\min}+k_n\Delta_b$, thereby ensuring synchronization during reconstruction.} 

This section delineates two quantization schemes for AIF-TEM: 1) a dynamic uniform quantizer that adjusts the step size $\Delta_{W_i}$, per segment $W_i$, and 2) a classic uniform quantizer with a fixed step size $\Delta$. These schemes are detailed in Subsections~\ref{Dynamic Quantization} and \ref{Classic Quantization} and illustrated in Fig.~\ref{fig:TEM with Quant}.

\begin{figure}[h!]
\centering
\includegraphics[width=0.48\textwidth]{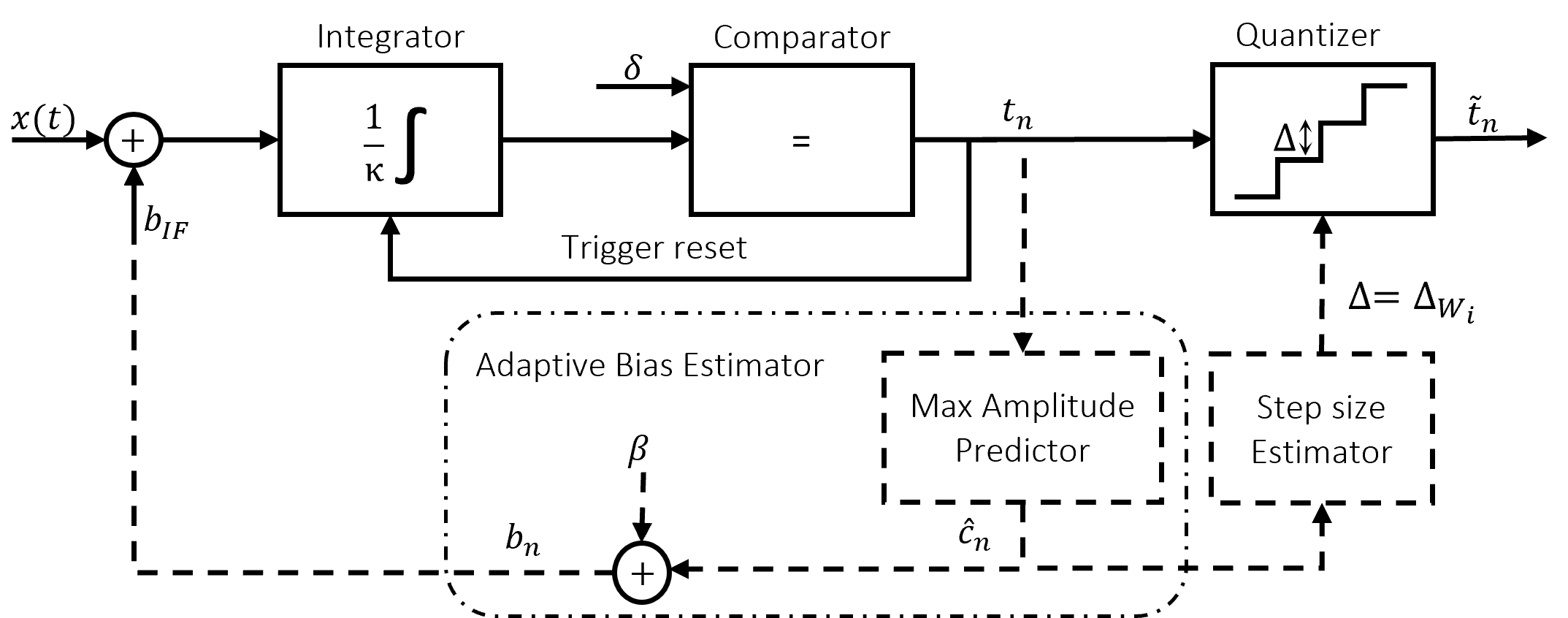}
\caption{\small\label{fig:TEM with Quant} IF-TEM (solid lines) and its adaptive design, AIF-TEM (solid with dashed lines), with classic and dynamic quantization schemes.}
\vspace{-0.3cm}
\end{figure}

In the following subsection, we elaborate on dynamic quantization.
\subsection{Dynamic Quantization}\label{Dynamic Quantization}

The dynamic approach uses a $K$-level uniform quantizer with dynamically adjusted quantization step sizes for each individual segment $i$. The step size varies for every segment $W_i$, based on the maximum bias and maximum amplitude, determined as $b_{\text{max}}^{(i)} = \max\{b_n \mid n \in S_{W_i}\}$ and $c_{\text{max}}^{(i)} = \max\{c_n \mid n \in S_{W_i}\}$, respectively. 
\newblue{The decoder identifies the segment boundaries according to the shared deterministic segmentation rule described in Section~\ref{decoding process}. Since the quantization step size $\Delta_{W_i}$ is computed at the encoder based on segment-specific statistics, its value is signaled once per segment as side information using a fixed predefined representation. This ensures that both encoder and decoder apply identical quantization parameters within each segment.}
The objective of dynamic quantization for AIF-TEM is to reduce the step size, thereby minimizing the overall distortion in the recovered signal. This approach is particularly beneficial for signals with amplitude and frequency variations.
The following lemma introduces the quantization step size for each segment.

\begin{lemma}\label{lemma: dynamic step size}
Consider an AIF-TEM with a correctly operating MAP block, followed by a $K$-level dynamic uniform quantizer. For a $c_{\text{max}}$-bounded signal, the quantization step size $\Delta_{W_i}$ in each segment $i$ is given by
     \begin{equation}\label{dynamic step size}
           \Delta_{W_i} = \kappa \delta \frac{b_{\text{max}}^{(i)} +c_{\text{max}}^{(i)}-\beta}{\beta(b_{\text{max}}^{(i)} +c_{\text{max}}^{(i)})K}.
     \end{equation}

\end{lemma}

\begin{proof}
 Using \eqref{Tn bound aif}, the lower bound for time differences in the $i$-th segment, $T_n, \forall n \in S_{W_i}$, is given by
       \begin{equation}\label{min Tn}
        \Delta t_{\text{min}}[n] = \frac{\kappa \delta}{b_n+c_n} \geq \frac{\kappa \delta}{b_{\text{max}}^{(i)}+c_{\text{max}}^{(i)}} = T_{\text{min}}^{(i)}, 
    \end{equation}
    when the inequality follows from the definition of $b_{\text{max}}^{(i)}$ and $c_{\text{max}}^{(i)}$.
    The upper bound for time differences in the $i$-th segment is given by 
    \begin{align}
         \Delta t_{\text{max}}[n] & = \frac{\kappa \delta}{b_n-c_n} \leq  \frac{\kappa \delta}{\min_n\{b_n -c_n\}} \label{max Tn} \\
         & \leq \frac{\kappa \delta}{\beta} = T_{\text{max}}^{(i)}, \notag 
    \end{align}
     where the first inequality holds directly, and the last inequality follows from Definition~\ref{map_success}, i.e., $b_n \geq c_n $. 
     Thus, in the segment $i$, the dynamic range of $T_n, \forall n \in S_{W_i}$ is given by
     \begin{equation}\label{dynamic range of Tn}
         T_{\text{max}}^{(i)} - T_{\text{min}}^{(i)} = \kappa \delta \frac{b_{\text{max}}^{(i)}+c_{\text{max}}^{(i)}-\beta}{\beta(b_{\text{max}}^{(i)}+c_{\text{max}}^{(i)})},
     \end{equation}
     and the quantization step is $\
     \Delta_{W_i} =  (T_{\text{max}}^{(i)} - T_{\text{min}}^{(i)}) /K$.
\end{proof}
\vspace{-0.1cm}
While the MAP block in practical scenarios may not estimate the amplitude perfectly, in the numerical evaluation we use the estimated maximum amplitude for the $i$-th segment, $\hat{c}_{\text{max}}^{(i)} = \max\{\hat{c}_n\}, n \in S_{W_i}$.

\blue{We note that in the suggested dynamic quantization approach herein, each segment $i$ includes side information containing either the segment’s maximum amplitude and bias or the computed step size $\Delta_{W_i}$. This side information informs the decoder how to interpret the quantized levels for that segment. Although $\Delta_{W_i}$ varies across segments, the total number of quantization levels $K$ (and thus $\log_2(K)$ bits per sample) can remain the same for all segments. Since $\Delta_{W_i}$ depends inversely on $c_{\text{max}}^{(i)}$, as seen in \eqref{dynamic step size}, segments with lower amplitudes result in smaller step sizes, thereby reducing quantization distortion in those regions. As shown in the simulation results in Section~\ref{Evaluation results for Quantization Process}, for three signal segments (and thus four distinct step sizes), the amount of side information is small, yet it yields a significantly smaller reconstruction error. Hardware deployment of this approach would require dedicated protocols to transmit or store side information, which we leave for future work.} 

\off{Since decreasing the quantization step can decrease the quantization distortion, the following corollary shows that the approach of dynamic quantization is beneficial for signals with amplitude variations.

\begin{cor}\label{cor: cmaxi, step size relation}
    Assume the setting in Lemma~\ref{lemma: dynamic step size} with $\beta$ fixed. The quantization step size $\Delta_{W_i}$ decreases as the maximum amplitude for window $W_i$, $c_{\text{max}}^{(i)}$, decreases.
\end{cor}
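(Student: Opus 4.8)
The plan is to reduce the two-variable step-size expression in Lemma~\ref{lemma: dynamic step size} to a function of the single quantity $c_{\text{max}}^{(i)}$ and then establish monotonicity directly. First I would exploit the ideal MAP relation from Proposition~\ref{map_success}, namely $b_n = c_n + \beta$ for every $n$. Since the map $c_n \mapsto c_n + \beta$ is strictly increasing, the index attaining the maximum bias coincides with the index attaining the maximum amplitude, so $b_{\text{max}}^{(i)} = c_{\text{max}}^{(i)} + \beta$. Substituting this into \eqref{dynamic step size} collapses the numerator $b_{\text{max}}^{(i)} + c_{\text{max}}^{(i)} - \beta$ to $2c_{\text{max}}^{(i)}$ and the factor $b_{\text{max}}^{(i)} + c_{\text{max}}^{(i)}$ in the denominator to $2c_{\text{max}}^{(i)} + \beta$, yielding the compact form
\[
\Delta_{W_i} = \frac{\kappa \delta}{\beta K}\,\frac{2c_{\text{max}}^{(i)}}{\beta + 2c_{\text{max}}^{(i)}}.
\]

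With $\beta$ (and the fixed parameters $\kappa,\delta,K$) held constant, the prefactor $\kappa\delta/(\beta K)$ is a positive constant, so the dependence of $\Delta_{W_i}$ on $c_{\text{max}}^{(i)}$ is governed entirely by $g(c) \triangleq 2c/(\beta + 2c)$. I would then rewrite $g(c) = 1 - \beta/(\beta + 2c)$, from which monotonicity is immediate: as $c$ increases the denominator $\beta + 2c$ grows, the subtracted term $\beta/(\beta + 2c)$ shrinks, and hence $g(c)$ increases. Equivalently, one checks $g'(c) = 2\beta/(\beta + 2c)^2 > 0$ for all $c > 0$ since $\beta > 0$. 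Either way $g$, and therefore $\Delta_{W_i}$, is strictly increasing in $c_{\text{max}}^{(i)}$; consequently a decrease in $c_{\text{max}}^{(i)}$ produces a decrease in $\Delta_{W_i}$, which is the claim.

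I do not anticipate a genuine obstacle, as the result is an elementary monotonicity statement once the formula is reduced to a single variable. The only point requiring care is the reduction step itself: one must justify that $b_{\text{max}}^{(i)}$ is not a free parameter but is tied to $c_{\text{max}}^{(i)}$ through the MAP relation $b_n = c_n + \beta$, which is precisely what makes the hypothesis ``$\beta$ fixed'' sufficient to isolate the dependence on $c_{\text{max}}^{(i)}$. Without invoking Proposition~\ref{map_success} one would only obtain separate monotonicity in each of $b_{\text{max}}^{(i)}$ and $c_{\text{max}}^{(i)}$, so emphasizing the coupling $b_{\text{max}}^{(i)} = c_{\text{max}}^{(i)} + \beta$ is the key to a clean argument.
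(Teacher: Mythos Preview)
Your argument is correct and follows essentially the same line as the paper's: both rewrite \eqref{dynamic step size} algebraically and then invoke the MAP coupling between $b_{\text{max}}^{(i)}$ and $c_{\text{max}}^{(i)}$ to conclude monotonicity. The paper uses the equivalent form $\Delta_{W_i}=\frac{\kappa\delta}{K}\bigl(\frac{1}{\beta}-\frac{1}{b_{\text{max}}^{(i)}+c_{\text{max}}^{(i)}}\bigr)$ and simply notes that $b_{\text{max}}^{(i)}$ decreases with $c_{\text{max}}^{(i)}$, whereas you go one step further and substitute the ideal relation $b_{\text{max}}^{(i)}=c_{\text{max}}^{(i)}+\beta$ explicitly to obtain a single-variable expression---a cleaner presentation, but the same idea.
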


\begin{proof}
 Since $\Delta = \frac{\kappa\delta}{K} \left(\frac{1}{\beta}-\frac{1}{b_{\text{max}}^{(i)} + c_{\text{max}}^{(i)}} \right)$, where the equality follows from \eqref{dynamic step size}. As $c_{\text{max}}^{(i)}$ decreases, $b_{\text{max}}^{(i)}$ decreases, thus the step size decreases.
\end{proof}}

\subsection{Classic Quantization}\label{Classic Quantization}
This traditional approach applies a uniform quantizer with a constant step size across all segments.
\blue{Since $c_{\text{max}}^{(i)} \leq c_{\text{max}}$ for all $i$, and the MAP block always assigns $b_n \leq b_{\text{max}}$ for all $n$, it follows that $b_{\text{max}}^{(i)} \leq b_{\text{max}}$ for every segment $i$. Substituting these inequalities into~\eqref{min Tn} and noting that $T_{\text{max}}^{(i)}$ is fixed across all segments, we obtain the quantization step size for the classic $K$-level uniform quantizer as
\begin{equation}\label{fixed step size}
\Delta = \kappa \delta \,\frac{b_{\text{max}} + c_{\text{max}} - \beta}{\beta \bigl(b_{\text{max}} + c_{\text{max}}\bigr)\,K}.
\end{equation}}

\off{The following lemma introduces the step size.
\begin{lemma}\label{lemma: fixed step size}
     Consider an AIF-TEM followed by a classic $K$-level uniform quantizer. For BL signals, the quantization step size $\Delta$ is given by
     \begin{equation}\label{fixed step size}
           \Delta = \kappa \delta \frac{b_{\text{max}}+c_{\text{max}}-\beta}{\beta(b_{\text{max}}+c_{\text{max}})K}.
     \end{equation}
\end{lemma}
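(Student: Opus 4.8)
The plan is to replicate the derivation of Lemma~\ref{lemma: dynamic step size} almost verbatim, the only difference being that the relevant maxima are now taken \emph{globally} over all samples rather than per segment. Since the classic quantizer applies a single fixed step size to the entire signal, I need the dynamic range of the time differences $T_n$ over all $n$, which I then divide by $K$ to obtain $\Delta$.

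First I would set $b_{\text{max}} = \max_n\{b_n\}$ and $c_{\text{max}} = \max_n\{c_n\}$ and invoke the two-sided bound \eqref{Tn bound aif}. The smallest admissible time difference is controlled by the largest denominator $b_n + c_n$, which yields the global lower bound $T_{\text{min}} = \frac{\kappa\delta}{b_{\text{max}} + c_{\text{max}}}$ on every $T_n$.

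Next, for the upper end of the range I would appeal to the successful-MAP condition in Proposition~\ref{map_success}, which guarantees $b_n \geq c_n + \beta$, i.e.\ $b_n - c_n \geq \beta$ for every $n$. Substituting into the right-hand side of \eqref{Tn bound aif} gives $T_n \leq \frac{\kappa\delta}{b_n - c_n} \leq \frac{\kappa\delta}{\beta} \triangleq T_{\text{max}}$. Note that this upper bound coincides with the one obtained in the dynamic case, since $\kappa\delta/\beta$ does not depend on whether we restrict attention to a segment or to the whole signal.

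Finally, I would form the dynamic range and divide by $K$:
\[
\Delta = \frac{T_{\text{max}} - T_{\text{min}}}{K} = \frac{1}{K}\left(\frac{\kappa\delta}{\beta} - \frac{\kappa\delta}{b_{\text{max}} + c_{\text{max}}}\right) = \kappa\delta\,\frac{b_{\text{max}} + c_{\text{max}} - \beta}{\beta(b_{\text{max}} + c_{\text{max}})K},
\]
which is exactly the claimed expression. I expect no genuine obstacle here: the derivation reduces to a one-line algebraic simplification once the two bounds are in place. The only conceptual point worth flagging is that using the global maxima rather than the per-segment maxima $b_{\text{max}}^{(i)}, c_{\text{max}}^{(i)}$ forces $\Delta \geq \Delta_{W_i}$ for every $i$, which precisely quantifies why the dynamic scheme of Lemma~\ref{lemma: dynamic step size} never enlarges—and, for signals with amplitude variation, strictly reduces—the quantization step relative to this classic choice.
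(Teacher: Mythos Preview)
Your proposal is correct and follows essentially the same approach as the paper: the paper's proof also says to reuse the argument of Lemma~\ref{lemma: dynamic step size}, replace the per-segment maxima by the global $b_{\text{max}}$ and $c_{\text{max}}$ in the lower bound \eqref{min Tn}, and note that the upper bound $T_{\text{max}}^{(i)}=\kappa\delta/\beta$ is segment-independent. Your closing remark that $\Delta\geq\Delta_{W_i}$ is a nice addition that the paper leaves implicit.
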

\begin{proof}
    The proof follows directly using similar arguments as in Lemma~\ref{lemma: dynamic step size}. Since $c_{\text{max}}^{(i)} \leq c_{\text{max}}, \forall i$ and since the MAP block always assign $b_n \leq b_{\text{max}}, \forall n$, thus $b_{\text{max}}^{(i)} \leq b_{\text{max}} \forall i$. Substituting this to \eqref{min Tn} and since for all segments $T_{\text{max}}^{(i)}$ are fixed, the quantization step-size in \eqref{fixed step size} follows directly. 
\end{proof}}

\blue{As an example, if the maximum bias is set to $b_{\text{max}} = \beta + c_{\text{max}}$ with a fixed $\beta$, then the quantization step $\Delta$ increases with the maximum signal amplitude $c_{\text{max}}$. In contrast, if the bias is scaled as $b_{\text{max}} = \gamma c_{\text{max}}$ for some $\gamma > 1$, then $\Delta$ decreases as $c_{\text{max}}$ increases, thereby improving quantization resolution in high-amplitude regions. This behavior aligns with the result in \cite[Theorem 1]{naaman2021time}, where the authors showed that for IF-TEM, increasing the signal’s frequency or finite energy—both of which raise $c_{\text{max}}$—leads to a smaller quantization step size.}

\off{\begin{cor}
    Assume the setting in Lemma~\ref{lemma: fixed step size} and following Corollary~\ref{cor: cmaxi, step size relation}. For $b_{\text{max}} = \beta + c_{\text{max}}$ and fixed $\beta$, the quantization step $\Delta$ increases as the maximum amplitude $c_{\text{max}}$ increases.
\end{cor}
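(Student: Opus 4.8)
The plan is to start directly from the closed-form step size established in Lemma~\ref{lemma: fixed step size}, namely
\[
\Delta = \kappa \delta \frac{b_{\text{max}}+c_{\text{max}}-\beta}{\beta(b_{\text{max}}+c_{\text{max}})K},
\]
and substitute the hypothesis $b_{\text{max}} = \beta + c_{\text{max}}$. This collapses the numerator, since $b_{\text{max}}+c_{\text{max}}-\beta = 2c_{\text{max}}$, and the bracketed factor in the denominator, since $b_{\text{max}}+c_{\text{max}} = \beta + 2c_{\text{max}}$, yielding the simplified expression $\Delta = \frac{\kappa\delta}{\beta K}\,\frac{2c_{\text{max}}}{\beta + 2c_{\text{max}}}$. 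Notably, this is exactly the reduced form already recorded (in the commented variant) for the per-segment step size, so the substitution is the natural first move and mirrors the structure used in Corollary~\ref{cor: cmaxi, step size relation}.

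Next I would isolate the only $c_{\text{max}}$-dependent piece. Treating $\kappa$, $\delta$, $\beta$, and $K$ as fixed positive constants, the monotonicity of $\Delta$ in $c_{\text{max}}$ is governed entirely by the function $f(c_{\text{max}}) = \frac{2c_{\text{max}}}{\beta + 2c_{\text{max}}}$. The cleanest way to establish that $f$ is increasing is to rewrite it as $f(c_{\text{max}}) = 1 - \frac{\beta}{\beta + 2c_{\text{max}}}$: as $c_{\text{max}}$ grows, the denominator $\beta + 2c_{\text{max}}$ grows, so the subtracted term $\frac{\beta}{\beta + 2c_{\text{max}}}$ strictly decreases (for $\beta>0$), and hence $f$ strictly increases. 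Equivalently, one may differentiate to get $f'(c_{\text{max}}) = \frac{2\beta}{(\beta + 2c_{\text{max}})^2} > 0$, which I would mention as the alternative one-line verification.

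Combining these two steps completes the argument: multiplying the increasing function $f$ by the positive constant $\frac{\kappa\delta}{\beta K}$ preserves monotonicity, so $\Delta$ strictly increases with $c_{\text{max}}$ under the stated constraint $b_{\text{max}}=\beta+c_{\text{max}}$. There is no genuine obstacle here — the statement is a routine monotonicity consequence of the formula in Lemma~\ref{lemma: fixed step size} — so the only point requiring care is bookkeeping: ensuring that the substitution $b_{\text{max}}=\beta+c_{\text{max}}$ is applied consistently in both numerator and denominator, and explicitly flagging that all remaining parameters are held fixed and positive so that the sign of the derivative (or the direction of the monotone rewriting) is unambiguous. This corollary also serves as the natural counterpoint to Corollary~\ref{cor: cmaxi, step size relation}, emphasizing that the classic (fixed) quantizer, unlike the dynamic one, is forced to enlarge its step as the global amplitude bound grows.
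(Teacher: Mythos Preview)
Your proposal is correct and follows essentially the same route as the paper: substitute $b_{\text{max}}=\beta+c_{\text{max}}$ into the step-size formula from Lemma~\ref{lemma: fixed step size} to obtain $\Delta = \frac{\kappa\delta}{\beta K}\frac{2c_{\text{max}}}{\beta+2c_{\text{max}}}$, and then observe this is increasing in $c_{\text{max}}$. The paper's printed proof is terser (it simply invokes the lemma and asserts the monotonicity), so your added justification via the rewriting $1-\frac{\beta}{\beta+2c_{\text{max}}}$ or the derivative is a harmless elaboration of the same argument.
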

\begin{proof}
   This follows from Lemma~\ref{lemma: dynamic step size}. As $c_{\text{max}}$ increases, this leads to an increase in the quantization step $\Delta$.
\end{proof}
The next corollary is a direct result from \cite[Theorem 1]{naaman2021time}, where the authors demonstrated that for IF-TEM, as described by \eqref{step size for IF}, an increase in the frequency or the finite energy of the input signal leads to a reduction in the quantization step size.
\begin{cor}
   Assume the setting in Lemma~\ref{lemma: fixed step size} and $b_{\text{max}} = \beta +  c_{\text{max}} = \gamma c_{\text{max}}$,  for any $\gamma >1$. The quantization step $\Delta$ decreases as the maximum amplitude $c_{\text{max}}$ increases.
\end{cor}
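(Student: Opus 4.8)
The plan is to exploit the fact that the hypothesis $b_{\text{max}} = \beta + c_{\text{max}} = \gamma c_{\text{max}}$ couples $\beta$ to $c_{\text{max}}$, in contrast with the immediately preceding corollary where $\beta$ is held fixed. The two equalities force $\beta = (\gamma-1)c_{\text{max}}$, so here $\beta$ grows linearly with $c_{\text{max}}$ rather than staying constant. I would therefore start from the classic step-size formula \eqref{fixed step size} of Lemma~\ref{lemma: fixed step size} and substitute both $b_{\text{max}} = \gamma c_{\text{max}}$ and $\beta = (\gamma-1)c_{\text{max}}$ directly.

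Carrying out the substitution, the numerator $b_{\text{max}} + c_{\text{max}} - \beta$ collapses to $2c_{\text{max}}$, while the denominator factor $\beta(b_{\text{max}} + c_{\text{max}})$ becomes $(\gamma-1)(\gamma+1)c_{\text{max}}^2$. This yields
\[
\Delta = \frac{\kappa\delta}{(\gamma-1)(\gamma+1)}\,\frac{2}{c_{\text{max}}K},
\]
which is precisely the IF-TEM step size recorded in \eqref{step size for IF}. This coincidence is to be expected, since under $b_{\text{max}} = \gamma c_{\text{max}}$ the adaptive bias choice degenerates to the non-adaptive IF-TEM setting. For any fixed $\gamma > 1$ the prefactor $2\kappa\delta/[(\gamma-1)(\gamma+1)K]$ is a positive constant, so $\Delta$ is inversely proportional to $c_{\text{max}}$ and hence strictly decreasing in $c_{\text{max}}$, which establishes the claim.

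The main point to get right is not the computation itself — a one-line simplification — but reconciling it with the preceding corollary, where $\Delta$ \emph{increases} with $c_{\text{max}}$. The opposite monotonicity is entirely an artifact of how $\beta$ is treated: fixing $\beta$ versus tying it to $c_{\text{max}}$ through $\gamma$. I would emphasize this dependence explicitly and verify that $\gamma > 1$ keeps $(\gamma-1)(\gamma+1)$ strictly positive, so that $\Delta$ stays well-defined and positive. Finally, to close the loop with \cite{naaman2021time}, I would observe that by \eqref{c_max and frequncy relation} an increase in the signal's frequency $\Omega$ or energy $E$ raises $c_{\text{max}}$, which by the displayed formula lowers $\Delta$, recovering the frequency/energy monotonicity of \cite[Theorem~1]{naaman2021time}.
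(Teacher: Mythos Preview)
Your proof is correct and follows exactly the paper's approach: both derive $\beta=(\gamma-1)c_{\text{max}}$ from the hypothesis, substitute into \eqref{fixed step size}, obtain $\Delta=\frac{\kappa\delta}{(\gamma+1)(\gamma-1)}\frac{2}{c_{\text{max}}K}$, and conclude monotonicity from the $1/c_{\text{max}}$ factor. Your additional remarks (identification with \eqref{step size for IF}, contrast with the preceding corollary, and the link to frequency/energy via \eqref{c_max and frequncy relation}) go beyond what the paper writes but are all correct and helpful context.
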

\begin{proof}
   Since $\beta = \gamma c_{\text{max}} - c_{\text{max}} =  (\gamma-1)c_{\text{max}}$, substituting this into \eqref{fixed step size}, we get $ \Delta = \frac{\kappa \delta}{(\gamma+1)(\gamma-1)}\frac{2}{c_{\text{max}}K}$, which decreases as $c_{\text{max}}$ increases.
\end{proof}
}
\subsection{Upper Bound for Quantization Error}

We delve into the analysis of the MSE upper bound between the signal reconstructed from quantized time-encoded samples and the original input signal, $x(t)$. \newblue{The total reconstruction distortion is decomposed into two components: the reconstruction distortion $\textcolor{black}{D_R}$, corresponding to reconstruction from the non-quantized sampler output, and the quantization distortion $D_Q$, corresponding to the additional error introduced by quantization of the time intervals.}
\off{This analysis reveals that the total distortion is a cumulative effect of both the sampling distortion $\textcolor{black}{D_R}$ (the difference between the signal recovered directly from the sampler output and the original input) and the quantization distortion $D_Q$ (the difference between the signal recovered from quantized samples and the signal recovered from the non-noisy sampler output).}
\off{
\begin{theorem}\label{theo_total_dis}
    Consider an AIF-TEM, with the setting as in Theorem~\ref{Distortion AIF-TEM}, followed by a $K$-level dynamic uniform quantizer. For BL signals, the total distortion across all segments, as a function of the minimum sampling rate $f_{s_{\text{min}}}$, is 
    \begin{equation*}
        D_{T}(f_{s_{\text{min}}}) \leq \textcolor{black}{D_R}(f_{s_{\text{min}}}) + D_Q(f_{s_{\text{min}}}),
    \end{equation*}
    where $\textcolor{black}{D_R}(f_{s_{\text{min}}})$ and $D_Q(f_{s_{\text{min}}})$ are as given in \eqref{sampling distortion bound} and \eqref{Quantization rate distortion }, respectively.
\end{theorem}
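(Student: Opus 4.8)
The plan is to decompose the total recovery error into a sampling component and a quantization component, and then invoke the bounds already established for each. On each segment $W_i$ introduce three signals: the original input $x$, the signal $\hat{x}_S$ recovered from the exact (unquantized) AIF-TEM output via the recursion in \eqref{x_l}, and the signal $\hat{x}_Q$ recovered from the quantized time differences $\tilde{T}_n$. The total distortion $D_T$ measures $x$ against $\hat{x}_Q$, whereas Theorem~\ref{Distortion AIF-TEM} controls $x$ against $\hat{x}_S$ and the quantization analysis of Lemma~\ref{lemma: dynamic step size} controls $\hat{x}_S$ against $\hat{x}_Q$. The natural identity to exploit is the additive splitting
\[
x - \hat{x}_Q = (x - \hat{x}_S) + (\hat{x}_S - \hat{x}_Q),
\]
where the first summand is the deterministic sampling error and the second is the error induced purely by replacing $T_n$ with $\tilde{T}_n$ in the reconstruction.

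First I would expand the squared per-segment norm of the right-hand side into the sampling term $\|x-\hat{x}_S\|_{W_i}^2$, the quantization term $\|\hat{x}_S-\hat{x}_Q\|_{W_i}^2$, and a cross term $2\langle x-\hat{x}_S,\ \hat{x}_S-\hat{x}_Q\rangle_{W_i}$. Under the standard additive quantization-noise model already used to derive \eqref{quantization mse bound IF-TEM} — where the per-sample quantization error is zero-mean and uncorrelated with the signal reconstructed from the clean samples — taking the expectation over the quantization noise annihilates the cross term, since $x-\hat{x}_S$ is fixed while $\mathbb{E}[\hat{x}_S-\hat{x}_Q]=0$. This yields the clean split $\mathbb{E}\|x-\hat{x}_Q\|_{W_i}^2 = \|x-\hat{x}_S\|_{W_i}^2 + \mathbb{E}\|\hat{x}_S-\hat{x}_Q\|_{W_i}^2$, which after normalizing and averaging over the $\mathcal{S}$ segments delivers $D_T \le D_S + D_Q$.

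Then I would identify each surviving term with its established bound. The averaged sampling term is exactly the quantity bounded in Theorem~\ref{Distortion AIF-TEM}, hence at most $D_S(f_{s_{\text{min}}})$ as in \eqref{sampling distortion bound}. The averaged quantization term is governed by the dynamic step size $\Delta_{W_i}$ from Lemma~\ref{lemma: dynamic step size} propagated through the bounded reconstruction operator $\mathcal{A}$, yielding the quantization distortion $D_Q(f_{s_{\text{min}}})$. Since both $D_S$ and $D_Q$ are themselves upper bounds, the inequality is preserved, and rewriting everything through $r_{w_i}=\Omega/(\pi f_{s_{\text{min}}})$, exactly as in the proof of Theorem~\ref{Distortion AIF-TEM}, expresses the result as a function of $f_{s_{\text{min}}}$.

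The main obstacle is justifying the vanishing of the cross term rigorously. The triangle inequality alone only gives $\sqrt{D_T}\le\sqrt{D_S}+\sqrt{D_Q}$, which squares to an extra $2\sqrt{D_S D_Q}$ term rather than the clean sum; obtaining precisely $D_S+D_Q$ relies on the zero-mean, signal-independent quantization-noise assumption so that the quantization-induced reconstruction error is orthogonal in expectation to the deterministic sampling error. Verifying that this decoupling survives the linear-but-data-dependent operator $\mathcal{A}$ of \eqref{Operator A} — whose coefficients $P_n$ depend on both $t_n$ and $b_n$ — is the delicate point, and I would handle it by noting that $\mathcal{A}$ acts linearly on the error sequence, so a zero-mean perturbation of the $T_n$ maps to a zero-mean reconstruction perturbation and the orthogonality is preserved.
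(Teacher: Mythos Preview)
Your decomposition $x-\hat x_Q=(x-\hat x_S)+(\hat x_S-\hat x_Q)$ is exactly the one the paper uses: it \emph{defines} $D_S$ as the error between $x$ and the signal reconstructed from the clean firing times, and $D_Q$ as the error between the latter and the signal reconstructed from the quantized times. The paper, however, does not actually argue the additive inequality $D_T\le D_S+D_Q$ at all; it simply states Theorem~\ref{theo_total_dis} and then says that, since $D_S$ is already handled by Theorem~\ref{Distortion AIF-TEM}, what remains is to bound $D_Q$. That bound is the real content and is developed \emph{after} the theorem, in Lemma~\ref{lemma: Quantization Distortion for segment i} and Corollary~\ref{Cor: Quant rate Distortion }. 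So your plan is aligned with the paper, and in fact more scrupulous about the cross term than the paper itself.

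Two points to tighten. First, you attribute the control of $\|\hat x_S-\hat x_Q\|$ to Lemma~\ref{lemma: dynamic step size}; that lemma only computes the step size $\Delta_{W_i}$ and says nothing about MSE. The bound you need is precisely Lemma~\ref{lemma: Quantization Distortion for segment i} (per-segment) and Corollary~\ref{Cor: Quant rate Distortion } (averaged), whose proof propagates the i.i.d.\ error model $d_n=\tilde T_n-T_n$ through the operator $\sum_{k\le L_i}(I-\tilde{\mathcal A})^k$ and uses $\|I-\tilde{\mathcal A}\|_{W_i}\le r_{w_i}$. Second, your justification for killing the cross term---``$\mathcal A$ acts linearly on the error sequence''---is not quite right as stated: the reconstruction from quantized data uses $\tilde{\mathcal A}$ with kernels $g(t-\tilde\theta_n)$, and $\tilde\theta_n$ itself depends on the accumulated $d_k$'s, so the map $d\mapsto \hat x_S-\hat x_Q$ is not linear. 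The paper sidesteps this by the approximation $x_{d_i}\approx x_{s_i}$ (valid for small $\Delta_i$, as in Observation~1 of the Appendix), which effectively linearizes the dependence and makes the per-sample error $\epsilon_n\simeq -(\kappa\delta/T_n)d_n$ zero-mean; under that approximation your orthogonality argument goes through. If you want a version that does not rely on the zero-mean model, the triangle inequality gives only $\sqrt{D_T}\le\sqrt{D_S}+\sqrt{D_Q}$, as you note.
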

To obtain the main result given in Theorem~\ref{theo_total_dis},} \off{ (the 
upper bound for $\textcolor{black}{D_R}(f_{s_{\text{min}}})$ is given in Theorem~\ref{Distortion AIF-TEM})}

In the following, we establish an upper bound for the quantization distortion $D_Q$. This analysis follows methodologies similar to those used in \cite{lazar2004perfect} and applies to both dynamic and classic (fixed step size) quantization schemes. We denote the quantization step size as $\Delta_i$ for each segment $i$. In the classical approach, $\Delta_i$ remains constant across all segments, i.e., $\Delta_i = \Delta$. In the dynamic approach, the step size $\Delta_i$ varies for each segment, i.e., $\Delta_i = \Delta_{W_i}$.

For each segment $W_i$, we consider the signal
\begin{equation}\label{practical ideal quantizted recovered signal}
     x_{d_i}(t) = \sum_{k =0}^{L_i} (I-\mathcal{\tilde{A}})^{k}\tilde{\mathcal{A}}x,
\end{equation}
where $\tilde{\mathcal{A}}$ is given by
\begin{equation*}
      \tilde{\mathcal{A}}x =\sum_{n\in S_{W_i}} \int_{\tilde{t}_{n-1}}^{\tilde{t}_{n}} x(u)du\  g(t-\tilde{\theta}_n),
\end{equation*}
and $\tilde{\theta}_n = (\tilde{t}_{n-1}+\tilde{t}_{n})/2$. 
The reconstructed signal from the quantizer output is given by
\begin{equation}\label{practical recovered sig from quant out}
   x_{q_i}(t) = \sum_{k=0}^{L_i} (I-\tilde{\mathcal{A}})^{k}\sum_{n \in S_{W_i}} 
   (\kappa \delta -b_n\tilde{T}_n) g(t-\tilde{\theta}_n).
\end{equation}

While the equality described in \eqref{IF relation times ampl} holds at the sampler output, the quantization process disrupts it. Specifically, $\int_{\tilde{t}_{n-1}}^{\tilde{t}_{n}} x(u),du \neq \kappa \delta - b_n\tilde{T}_n$, which introduces quantization distortion. We denote the signal reconstructed from the sampler output in segment $i$ as $x_{s_i}(t)$, which corresponds to $x_{L_i}$ as defined in \eqref{x_l}, where $l = L_i$, and $L_i$ is the number of samples in the $i$-th segment.

For a sufficiently large segment $W_i$, and assuming $\mathcal{S} = 1$, perfect reconstruction is achievable from the non-quantized AIF-TEM output, as suggested by Corollary~\ref{cor: perfect recovery}. \off{In such cases, $L_i$ is large enough and $x_{s_i}(t) = x(t)$. This observation is crucial for analyzing quantization distortion: when $L_i$ is large, we have $x_{d_i}(t) = \tilde{\mathcal{A}}^{-1} \tilde{\mathcal{A}}x = x$, and thus $x_{d_i}(t) = x_{s_i}(t)$.} In particular, when $L_i$ is sufficiently large, the reconstructed signal from the sampler output satisfies $x_{s_i}(t)\to x(t)$. In this regime, the signal $x_{d_i}(t)$, defined in~\eqref{practical ideal quantizted recovered signal}, provides a good approximation of $x_{s_i}(t)$. Consequently, the quantization distortion can be analyzed by studying the difference between the signals $x_{q_i}(t)$ and  $x_{d_i}(t)$.
However, for finite $W_i$ lengths, the signal reconstructed from the sampler output, $x_{s_i}(t)$, does not exactly match the original input $x(t)$ within the segment, giving rise to \newblue{reconstruction}\off{sampling} distortion, which was analyzed in Section~\ref{Analytical Results}. For these cases, we assume $x_{d_i}(t) \approx x_{s_i}(t)$, with the \newblue{reconstruction}\off{sampling} distortion relative to the input $x(t)$ elaborated in Theorem~\ref{Distortion AIF-TEM}.

The following lemma provides an upper bound on the quantization MSE for segment $i$, defined as the MSE between $x_{q_i}(t)$ and $x_{s_i}(t)$. \newblue{The analysis of the lemma adopts the classical high-resolution quantization model, in which the quantization error is modeled as an i.i.d. uniform sequence independent of the signal. While adjacent time samples generated from bandlimited signals exhibit correlation, this approximation is standard in high-resolution quantization theory and is widely used in time-encoding and sampling analyses (e.g., ~\cite{naaman2021time,lazar2004perfect}).}

\begin{lemma}\label{lemma: Quantization Distortion for segment i}
Assume the quantization error for some $n \in S_{W_i}$ by $d_n = \tilde{T}_n - T_n$ is a sequence of i.i.d. random variables in $[-\Delta_i/2, \Delta_i/2]$. Then, the MSE of the reconstructed signal from the AIF-TEM quantized times output in the $i$-th segment is bounded by
{\footnotesize
\begin{equation}\label{MSE Quant bound for segment i}
    \mathbb{E}[\mathcal{E}_q^2]_i \leq \frac{(1-r_{w_i}^{L_i+1})^2}{(1-r_{w_i})^2}\frac{(\Delta_i\kappa\delta)^2}{12}\frac{\Omega}{\pi}\overline{\left(\frac{1}{T_n}\right)_i^3}.
\end{equation}
}
\off{
\newblue{where $u(t)\triangleq\sum_{n\in S_{W_i}}\epsilon_n g(t-\tilde{\theta}_n)$ and $\mathcal{J}_i^{\max}(u)\triangleq\max_{0\le k\le L_i} \mathcal{J}_{\varphi_i}((I-\tilde{\mathcal A})^k u)$, with $\mathcal{J}_{\varphi_i}(\cdot)$ as in Lemma~\ref{lemma:Bernstein}.
}}
\end{lemma} 
\begin{proof}
 The MSE for segment $i$ is given by
 \begin{equation*}
     \mathbb{E}[\mathcal{E}_q^2]_i  =  \frac{1}{|W_i|} \mathbb{E}\left[\|  x_{q_i}(t) - x_{s_i}(t) \|_{W_i}^2\right].
 \end{equation*}
Substituting \eqref{practical ideal quantizted recovered signal} and \eqref{practical recovered sig from quant out} leads to
\begin{equation*}
     \mathbb{E}[\mathcal{E}_q^2]_i  = \frac{1}{|W_i|} \mathbb{E}\left[\| \sum_{k =0}^{L_i} (I-\tilde{\mathcal{A}})^{k}\sum_{n\in S_{W_i}} \epsilon_n g(t-\tilde{\theta}_n) \|_{W_i}^2\right],
\end{equation*}
where,
\begin{equation*}
     \epsilon_n  =   (\kappa \delta -b_n\tilde{T}_n) - \int_{\tilde{t}_{n-1}}^{\tilde{t}_{n}} x(u)du.
\end{equation*}  
Now, we upper bound the MSE. \newblue{Using $\|v\|_{W_i}\le \|v\|_2$,} we obtain 
\off{
\newblue{Let $u$ be $u \triangleq\sum_{n\in S_{W_i}} \epsilon_n g(t-\tilde{\theta}_n)$}. 
{\small
\begin{align*}
 \mathbb{E}[\mathcal{E}_q^2]_i
&\le
\Big\|\sum_{k =0}^{L_i} (I-\tilde{\mathcal{A}})^{k}\Big\|_{W_i}^2
\frac{1}{|W_i|}
\mathbb{E}\!\left[\|u\|_{W_i}^2\right] \\
&\overset{(a)}{\leq}
\frac{(1-r_{w_i}^{L_i+1})^2}{(1-r_{w_i})^2}\,
\frac{1}{|W_i|}
\mathbb{E}\!\left[\|u\|_{W_i}^2+\newblue{\frac{1}{\Omega^2}\mathcal{J}_i^{\max}(u)}\right] \\
&=
\frac{(1-r_{w_i}^{L_i+1})^2}{(1-r_{w_i})^2}
\left(
\underbrace{\frac{1}{|W_i|}\mathbb{E}\!\left[\|u\|_{W_i}^2\right]}_{\triangleq\,A_i}
+
\newblue{\underbrace{\frac{1}{|W_i|\Omega^2}\mathbb{E}\!\left[\mathcal{J}_i^{\max}(u)\right]}_{\triangleq\,B_i}}
\right).
\end{align*}
}
\newblue{Define the worst-case leakage over the $L_i+1$ iterates by
\[
\mathcal{J}_i^{\max}(u)\triangleq
\max_{0\le k\le L_i}
\mathcal{J}_{\varphi_i}\!\left((I-\tilde{\mathcal A})^k u\right).
\]
step (a) follows from applying Lemma~\ref{key_lemma} to the iterates
$(I-\tilde{\mathcal A})^k u$ and upper bounding the resulting leakage terms by
$\mathcal{J}_i^{\max}(u)$, while the operator-sum norm is bounded via the geometric series
$\sum_{k=0}^{L_i} r_{w_i}^k$ since $r_{w_i}<1$.
}
{\small
\begin{align*}
A_i
&=
\frac{1}{|W_i|}
\mathbb{E}\!\left[\int_{t_a}^{t_b}\Big(\sum_{n\in S_{W_i}}\epsilon_n g(t-\tilde{\theta}_n)\Big)^2dt\right] \\
&=
\frac{1}{|W_i|}
\int_{t_a}^{t_b}\sum_{n,m\in S_{W_i}}g(t-\tilde{\theta}_n)g(t-\tilde{\theta}_m)\mathbb{E}[\epsilon_n\epsilon_m]\,dt \\
&\overset{(c)}{=}
\frac{1}{|W_i|}
\sum_{n\in S_{W_i}}\left(\frac{\kappa\delta}{T_n}\right)^2 \frac{\Delta_i^2}{12}
\int_{t_a}^{t_b} g^2(t-\tilde{\theta}_n)\,dt \\
&\overset{(d)}{\leq}
\frac{\Omega(\Delta_i\kappa\delta)^2}{12\pi}
\frac{1}{L_i{\overline{T}_n}_i}
\sum_{n\in S_{W_i}}\left(\frac{1}{T_n}\right)^2 \\
&\overset{(e)}{\leq}
\frac{\Omega(\Delta_i\kappa\delta)^2}{12\pi}
\overline{\left(\frac{1}{T_n}\right)_i}^3 .
\end{align*}
}
{\footnotesize
\begin{equation*}
\mathbb{E}[\mathcal{E}_q^2]_i
\le
\frac{(1-r_{w_i}^{L_i+1})^2}{(1-r_{w_i})^2}
\left(
\frac{\Omega(\Delta_i\kappa\delta)^2}{12\pi}
\overline{\left(\frac{1}{T_n}\right)_i}^3
+
\newblue{\frac{1}{|W_i|\Omega^2}\mathbb{E}\!\left[\mathcal{J}_{\varphi_i}(u)\right]}
\right).
\end{equation*}
}
}
{\small 
\begin{align*}
 & \newblue{\mathbb{E}[\mathcal{E}_q^2]_i \leq  \frac{1}{|W_i|} \mathbb{E}\left[\| \sum_{k =0}^{L_i} (I-\tilde{\mathcal{A}})^{k}\sum_{n\in S_{W_i}} \epsilon_n g(t-\tilde{\theta}_n) \|_{2}^2\right]}
 \\
 &\overset{(a)}{\leq}\|\sum_{k =0}^{L_i} (I-\tilde{\mathcal{A}})^{k}\|_{\newblue{2}}^2   \frac{1}{|W_i|}  \mathbb{E}\left[\|\sum_{n\in S_{W_i}} \epsilon_n g(t-\tilde{\theta}_n)\|_{\newblue{2}}^2\right] \\
& \overset{(b)}{\leq} \frac{(1-r_{w_i}^{L_i+1})^2}{(1-r_{w_i})^2} \frac{1}{|W_i|} \mathbb{E}\left[\|\sum_{n\in S_{W_i}} \epsilon_n g(t-\tilde{\theta}_n)\|_{2}^2\right] \\
& \overset{(c)}{=} \frac{(1-r_{w_i}^{L_i+1})^2}{(1-r_{w_i})^2} \frac{1}{|W_i|}\int_{-\newblue{\infty}}^{\newblue{\infty}}\!\sum_{\scriptstyle n,m\in  S_{W_i}}  g(t-\tilde{\theta}_n)g(t-\tilde{\theta}_m)\mathbb{E}[\epsilon_n\epsilon_m]dt \\
& \overset{(d)}{=} \frac{(1-r_{w_i}^{L_i+1})^2}{(1-r_{w_i})^2} \frac{1}{|W_i|}\sum_{n\in S_{W_i}}\left(\frac{\kappa\delta}{T_n}\right)^2 \frac{\Delta_i^2}{12} \int_{-\infty}^{\infty} g^2(t-\tilde{\theta}_n)dt \\
\off{& \overset{(e)}{\leq} \frac{(1-r_{w_i}^{L_i+1})^2}{(1-r_{w_i})^2}\frac{\Omega(\Delta_i\kappa\delta)^2}{12\pi} \frac{1}{L_i{\overline{T}}_i} \sum_{n\in S_{W_i}} \left(\frac{1}{T_n}\right)^2 \\}
& \overset{(e)}{\leq} \frac{(1-r_{w_i}^{L_i+1})^2}{(1-r_{w_i})^2}\frac{\Omega(\Delta_i\kappa\delta)^2}{12\pi} \frac{1}{L_i{\overline{T_n}}} \sum_{n\in S_{W_i}} \left(\frac{1}{T_n}\right)^2 \\
& \overset{(f)}{\leq}  \frac{(1-r_{w_i}^{L_i+1})^2}{(1-r_{w_i})^2}\frac{\Omega(\Delta_i\kappa\delta)^2}{12\pi}\overline{\left(\frac{1}{T_n}\right)_i^3},
\end{align*}
}
\newblue{where (a) follows from the norm properties, i.e., $\|Tv\|_2 \le \|T\|_2\|v\|_2$. (b)  follows from Lemma~\ref{key_lemmav2} applied in $L^2(\mathbb R)$, which yields
$\|(I-\tilde{\mathcal A})v\|_2 \le r_{w_i}\|v\|_2$ for signals generated from the sample set $S_{W_i}$. Therefore, $\|(I-\tilde{\mathcal A})^k\|_2 \le r_{w_i}^k$, and the operator-sum norm is bounded by the geometric series $\sum_{k=0}^{L_i} r_{w_i}^k = \frac{1-r_{w_i}^{L_i+1}}{1-r_{w_i}}$.}\off{is given from Lemma~\ref{key_lemma}, and since $r_{w_i} < 1$, the norm of a finite sum of the operator is bounded using the geometric series $\sum_{k =0}^{L_i} r_{w_i}^k$.} (c) follows directly by applying the norm. (d) follows since $\mathbb{E}[\epsilon_n\epsilon_m]  = (\frac{\kappa\delta}{T_n})^2 \frac{\Delta_i^2}{12}\delta_{n,m}$ (the proof is given in Appendix~\ref{math_obs1})), and by interchanging the sum and the integral. (e) follows from Parseval's theorem and the bandlimited spectrum of $g(t)$, $\int_{-\infty}^{\infty} g^2(t-\eta)dt = \frac{1}{2\pi} \int_{-\infty}^{\infty} 1_{[-\Omega,\Omega]}dw =\frac{\Omega}{\pi}$, and substituting $|W_i| = L_i\overline{T_n}$ for $n \in S_{W_i}$. 
(f) By the inequality 
\[
   \left({1}/{\overline{T_n}}\right)\;\overline{\Bigl({1}/{T_n}\Bigr)^2}
   \;\;\le\;\;
   \overline{\Bigl({1}/{T_n}\Bigr)^3}
   \quad
   \text{(Proof in Appendix~\ref{math_obs3})},
\]
we immediately obtain the desired bound.
\end{proof}
Next, we establish an upper bound for the quantization MSE by utilizing the temporal biases $b_n$ and the temporal amplitudes $c_n$.
\begin{cor}
  Assume the setting in Theorem~\ref{lemma: Quantization Distortion for segment i}. Then the MSE of the reconstructed signal from AIF-TEM quantized times output in the $i$-th segment is bounded by
    \begin{equation}\label{Quant bound 2}
           \mathbb{E}[\mathcal{E}_q^2]_i \leq \frac{(1-r_{w_i}^{L_i+1})^2}{(1-r_{w_i})^2}\frac{\Delta_i^2}{12}\frac{\Omega}{\pi}\frac{(\overline{b_{n
        }+c_n)_i^3}}{\kappa \delta}. 
    \end{equation}
\end{cor}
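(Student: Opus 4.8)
The plan is to obtain \eqref{Quant bound 2} directly from the bound already established in Lemma~\ref{lemma: Quantization Distortion for segment i}, namely \eqref{MSE Quant bound for segment i}, by replacing the factor $(\kappa\delta)^2/\mathbb{E}^3[T_n]_{W_i}$ with an upper bound expressed through the segment averages $\mathbb{E}[b_n]_{W_i}$ and $\mathbb{E}[c_n]_{W_i}$. Since every factor preceding this term in \eqref{MSE Quant bound for segment i} is nonnegative, it suffices to upper bound $1/\mathbb{E}^3[T_n]_{W_i}$, i.e.\ to lower bound $\mathbb{E}[T_n]_{W_i}$.

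First I would reuse the convexity argument from the proof of Theorem~\ref{OS upper bound AIF-TEM theorem}, but with all expectations taken over the samples $n \in S_{W_i}$ of the segment $W_i$. Using the lower bound $T_n \geq \Delta t_{\text{min}}[n] = \kappa\delta/(b_n+c_n)$ from \eqref{Tn bound aif} together with Jensen's inequality applied to the convex map $u \mapsto 1/u$, I obtain
\[
\mathbb{E}[T_n]_{W_i} \geq \mathbb{E}\left[\frac{\kappa\delta}{b_n+c_n}\right]_{W_i} \geq \frac{\kappa\delta}{\mathbb{E}[b_n]_{W_i}+\mathbb{E}[c_n]_{W_i}},
\]
which is exactly the chain of inequalities in \eqref{mean T-AIF} restricted to the segment $W_i$.

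Next I would cube this inequality. Because $b_n, c_n > 0$ by the design of AIF-TEM, both sides are strictly positive, so cubing preserves the direction and yields $\mathbb{E}^3[T_n]_{W_i} \geq (\kappa\delta)^3/\bigl(\mathbb{E}[b_n]_{W_i}+\mathbb{E}[c_n]_{W_i}\bigr)^3$. Inverting and multiplying by $(\kappa\delta)^2$ then gives
\[
\frac{(\kappa\delta)^2}{\mathbb{E}^3[T_n]_{W_i}} \leq \frac{\bigl(\mathbb{E}[b_n]_{W_i}+\mathbb{E}[c_n]_{W_i}\bigr)^3}{\kappa\delta},
\]
and substituting this into \eqref{MSE Quant bound for segment i} produces precisely \eqref{Quant bound 2}, completing the argument.

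The derivation is essentially a monotone substitution into the already-proved Lemma, so I do not anticipate any genuine obstacle. The only points deserving care are confirming that the per-segment quantity $\mathbb{E}[\,\cdot\,]_{W_i}$ still obeys Jensen's inequality (it does, being an ordinary finite average over $S_{W_i}$) and that the cubing and inversion steps preserve the inequality direction, which is guaranteed by the positivity of $b_n$, $c_n$, $\kappa$, and $\delta$.
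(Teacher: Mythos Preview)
Your proposal is correct and matches the paper's own proof essentially line for line: the paper simply states that substituting the lower bound of \eqref{mean T-AIF} for $\mathbb{E}[T_n]$ into \eqref{MSE Quant bound for segment i} yields the result. Your extra care about positivity, cubing, and Jensen restricted to $S_{W_i}$ just makes explicit what the paper leaves implicit.
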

\begin{proof}
    By substituting the lower bound of \eqref{mean T-AIF} with ${\overline{T_n}}$ for $n \in S_{W_i}$ into equation \eqref{MSE Quant bound for segment i}, we obtain the desired result.
\end{proof}

\blue{The total quantization distortion, $D_Q$, for segments with maximum Nyquist ratio $r_{w_i} < 1$, is upper bounded by the average of the segment-wise quantization MSE values
\begin{equation}
D_Q \leq \frac{1}{\mathcal{S}} \sum_{i=1}^{\mathcal{S}} \mathbb{E}[\mathcal{E}_q^2]_i.
\end{equation}}

\blue{In summary, the total distortion, $D_T$, is upper bounded by the sum of the average \newblue{reconstruction}\off{sampling and quantization} distortion \newblue{across all segments (see Subsection~\ref{Analytical Results}) and the additional distortion introduced by quantization.} That is, 
\[
D_T \leq \textcolor{black}{D_R} + D_Q,
\]
where $\textcolor{black}{D_R}$ is given in \eqref{sampling distortion bound}.}

\off{\begin{cor}\label{Cor: Quant rate Distortion }
Assume the setting in Theorem~\ref{lemma: Quantization Distortion for segment i}. Then the total quantization distortion , for maximum Nyquist ratio $r_{w_i} < 1$ in each segment $W_i$, is upper bounded by
\begin{equation}\label{Quantization rate distortion }
     D_Q(f_{s_{\text{min}}}) \leq \frac{1}{\mathcal{S}}\sum_{i=1}^\mathcal{S} \frac{(1-r_{w_i}^{L_i+1})^2}{(1-r_{w_i})^2}\frac{\Delta_i^2}{12}\frac{\Omega}{\pi}\frac{(\kappa\delta)^2}{ \mathbb{E}^3[T_n]_{W_i}}
\end{equation}
\end{cor}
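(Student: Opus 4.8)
The plan is to mirror the structure of the sampling-distortion argument in Theorem~\ref{Distortion AIF-TEM}: express the total quantization distortion as the segment-wise average of the per-segment quantization MSE, bound each term with the single-segment result already in hand, and then collapse the segment-dependent quantities into the single global rate parameter $f_{s_{\text{min}}}$.

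First I would record that, exactly as the sampling distortion was defined in \eqref{distortion_formula} as the mean over the $\mathcal{S}$ segments, the total quantization distortion is the mean over segments of the reconstruction MSE from the quantized output,
\begin{equation*}
D_Q = \frac{1}{\mathcal{S}}\sum_{i=1}^{\mathcal{S}} \mathbb{E}[\mathcal{E}_q^2]_i .
\end{equation*}
The first inequality in the statement is then immediate: apply Lemma~\ref{lemma: Quantization Distortion for segment i} term by term to each $\mathbb{E}[\mathcal{E}_q^2]_i$ and pull the common factor $\frac{1}{\mathcal{S}}$ outside the sum. No new estimate is needed here; it is a direct substitution of the per-segment bound \eqref{MSE Quant bound for segment i}.

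The work is in the second inequality, where the segment-wise average $\mathbb{E}[T_n]_{W_i}$ must be replaced by the single global quantity $f_{s_{\text{min}}}$. Here I would use Definition~\ref{TEM Oversampling}, namely $f_s = 1/\mathbb{E}[T_n]$, together with the per-segment control on the largest time difference that already appears in \eqref{samples num}, $\max_{n\in S_{W_i}} T_n = r_{w_i}\frac{\pi}{\Omega}$, and the definition $f_{s_{\text{min}}} = \frac{\Omega}{\pi \max_i\{r_{w_i}\}}$. Converting $\frac{1}{\mathbb{E}^3[T_n]_{W_i}}$ into the cube of a sampling rate and then substituting the global rate $f_{s_{\text{min}}}$ produces the $(\kappa\delta)^2 f_{s_{\text{min}}}^3$ factor, leaving the prefactor $\frac{(1-r_{w_i}^{L_i+1})^2}{(1-r_{w_i})^2}\frac{\Delta_i^2}{12}\frac{\Omega}{\pi}$ untouched.

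The step I expect to be the main obstacle is precisely this final substitution, and for a reason opposite to the sampling-distortion case. In \eqref{sampling distortion bound} the per-segment quantity $r_{w_i}^{2(L_i+1)}$ was monotone in $r_{w_i}$, so replacing $r_{w_i}$ by its global maximum $\frac{\Omega}{\pi f_{s_{\text{min}}}}$ only enlarged each term; here the dependence is through $1/\mathbb{E}^3[T_n]_{W_i}$, the cube of a rate, and the quantization MSE \emph{grows} with the rate rather than shrinking. Care is therefore required to state precisely the relation between $\mathbb{E}[T_n]_{W_i}$ and $f_{s_{\text{min}}}$ being invoked (e.g.\ a lower bound $\mathbb{E}[T_n]_{W_i} \ge 1/f_{s_{\text{min}}}$) and to confirm that this direction preserves the inequality. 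Once that monotonicity is pinned down, the remaining algebra is the routine factoring used to derive Corollary~\ref{cor:dis}.
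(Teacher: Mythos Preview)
Your first inequality is exactly what the paper does: its entire proof reads ``follows directly from averaging the MSE bound provided in \eqref{MSE Quant bound for segment i} across all segments,'' i.e.\ set $D_Q=\frac{1}{\mathcal{S}}\sum_i \mathbb{E}[\mathcal{E}_q^2]_i$ and plug in Lemma~\ref{lemma: Quantization Distortion for segment i} termwise. So on that part you match the paper.

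On the second inequality you are more careful than the paper, and your caution is well placed. You correctly note that what is needed is $\mathbb{E}[T_n]_{W_i}\ge 1/f_{s_{\text{min}}}$ so that $1/\mathbb{E}^3[T_n]_{W_i}\le f_{s_{\text{min}}}^3$. But the paper's own definitions give the opposite direction: from \eqref{Tn bound aif} and \eqref{r_a general}, every $T_n\le r_{a_n}\tfrac{\pi}{\Omega}\le r_{w_i}\tfrac{\pi}{\Omega}\le \max_i r_{w_i}\tfrac{\pi}{\Omega}=1/f_{s_{\text{min}}}$, hence $\mathbb{E}[T_n]_{W_i}\le 1/f_{s_{\text{min}}}$ and $1/\mathbb{E}^3[T_n]_{W_i}\ge f_{s_{\text{min}}}^3$. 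Substituting $f_{s_{\text{min}}}^3$ therefore \emph{decreases} each summand, so the step does not yield an upper bound by this route. The paper's one-line proof does not address this passage at all, so there is nothing additional in the paper to fall back on; your instinct that ``the main obstacle'' lives here is correct, and the monotonicity you hoped to pin down in fact goes the wrong way under the stated definition of $f_{s_{\text{min}}}$.
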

\begin{proof}
    The proof follows directly from averaging the MSE bound provided in \eqref{MSE Quant bound for segment i} across all segments.
\end{proof}}
\begin{remark}
The analytical results of this section also apply to IF-TEM with fixed parameters instead of the adaptive parameters for the AIF-TEM.
\end{remark}
\begin{remark}
    For a sufficiently large segment, applying the fixed parameters for IF-TEM ($b_n = b_{\text{IF}}$, $r_{w_i} = r_c$, $c_n = c_{\text{max}}$) to the bound in \eqref{Quant bound 2}, we obtain the bound provided in \eqref{quantization mse bound IF-TEM}, as presented by \cite{naaman2021time}.
\end{remark}

\subsection{Comparison to IF-TEM}

This section compares the oversampling of AIF-TEM and IF-TEM, and relates it to the quantization MSE of both samplers. 
Let $OS_{c}$ denote the oversampling of IF-TEM and $OSU_{c}$ denote its upper bound. By applying the fixed bias $b_{\text{IF}}$ to \eqref{OS upper bound AIF-TEM}, we obtain $OS_{c} \leq \frac{b_{\text{IF}}+c_{\text{max}}}{\kappa \delta}\frac{\pi}{\Omega} \triangleq OSU_{c}.$
Let $f_{s_{a}}$ and $f_{s_{c}}$ denote the average sampling frequencies for AIF-TEM and IF-TEM, respectively. From Definition~\ref{TEM Oversampling} and Eq.~\eqref{mean T-AIF}, we get $f_{s_{a}} \leq \frac{\overline{b}_n+\overline{c}_n}{\kappa \delta}$ and by applying the fixed bias and maximum amplitude $b_{\text{IF}}$ and $c_{\text{max}}$, we have $f_{s_{c}} \leq \frac{b_{\text{IF}}+c_{\text{max}}}{\kappa \delta}$.
The following proposition compares the oversampling rates of AIF-TEM and IF-TEM.
\begin{prop}\label{OS FOR aif-tem, if-tem}
Let $x = x(t)$, $t\in \mathbb{R}$, be a $2\Omega$-BL signal with $|x(t)| \leq c_{\text{max}} < \infty$. Assume $x(t)$ is sampled using an IF-TEM characterized by parameters $b_{\text{IF}} > c_{\text{max}}$, $\kappa$, and $\delta$, and an AIF-TEM with shared parameters $\kappa$ and $\delta$ and a correct operating MAP. For $b_n \leq b_{\text{IF}}, \forall n$, then $fs_{a} \leq fs_{c}$ and $OS_{a} \leq OS_{c}$.
\end{prop}
\begin{proof}
Given the successful operation of MAP and $b_n \leq b_{\text{IF}}$, we have, $x(t) + b_n \leq x(t) +b_{\text{IF}}$. With fixed $\kappa$ and $\delta$, and referencing \eqref{Integ out-AIF-TEM} and \eqref{Integ out-IF-TEM}, the output of the AIF-TEM integrator reaches the threshold $\delta$ more slowly than the IF-TEM. This results in larger time intervals, leading to a reduced average sampling frequency and oversampling for AIF-TEM.
\end{proof}
\begin{remark}
    Under the conditions of Proposition~\ref{OS FOR aif-tem, if-tem}, the quantization MSE bound given in \eqref{Quant bound 2} for AIF-TEM is smaller than that of IF-TEM. This improvement arises because the adaptive bias $b_n$ in AIF-TEM is typically smaller than the fixed worst-case bias $b_{\text{IF}}$ used in IF-TEM, resulting in a reduced quantization error bound.
\end{remark}

\blue{Let $\nu$ denote the oversampling ratio between IF-TEM and AIF-TEM, defined as $\nu = OS_c / OS_a$, and let $\rho$ be the number of bits used to encode the bias. The following proposition shows that, under certain conditions, AIF-TEM requires fewer total bits to encode the signal than IF-TEM.}

\blue{ \begin{prop}\label{theorem: total bits ratio} Consider IF-TEM and AIF-TEM configured as in Proposition~\ref{OS FOR aif-tem, if-tem}, followed by a $K$-level dynamic uniform quantizer. The total number of bits used to encode the signal is smaller for AIF-TEM if $\nu - 1 \geq \frac{\rho}{\log_2 K}$. In this case, the ratio of total bits between IF-TEM and AIF-TEM is given by \begin{equation}\label{bits ratio} r_{\text{bits}} = \frac{\nu \log_2 K}{\log_2 K + \rho}. \end{equation} \end{prop} }

\begin{proof}\blue{ Let $L_a$ denote the number of samples generated by AIF-TEM. Then IF-TEM produces $\nu L_a$ samples. With a $K$-level dynamic uniform quantizer, the total number of bits for IF-TEM is $\nu L_a \log_2 K$, where $\nu \geq 1$ as shown in Proposition~\ref{OS FOR aif-tem, if-tem}.}

\blue{For AIF-TEM, the total number of bits is $L_a \log_2 K + \rho L_a$. Hence, the ratio of total bits is
\begin{equation*} 
r_{\text{bits}} = \frac{\nu L_a \log_2 K}{L_a (\log_2 K + \rho)}. 
\end{equation*} 
Dividing the numerator and the denominator by $L_a$ yields the expression in \eqref{bits ratio}. For AIF-TEM to use fewer bits, we require $r_{\text{bits}} > 1$, which leads to 
$\nu \log_2 K > \log_2 K + \rho,$ for  $\nu > 1 + {\rho}/{\log_2 K}$.} 
\end{proof} 
\blue{To summarize, we demonstrate that under the same parameters $\kappa$ and $\delta$ for both IF-TEM and AIF-TEM, and assuming the conditions in Proposition~\ref{OS FOR aif-tem, if-tem} hold, AIF-TEM not only achieves a lower oversampling rate but also results in smaller quantization error and fewer total bits, as established in Proposition~\ref{theorem: total bits ratio}.}

\subsection{Evaluation results for the Quantization Process}\label{Evaluation results for Quantization Process}
\ifpagelimit

This section presents the simulation results of the quantization process applied to the proposed AIF-TEM and the classical IF-TEM. For AIF-TEM, the evaluation includes both classical and dynamic quantization techniques.

We start by comparing the quantization performance of AIF-TEM and classical IF-TEM with a standard uniform quantizer. The input signals are defined by the equation \eqref{input signal}, with $\Omega$ varying in the range $2\pi \cdot [10, 50]$ Hz and $M=2$. The coefficients $a[n]$ are randomly selected 100 times from the interval [-1, 1]. The maximum amplitude is calculated as $c_{\text{max}} = \sqrt{(E\Omega)/\pi} = 2$.

\off{
\begin{figure}[!ht]
  \centering
  \begin{subfigure}[c]{0.45\textwidth}
    \includegraphics[width=0.9\textwidth]{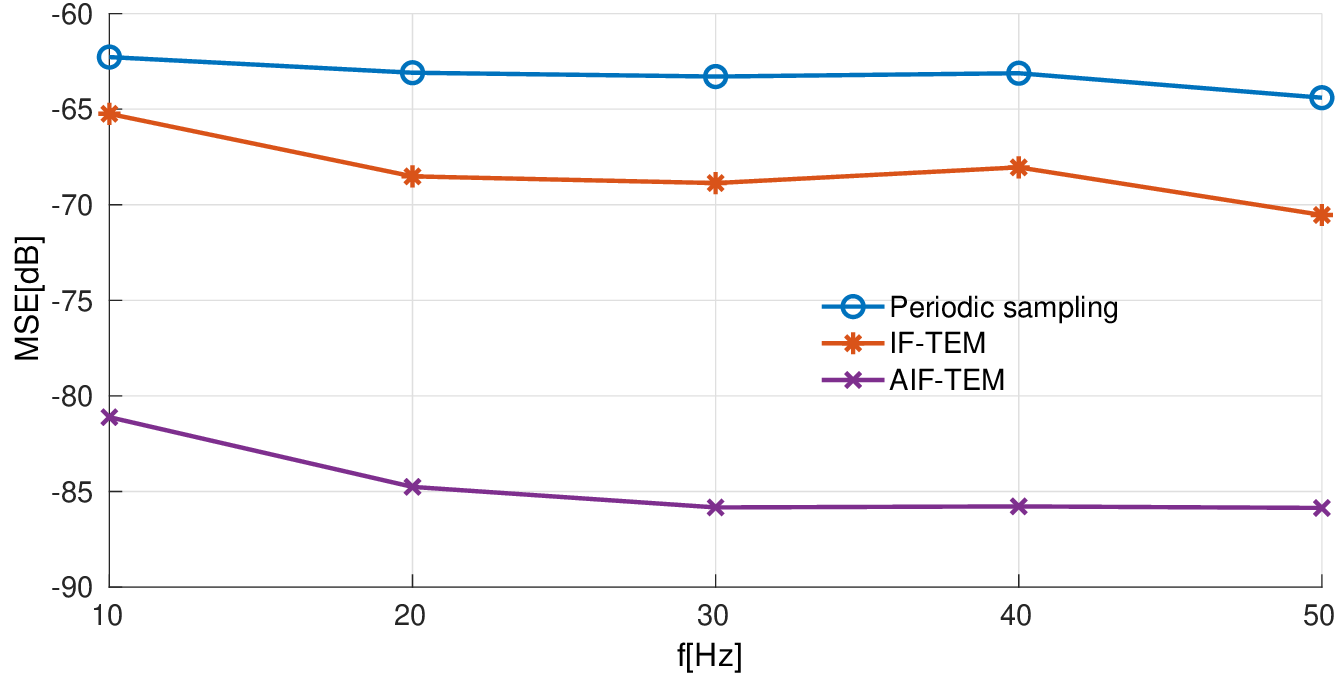}
    \caption{MSE of Quantization and \textcolor{red}{reconstruction}\off{sampling} for IF-TEM, AIF-TEM using 12 bits}
    \label{fig:MSE Quant1}
  \end{subfigure}
  \hfill 
  \begin{subfigure}[c]{0.45\textwidth}
    \includegraphics[width=0.9\textwidth]{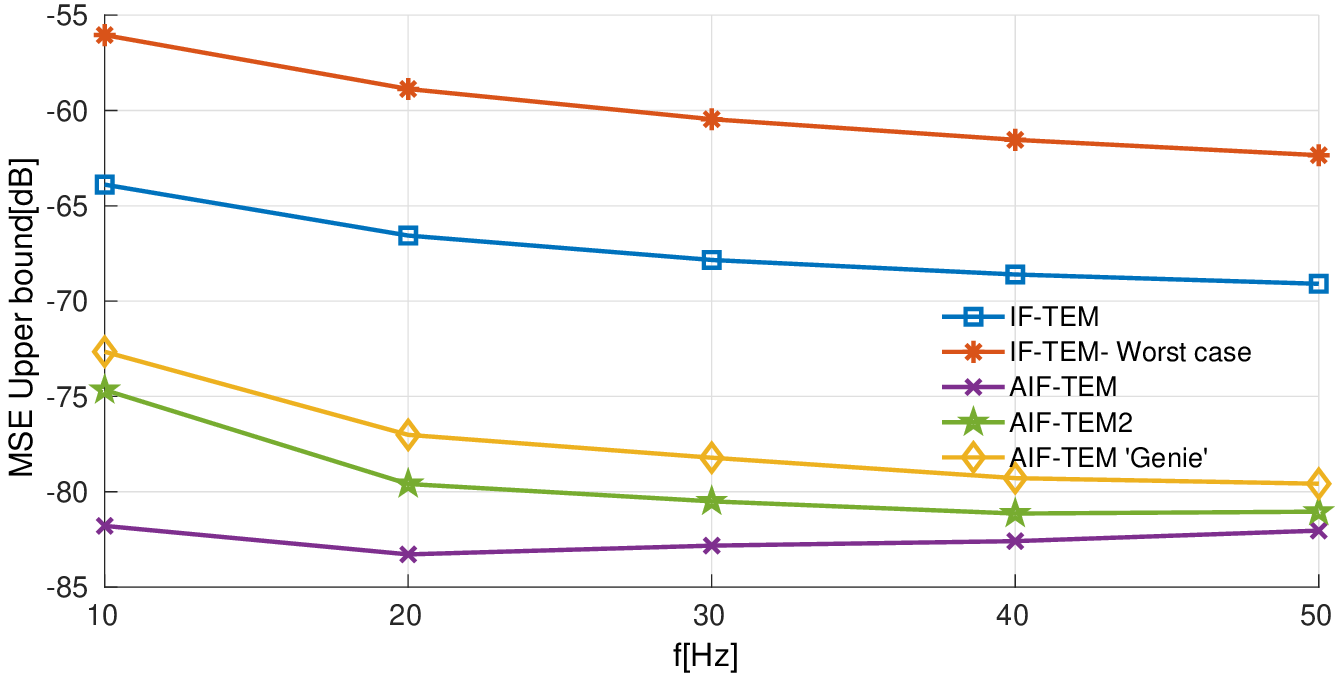}
    \caption{MSE upper bound, calculated based on \eqref{MSE Quant bound for segment i} in blue and red for IF-TEM and AIF-TEM, respectively. IF-Worst case is based on \eqref{Quant bound 2}, where bias and amplitude are fixed. AIF-TEM 'Genie' and AIF-TEM2 are based on \eqref{Quant bound 2} where the first is with the real amplitude provided and the second with estimated amplitude.}
    \label{fig:MSE Upper bound}
  \end{subfigure}
  \caption{Performance comparison of quantization for IF-TEM and AIF-TEM samplers using 12-bit quantizer.}
  \label{fig:quantization_results}
\end{figure}
}
\begin{figure}[!ht]
\centering
\includegraphics[width=0.4\textwidth]{QuantMse_NEWFig.eps}
\caption{\small MSE of Quantization and \newblue{reconstruction}\off{sampling} for IF-TEM, AIF-TEM using 12 bits}\label{fig:MSE Quant1}
\vspace{-0.2cm}
\end{figure}

Figure~\ref{fig:MSE Quant1} shows the MSE due to quantization and reconstruction\off{sampling}, as described in \eqref{MSE}. The performance of AIF-TEM is depicted by a red line, utilizing parameters $\delta =0.018$ and $\kappa = 0.21$. $\beta$ is set to achieve a Nyquist ratio of $r_a = 0.39$, \blue{and the MAP block operates (as proposed in Section~\ref{MAP Modes}) with a window of $w=1$. The weighting factors are configured as $\alpha_1=0.98$ and $\alpha_2=0.17$ as in Section~\ref{EVALUATION RESULTS}.} The blue line represents IF-TEM, configured with the same values of $\kappa$ and $\delta$. We adjusted $b_{\text{IF}}$ to $c_{\text{max}} + \beta$, such that $r_c =0.39$. The average oversampling rate for IF-TEM is higher than that of AIF-TEM, with ratios of $[4.1, 3.6, 3.2, 2.9, 2.7]$ across frequencies ranging from 10 Hz to 50 Hz. The biases $b_n$ are encoded using 4 bits. Additionally, the total number of bits required for IF-TEM is greater than that of AIF-TEM, with bit usage ratios of $[3.1, 2.7, 2.4, 2.2, 2.0]$.

AIF-TEM clearly outperforms IF-TEM in terms of MSE. It is important to note that in \cite{naaman2021time}, the authors compare IF-TEM with classical sampling under quantization and establish conditions under which the quantization error of IF-TEM is lower than that of the periodic sampler.

\begin{figure}[!ht]
\centering
\includegraphics[width=0.45\textwidth]{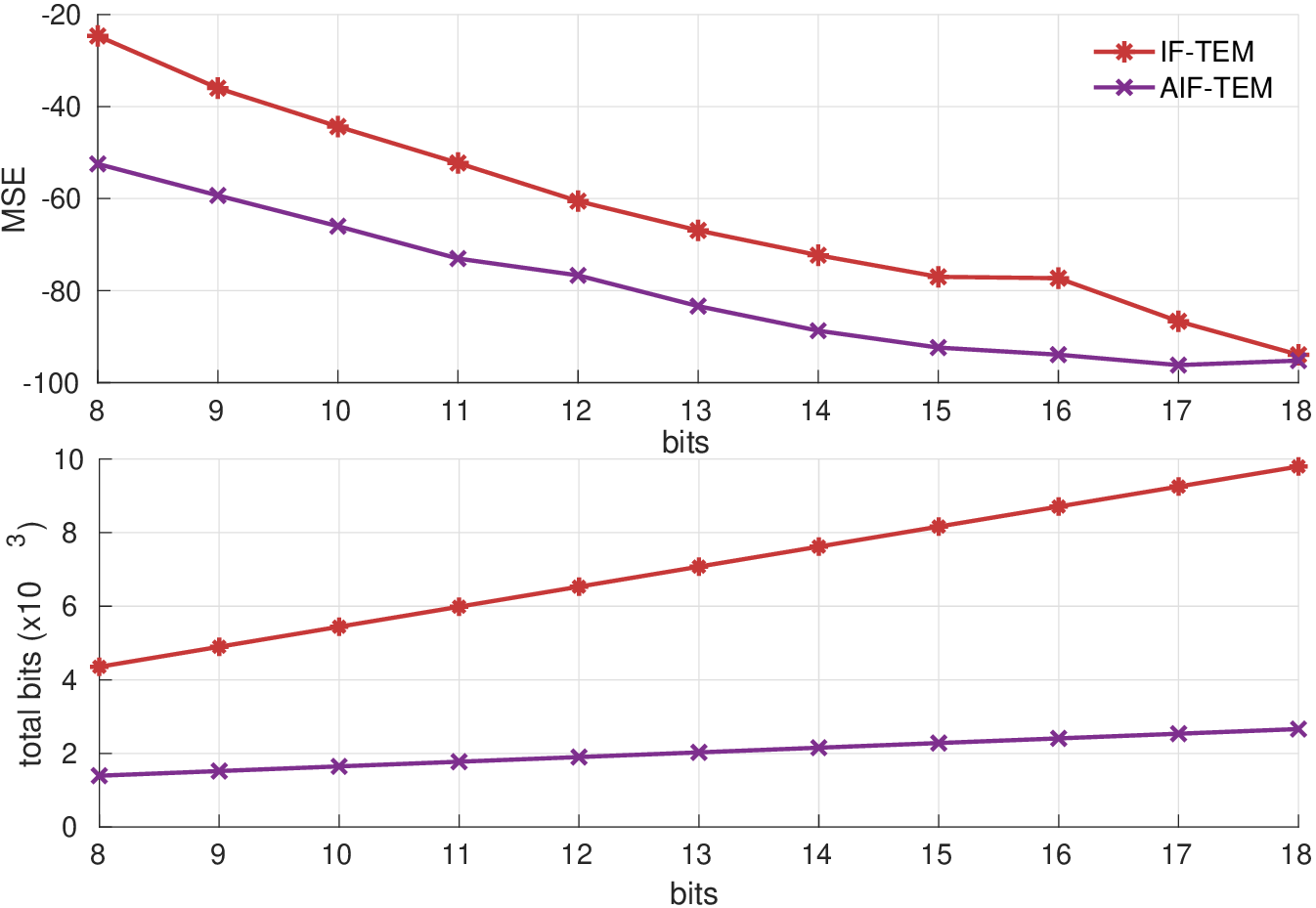}
\caption{\small \blue{(a) Comparison in terms of MSE (in dB) between IF-TEM and AIF-TEM. The evaluation includes quantization across different bit levels for $T_n$, with 3-bit quantization used for the biases in AIF-TEM. (b) Total number of bits required to encode the signal by the samplers.}}\label{fig:MSE vs Bits}
\vspace{-0.4cm}
\end{figure}

\blue{Figure~\ref{fig:MSE vs Bits}(a) showcases the MSE of quantization for both IF-TEM (red line) and AIF-TEM (purple line). For both samplers, the parameters used are $\delta = 0.0094$ and $\kappa = 0.24$. For AIF-TEM, $\beta=0.1$ is used, and for IF-TEM, the bias $b_{\text{IF}}$ is adjusted to $c_{\text{max}} + \beta$. The number of bits used to encode the biases $b_n$ in AIF-TEM is 3 bits. The oversampling ratio between IF-TEM and AIF-TEM is 4.3, ensuring that the condition of Proposition 2 for a smaller total number of bits in AIF-TEM holds. Figure~\ref{fig:MSE vs Bits}(b) shows the total number of bits required to encode the signal across different bit values used to encode the time differences $T_n$. The MSE performance of AIF-TEM clearly surpasses that of IF-TEM, and AIF-TEM also uses fewer total bits.}

Next, we explore the benefits of dynamic quantization for AIF-TEM as detailed in Subsection~\ref{Dynamic Quantization}, compared to classic quantization for both AIF-TEM and IF-TEM. For this analysis, we use a $2\Omega$-BL signal, $x(t)$ (where $\Omega = 2\pi[10:50]$ Hz), divided into three segments. Each segment $i$ consists of a signal $x_i(t) = a_i\sin(\Omega t)$, with coefficient $a_i$ randomly selected 50 times from the interval [0, 1] for each segment. Each segment is recovered independently using the decoding algorithm described in Section~\ref{decoding process}.

The AIF-TEM sampler configuration includes parameters $\kappa = 0.18$, $\beta = 0.1$, and a threshold $\delta$ set to achieve a Nyquist ratio $r_a = 0.67$ for each frequency, with the MAP block utilizing a window of $w = 15$. IF-TEM is configured with the same $\kappa$ and $\delta_c=\delta$, but with a fixed bias $b_{\text{IF}} = c_{\text{max}} + \beta$, ensuring $r_c = r_a = 0.67$. With these settings, the average oversampling rate is $16.25$ for IF-TEM and $11.2$ for AIF-TEM.

\begin{figure}[!ht]
  \centering
  \begin{subfigure}[c]{0.45\textwidth}
    \includegraphics[width=0.9\textwidth]{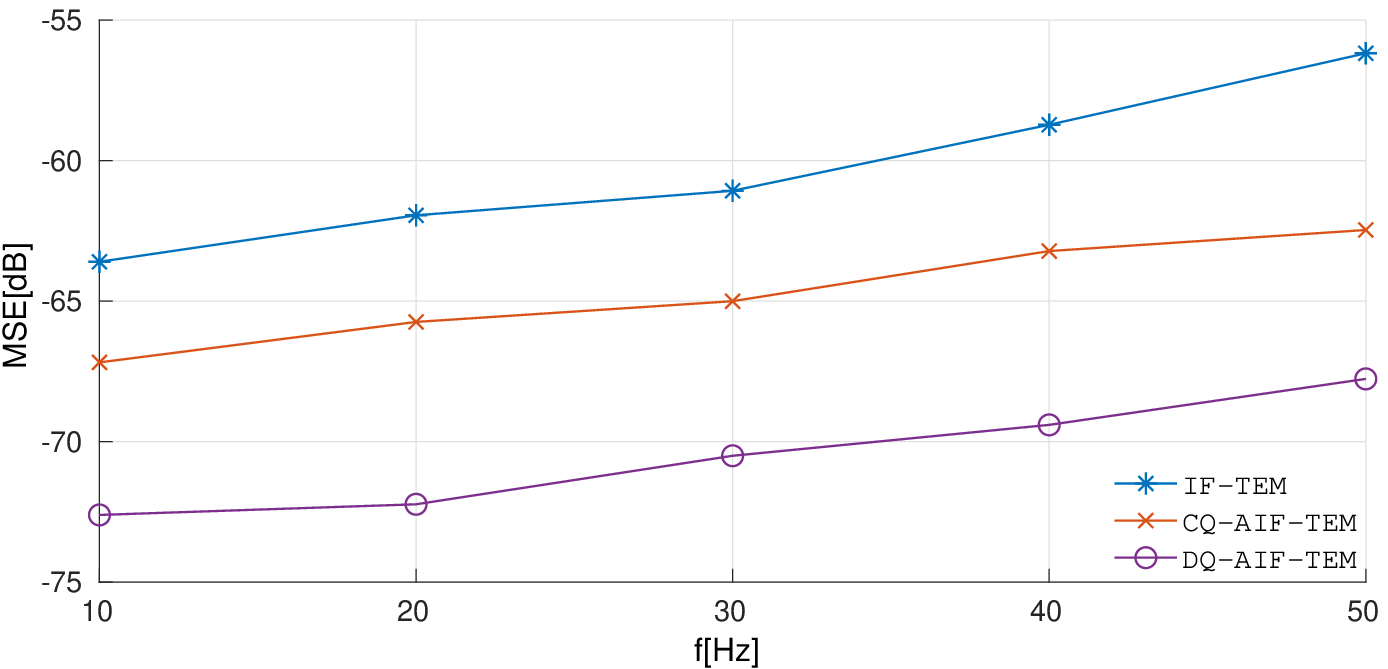}
    \caption{\small MSE of the quantization using 12 bits.}
    \label{fig:sub1}
  \end{subfigure}
  \hfill 
  \begin{subfigure}[c]{0.4\textwidth}
    \includegraphics[width=0.9\textwidth]{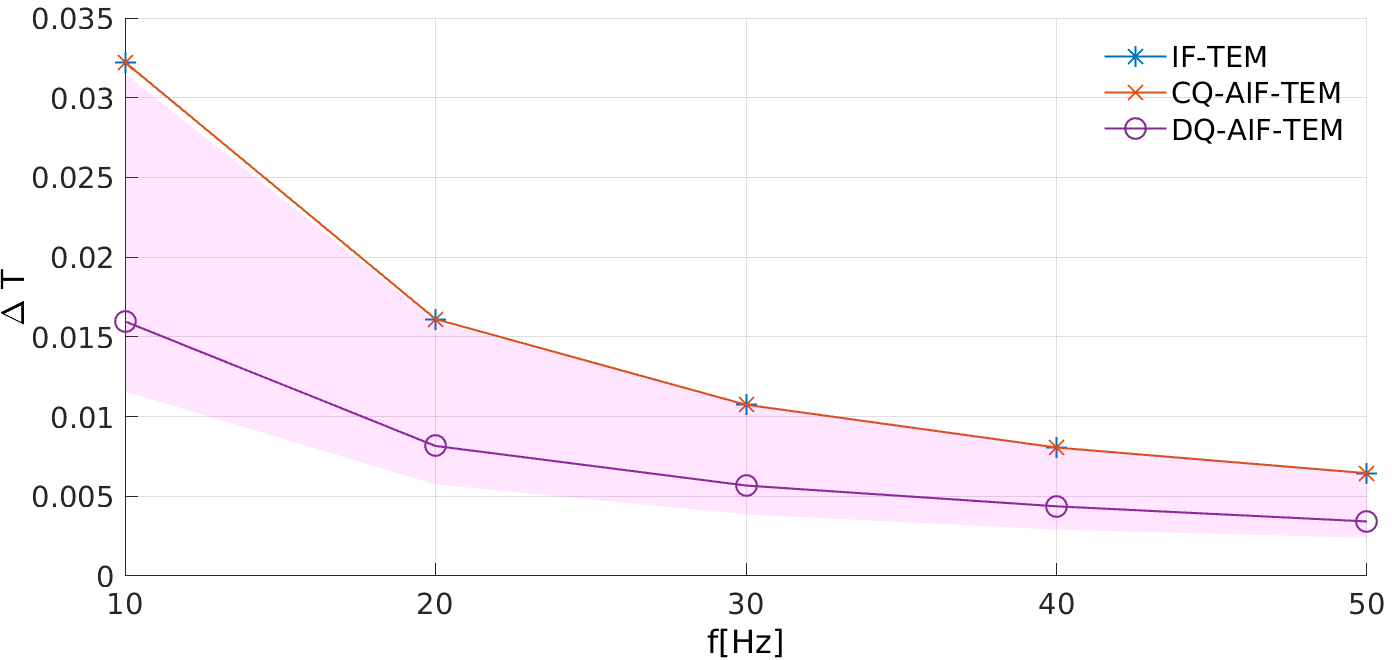}
    \caption{\small Dynamic range of time differences, $\Delta T = (T_{\text{max}} - T_{\text{min}})$. The purple line represents the average dynamic range of time differences for each segment $i$, expressed as $\Delta T_i = (T_{\text{max}}^{(i)} - T_{\text{min}}^{(i)})$, and the shaded region captures the range of $\Delta T_i$.}
    \label{fig:sub2}
  \end{subfigure}
  \caption{\small Performance comparison of quantization methods for IF-TEM, Classic Quantization for AIF-TEM ( CQ-AIF), and Dynamic Quantization for AIF-TEM (DQ-AIF).}
  \label{fig:MSE and time differnce, sin signal}
\vspace{-0.5cm}
\end{figure}

Figure~\ref{fig:MSE and time differnce, sin signal} illustrates the performance of classic quantization for IF-TEM and both classic and dynamic quantization methods for AIF-TEM.
Subfigure~\ref{fig:MSE and time differnce, sin signal}\subref{fig:sub1} displays the total MSE of quantization and reconstruction\off{sampling}, as defined in \eqref{MSE}.
Subfigure~\ref{fig:MSE and time differnce, sin signal}\subref{fig:sub2} shows the dynamic range of time differences denoted by $\Delta T = (T_{\text{max}} - T_{\text{min}})$.
The classic quantization performance for IF-TEM is indicated by the blue line, as described in Section~\ref{Classic Quantization}, which uses a fixed quantization step across all segments, given by equation~\eqref{fixed step size}. For AIF-TEM, the red line illustrates the performance of classic quantization, while the purple line represents dynamic quantization. The dynamic approach varies the quantization step for each segment based on the estimated maximum amplitude calculated per equation~\eqref{dynamic step size}. The shaded purple region marks the range of maximum time differences per segment $i$, as defined in equation~\eqref{dynamic range of Tn}, which directly impacts the quantization step size. Here, $T_{max}$ is calculated based on \eqref{max Tn}, with the estimated amplitude $\hat{c}_n$. 

Dynamic quantization significantly improves MSE performance, reducing it by at least 6 dB compared to classic quantization for AIF-TEM and by at least 10 dB compared to IF-TEM.

\begin{table}[ht]
\begin{center}  
    \footnotesize 
    \begin{tabular}{|l||l|l|l||l|l|l||l|l|l|}
    \hline
    & \multicolumn{3}{c||}{\bf Segment A} & \multicolumn{3}{c||}{\bf Segment B} & \multicolumn{3}{c|}{\bf Segment C} \\
    \hline
    \textbf{Sam.}   & M    & B   & $\Delta$T  & M    & B   & $\Delta$T  & M   & B   & $\Delta$T    \\ \hline
    \textbf{IF.}      & -64  & -43 & 4.5  & -64 & -43 & 4.5 & -73 & -43 & 4.5 \\ \hline
    \textbf{AIF.}     & -63 & -59 & 3.1 & -69  &-62 & 3.1 &-75 &-69 & 3.1 \\ \hline
        \textbf{DAI.}     & -63 & -59 & 3.1 & -74 & -72 & 1.8 & -89 &  -82 & 1.3\\ \hline
    \end{tabular}
\end{center}
\vspace{-0.3cm}
\caption{\small\label{table:Sin Sig} Comparative performance with BL signal. The table represents the MSE in dB (M) of recovery for each quantizer (classic quantization for IF-TEM (IF), AIF-TEM (AIF), and dynamic quantization for AIF-TEM (DAI)), the MSE bound in dB (B), and the dynamic range of time differences ($\Delta T$) in sec ($10^{-2}$) for each segment of the signal.}\vspace{-0.4cm}
\label{table:Sin Sig quantization}
\end{table}

Table~\ref{table:Sin Sig quantization} illustrates the performance of classic quantization for IF-TEM and both classic and dynamic quantization methods for AIF-TEM. Using the same signal as in Figure~\ref{table:Sin Sig}, for each segment, the table represents the MSE of quantization and the MSE bound given in equation~\eqref{Quant bound 2} with the estimated amplitude for AIF-TEM and fixed bias and maximum amplitude $c_{\text{max}}$ for IF-TEM. Dynamic quantization improves MSE performance and reduces the quantization step size based on the amplitude of each segment.

\else
\section{Conclusion}
\off{
This paper introduces a novel Adaptive Integrate-and-Fire Time Encoding Machine (AIF-TEM) designed to dynamically adjust to changes in the amplitude and frequency of input signals. Moreover, it provides an analytical analysis of the oversampling, and \textcolor{red}{reconstruction}\off{sampling} and quantization rate-distortion.
The adaptive nature of AIF-TEM allows it to outperform traditional IF-TEM and periodic sampling methods, particularly in scenarios with varying signal characteristics. Our extensive analysis and empirical evaluations demonstrate that AIF-TEM achieves significant improvements in mean square error (MSE) and oversampling rates, highlighting its potential for practical applications in signal processing.

The performance of the MAP block, which estimates the \textcolor{red}{temporal} amplitude, is crucial for adaptive IF-TEM. Even a simple MAP block implementation shows good performance for AIF-TEM. Future improvements in implementations and estimators could further enhance the performance and optimization of AIF-TEM.
AIF-TEM offers a robust framework for adaptive sampling and quantization, paving the way for more efficient analog-to-digital conversion methods. As technology advances, AIF-TEM's adaptive approach could significantly impact next-generation signal processing systems.
}
This paper introduces an Adaptive Integrate-and-Fire Time Encoding Machine (AIF-TEM), designed to dynamically adjust to changes in input signal amplitude and frequency. Our analysis covers oversampling, sampling distortion, and quantization error. The performance of the MAP block, which estimates \textcolor{red}{temporal} amplitude, is crucial for the proposed adaptive approach.
AIF-TEM offers a robust framework for adaptive sampling and quantization, paving the way for more efficient analog-to-digital conversion methods.
As we demonstrate, an efficient adaptive design with a simple MAP block implementation can achieve significant reductions in MSE, oversampling rate, and total bit usage, as demonstrated in our evaluation results. Future work includes exploring new MAP block implementations and estimators that could further optimize AIF-TEM performance.


%


\appendix

\subsection{Definitions}\label{Appendix a}
\begin{definition}\label{def: convolotion operator}
     The operator $\mathcal{G}$ maps an arbitrary function $x$ into a bandlimited function via $\mathcal{G}x = (g * x)$, where $*$ denotes convolution and $g(t) = \sin(\Omega t)/\pi t$, known as the sinc function.
\end{definition}

\begin{definition}
     The operator $\mathcal{A}^*$ is defined by 
    \begin{equation}\label{adjoint operator}
         \mathcal{A^*}x = \sum_{n \in \mathbb{Z}}x(\theta_n)\mathcal{G}1_{[t_{n-1}, t_n]},
    \end{equation}
    where $\mathcal{G}1_{[t_{n-1}, t_n]}$ applies the operator $\mathcal{G}$ to a pulse function defined over the interval $[t_{n-1}, t_n]$.
    \end{definition}
\begin{definition}\label{norm}
    Let $f: \mathbb{R} \to \mathbb{R}$ and and let $\mathcal{T}$ be a general time window defined by $\mathcal{T} = [t_{\text{start}}, t_{\text{end}}]$, with $t_{\text{start}}$ and $t_{\text{end}}$ being the start and end points of the window, respectively. The norm $\| f \|^2_{\mathcal{T}}$ is given by
\begin{equation*}
    \| f \|^2_{\mathcal{T}} = \int_{t_{\text{start}}}^{t_{\text{end}}} \abs{f(u)}^2du.
\end{equation*}
\end{definition}

\begin{definition}\label{product}
    Let $f: \mathbb{R} \to \mathbb{R}, g: \mathbb{R} \to \mathbb{R}$, the inner product $\langle f,g\rangle_{\mathcal{T}}$ is defined by 
    \begin{equation*}
    \langle f,g\rangle_{\mathcal{T}} = \int_{t_{\text{start}}}^{t_{\text{end}}} f(u)g(u)du.
\end{equation*}
\end{definition}
\vspace{-0.5cm}
\subsection{Windowed Bernstein's inequality}
\begin{lemma}[Windowed Bernstein's inequality]\label{lemma:Bernstein}
For a function  $f: \mathbb{R} \to \mathbb{R}$, bandlimited to $[-\Omega, \Omega]$, the following inequality holds within a given window $\mathcal{T}$
\[
\left\|  \frac{df}{dt} \right\| _{\mathcal{T}} \leq \Omega\| f\|_{\mathcal{T}},
\]
where the norm $\| f \|^2_{\mathcal{T}}$ is defined as per Definition~\ref{norm}.

\end{lemma}
\begin{proof}
First, we define a unit pulse function over the window $\mathcal{T}$, with length $|\mathcal{T}|=t_{\text{end}}-t_{\text{start}}$, as follows
\begin{equation}\label{diff:win}
 1_{\mathcal{T}}(t) =
\begin{cases}
1 & \text{if } t \in \mathcal{T} \\
0 & \text{otherwise}
\end{cases}.   
\end{equation}
Considering $f'$ as the derivative of $f$ and $\mathcal{F}\{\cdot\}$ as the Fourier transform, the proof unfolds with the following steps

\vspace{-0.4cm}
\ifshort
{\small
\begin{align*}
     & \| f'\|_{\mathcal{T}}^2 \overset{(a)}{=} \| f'\cdot1_{\mathcal{T}}(t)\|^2
    \overset{(b)}{=} \frac{1}{2\pi}\| \mathcal{F}\{f'\}
    *  \mathcal{F}\{1_{\mathcal{T}}(t)\}\|^2 \\
    & \overset{(c)}{=} \frac{1}{2\pi}\| jw\mathcal{F}\{f\}
    *  \mathcal{F}\{1_{\mathcal{T}}(t)\}\|^2
     \overset{(d)}{\leq} \frac{\Omega^2}{2\pi}\\
     &\| \mathcal{F}\{f\}
    *  \mathcal{F}\{1_{\mathcal{T}}(t)\}\|^2
     \overset{(e)}{=} \Omega^2 \| f \cdot 1_{\mathcal{T}}(t) \|^2
     = \Omega^2 \| f \|_{\mathcal{T}}^2,
\end{align*}}
\else
\begin{align*}
     \| f'\|_{\mathcal{T}}^2 & \overset{(a)}{=} \| f'\cdot1_{\mathcal{T}}(t)\|^2\\
    & \overset{(b)}{=} \frac{1}{2\pi}\| \mathcal{F}\{f'\}
    *  \mathcal{F}\{1_{\mathcal{T}}(t)\}\|^2 \\
    & \overset{(c)}{=} \frac{1}{2\pi}\| jw\mathcal{F}\{f\}
    *  \mathcal{F}\{1_{\mathcal{T}}(t)\}\|^2 \\
    &\overset{(d)}{\leq} \frac{\Omega^2}{2\pi}\| \mathcal{F}\{f\}
    *  \mathcal{F}\{1_{\mathcal{T}}(t)\}\|^2 \\
    & \overset{(e)}{=} \Omega^2 \| f \cdot 1_{\mathcal{T}}(t) \|^2 \\
    & = \Omega^2 \| f \|_{\mathcal{T}}^2,
\end{align*}
\fi
\hspace{-0.19cm}where (a) follows from expanding the norm over the entire real line and restricting the derivative $f'$ to the window $\mathcal{T}$ using the pulse function $1_{\mathcal{T}}(t)$. (b) holds by applying Parseval's Theorem \cite{papoulis1967limits}, considering that time-domain multiplication corresponds to convolution in the frequency domain. (c) follows from the property that the Fourier transform of $f'$ is $j\omega \mathcal{F}{f}$. (d) is given since $f(t)$ is bandlimited within $[-\Omega, \Omega]$, \blue{so we have $\abs{\omega} \leq \Omega$. By the convolution property in the frequency domain, define $G(\omega,\tau) = \mathcal{F}\{f\}(\tau) \mathcal{F}\{1_{\mathcal{T}(t)}\}(\omega - \tau)$. Taking the squared magnitude (i.e., the squared norm in the frequency domain), we obtain
\begin{multline*}
\small
    \abs{jw\mathcal{F}\{f\} *  \mathcal{F}\{1_{\mathcal{T}}(t)\}}^2\\= \abs{\int_{-\infty}^{\infty} j\tau G(w,\tau)d\tau}^2 = \abs{\int_{-\infty}^{\infty} \tau G(w,\tau)d\tau}^2.
\end{multline*}}
\blue{Note that the factor $j$ has $\abs{j}=1$ and hence does not affect the magnitude. Since $f(t)$ is bandlimited, the integration variable $\tau$ in the convolution is confined to $\abs{\tau} \leq \Omega$. Thus, the integrand can be directly bounded by $\Omega$, allowing us to factor out $\Omega^2$ from the norm.} (e) results from reversing the convolution-multiplication relationship, transitioning back to the time domain.
This completes the proof of the lemma.
\end{proof}
\vspace{-0.5cm}
\subsection{The adjoint operator}
\begin{lemma}\label{lemma: adjoint in window}
Recall operators $\mathcal{A}$ and $\mathcal{A^*}$ as defined in \eqref{Operator A} \eqref{adjoint operator}, respectively. $\mathcal{A}$ and $\mathcal{A^*}$ are adjoint operators in window $W_i$. That is 
\[
\langle\mathcal{A}x,y\rangle_{W_i} = \langle x,\mathcal{A^*}y\rangle_{W_i},
\]
where the inner product $\langle\cdot,\cdot\rangle_{W_i}$ is as defined in Definition~\ref{product}.
\end{lemma}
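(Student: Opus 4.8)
The plan is to expand both windowed inner products directly from the definitions \eqref{Operator A} and \eqref{adjoint operator}, reduce each to the same bilinear expression $\sum_{n\in S_{W_i}} y(\theta_n)\int_{t_{n-1}}^{t_n} x(u)\,du$, and conclude by comparison. The two structural facts I would isolate at the outset are (i) the self-adjointness of the sinc-convolution operator $\mathcal{G}$, and (ii) the reproducing property $\mathcal{G}x = x$ valid for $2\Omega$-bandlimited $x$, both of which hinge on the even symmetry $g(-t)=g(t)$ of the sinc kernel.

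First I would treat the left-hand side. Observing that $\int_{t_{n-1}}^{t_n} x(u)\,du = \langle x, 1_{[t_{n-1},t_n]}\rangle$, the operator in \eqref{Operator A} reads $\mathcal{A}x = \sum_{n\in S_{W_i}} \langle x, 1_{[t_{n-1},t_n]}\rangle\, g(\cdot-\theta_n)$, so by linearity
\[
\langle \mathcal{A}x, y\rangle_{W_i} = \sum_{n\in S_{W_i}} \Big(\int_{t_{n-1}}^{t_n} x(u)\,du\Big)\,\langle g(\cdot-\theta_n), y\rangle_{W_i}.
\]
Because $g$ is even, $\langle g(\cdot-\theta_n), y\rangle = (g*y)(\theta_n) = (\mathcal{G}y)(\theta_n)$, and since $y$ is $2\Omega$-bandlimited, $\mathcal{G}y = y$, so this factor collapses to $y(\theta_n)$. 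Hence the left-hand side equals $\sum_{n\in S_{W_i}} y(\theta_n)\int_{t_{n-1}}^{t_n} x(u)\,du$.

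Next I would handle the right-hand side. From \eqref{adjoint operator}, $\langle x, \mathcal{A^*}y\rangle_{W_i} = \sum_{n\in S_{W_i}} y(\theta_n)\,\langle x, \mathcal{G}1_{[t_{n-1},t_n]}\rangle$. I would push $\mathcal{G}$ off the pulse using its self-adjointness, $\langle x, \mathcal{G}1_{[t_{n-1},t_n]}\rangle = \langle \mathcal{G}x, 1_{[t_{n-1},t_n]}\rangle$; this rests on Parseval together with the even symmetry of $g$, mirroring the frequency-domain manipulation already used in Lemma~\ref{lemma:Bernstein}. Applying $\mathcal{G}x = x$ for the bandlimited input then gives $\langle \mathcal{G}x, 1_{[t_{n-1},t_n]}\rangle = \int_{t_{n-1}}^{t_n} x(u)\,du$, so the right-hand side likewise equals $\sum_{n\in S_{W_i}} y(\theta_n)\int_{t_{n-1}}^{t_n} x(u)\,du$, matching the left-hand side and finishing the argument.

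The main obstacle is the interaction between the finite window $W_i$ and the sinc convolutions, which integrate over all of $\mathbb{R}$: the reproducing identity $\langle g(\cdot-\theta_n),y\rangle = y(\theta_n)$ and the self-adjointness of $\mathcal{G}$ are clean on the full line but require justification once the inner product is restricted to $W_i$. I would address this by noting that for $n\in S_{W_i}$ the pulses $1_{[t_{n-1},t_n]}$ are supported inside $W_i$, and by invoking bandlimitedness so that $\mathcal{G}$ acts as the orthogonal projection reproducing $x$ and $y$ on the relevant support, which lets the windowed inner products coincide with their full-line counterparts for exactly the terms that appear. The remaining steps are linearity and an interchange of sum and integral by Fubini, which I would not belabor.
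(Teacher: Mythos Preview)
Your proposal is correct and follows essentially the same route as the paper: both arguments pivot on the reproducing property $\mathcal{G}x=x$ for bandlimited inputs and the self-adjointness of the sinc-convolution $\mathcal{G}$ to collapse each side to the common bilinear form $\sum_{n\in S_{W_i}} y(\theta_n)\int_{t_{n-1}}^{t_n} x(u)\,du$. The only cosmetic difference is that you expand the two sides separately and match them in the middle, whereas the paper runs a single chain from $\langle\mathcal{A}x,y\rangle_{W_i}$ to $\langle x,\mathcal{A^*}y\rangle_{W_i}$; your explicit use of $\mathcal{G}y=y$ is in fact cleaner than the paper's step~(d), which invokes a heuristic ``approximation of the inner product to a convolution with a delta function,'' and your closing paragraph correctly isolates the same finite-window subtlety that the paper handles only loosely.
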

\begin{proof}
\vspace{-0.2cm}
\ifshort
{\small
\begin{align*}
   & \langle\mathcal{A}x,y\rangle_{W_i} \overset{(a)}{=}\langle\mathcal{A}x,y\cdot1_{W_i}\rangle \\
   & \overset{(b)}{=} \left\langle\sum_{n\in S_{W_i}} \int_{t_{n-1}}^{t_n} x(u)du\ g(t-\theta_n),y\cdot1_{W_i} \right\rangle  \\
  & \overset{(c)}{=}\sum_{n\in S_{W_i}} \int_{t_{n-1}}^{t_n} x(u)du \left\langle g(t-\theta_n),y\cdot1_{W_i}\right\rangle \\
   & \overset{(d)}{=} \sum_{n\in S_{W_i}} \int_{t_{n-1}}^{t_n} x(u) y(\theta_n)du
     \overset{(e)}{=} \sum_{n\in S_{W_i}} \langle x,1_{[t_{n-1}, t_n]}\rangle  y(\theta_n)du\\
   & \overset{(f)}{=}  \left\langle x,\sum_{n\in S_{W_i}}1_{[t_{n-1}, t_n]}y(\theta_n)\right\rangle
    \overset{(g)}{=} \left\langle\mathcal{G}x,\sum_{n\in S_{W_i}}1_{[t_{n-1}, t_n]}y(\theta_n)\right\rangle \\
    & \overset{(h)}{=} \left\langle x,\sum_{n\in S_{W_i}}\mathcal{G}1_{[t_{n-1}, t_n]}y(\theta_n)\right\rangle  = \langle x,\mathcal{A^*}y\rangle_{W_i},
\end{align*}}
\else
\begin{align*}
   & \langle\mathcal{A}x,y\rangle_{W_i} \overset{(a)}{=}\langle\mathcal{A}x,y\cdot1_{W_i}\rangle \\
   & \overset{(b)}{=} \left\langle\sum_{n\in \mathcal{Z}} \int_{t_{n-1}}^{t_n} x(u)du\ g(t-\theta_n),y\cdot1_{W_i} \right\rangle  \\
  & \overset{(c)}{=}\sum_{n\in \mathcal{Z}} \int_{t_{n-1}}^{t_n} x(u)du \left\langle g(t-\theta_n),y\cdot1_{W_i}\right\rangle \\
   & \overset{(d)}{=} \sum_{n\in S_{W_i}} \int_{t_{n-1}}^{t_n} x(u) y(\theta_n)du\\
      & \overset{(e)}{=} \sum_{n\in S_{W_i}} \langle x,1_{[t_{n-1}, t_n]}\rangle  y(\theta_n)du\\
   & \overset{(f)}{=}  \left\langle x,\sum_{n\in S_{W_i}}1_{[t_{n-1}, t_n]}y(\theta_n)\right\rangle \\
   & \overset{(g)}{=} \left\langle\mathcal{G}x,\sum_{n\in S_{W_i}}1_{[t_{n-1}, t_n]}y(\theta_n)\right\rangle \\
    & \overset{(h)}{=} \left\langle x,\sum_{n\in S_{W_i}}\mathcal{G}1_{[t_{n-1}, t_n]}y(\theta_n)\right\rangle  = \langle x,\mathcal{A^*}y\rangle_{W_i},
\end{align*}
\fi
\hspace{-0.15cm}where (a) follows by restricting $y$ to the window $W_i$ using the pulse function $1_{W_i}$, as defined in \eqref{diff:win}. (b) applies the definition of $\mathcal{A}x$ from \eqref{Operator A}. (c) uses the linearity of the inner product.(d)  follows from approximating the inner product as a convolution with a delta function. (e) expresses the integral as an inner product between $x$ and $1_{[t_{n-1}, t_n]}$. (f) again uses the linearity of the inner product. (g) uses the property $\mathcal{G}x = x$ for $2\Omega$-bandlimited $x$, since $\mathcal{F}\{g\}$ is a unit-amplitude low-pass filter over the same bandwidth. (h) uses linearity and the definition of $\mathcal{A^*}$.

This completes the proof.
\end{proof}

\vspace{-0.6cm}
\subsection{Proof of Lemma~\ref{key_lemma}}\label{proof:lem1}
\begin{proof}[Proof of Lemma~\ref{key_lemma}]
Recall the operator $\mathcal{A}$, as defined in \eqref{Operator A}. To establish an upper bound on the norm of the discrepancy between $x$ and $\mathcal{A}x$ over a time window $W_i$, we use techniques similar to those in \cite[Appendix B]{lazar2004perfect} for classical TEM, and in \cite{benedetto1994theory} for irregular sampling and grounded in frame theory, as shown in \cite{duffin1952class}. However, the authors in \cite{lazar2004perfect} (as is common in classical recovery schemes, e.g., \cite{lazar2003time,lazar2004time}) provide proof for the norm's upper bound over $\mathbb{R}$, whereas here we focus on a specific $i$-th time window $W_i$, working in the finite regime.
Let $\mathcal{A}^*$ denote the adjoint operator of $\mathcal{A}$, as given in~\eqref{adjoint operator} and proved in Lemma~\ref{lemma: adjoint in window}.
\off{
First, we define the operator $\mathcal{G}$ which maps an arbitrary function $x$ into a bandlimited function through $\mathcal{G}x = (g * x)$, where  $g(t) = sin(\Omega t)/\pi t$. Next, we provide the adjoint operator of $\mathcal{A}$, $\mathcal{A}^*$, defined by 
\[
\mathcal{A^*}x = \sum_{k \in \mathbb{Z}}x(\theta_k)\mathcal{G}1_{[t_{k}, t_{k+1}]},
\]
where $\mathcal{G}1_{[t_k, t_{k+1}]} $ is applying the operator $\mathcal{G}$  to a pulse function defined on the interval $[t_k, t_{k+1}]$.}
Thus, for the finite norm  $\| \cdot \|^2_{W_i}$ (see Lemma~\ref{lemma:Bernstein}), we have

\vspace{-0.3cm}
\ifshort
{\small
\begin{align*}
& \| x-\mathcal{A^*}x\|^2_{W_i}  = \| x - \sum_{n\in S_{W_i}}x(\theta_n)\mathcal{G}1_{[t_{n-1}, t_n]}\|^2_{W_i} \\
& \overset{(a)}{=}\| \mathcal{G}x - \sum_{n\in S_{W_i}}x(\theta_n)\mathcal{G}1_{[t_{n-1}, t_n]}\|^2_{W_i} \\
& \overset{(b)}{\leq} \|  x -\sum_{n\in S_{W_i}}x(\theta_n)1_{[t_{n-1}, t_n]}\|^2_{W_i}\\
&\overset{(c)}{=} \| \sum_{n\in S_{W_i}}[x -x(\theta_n)]1_{[t_{n-1}, t_n]}\|^2_{W_i} \\
&\blue{\overset{(d)}{\leq} \int_{t_a}^{t_b}\abs{\sum_{n\in S_{W_i}} [x -x(\theta_n)]1_{[t_{n-1}, t_n]}} ^2} \\
&\blue{\overset{(e)}{=} \int_{t_a}^{t_b}\sum_{n\in S_{W_i}} \abs{x -x(\theta_n)}^21_{[t_{n-1}, t_n]}} \\
& \overset{(f)}{=} \sum_{n\in S_{W_i}} \int_{t_{n-1}}^{t_n}\mid x(u) - x(\theta_n)\mid^2du\\
& \overset{(g)}{\leq} \sum_{n\in S_{W_i}} \frac{4}{\pi^2}(\theta_n-t_{n-1})^2\int_{t_{n-1}}^{\theta_n}\mid x'(u)\mid^2du\\
&\quad\quad + \frac{4}{\pi^2}(t_n-\theta_n)^2 \int_{\theta_n}^{t_n}\mid x'(u)\mid^2du,
\end{align*}}
\else
\begin{align*}
& \| x-\mathcal{A^*}x\|^2_{W_i}  = \| x - \sum_{n \in \mathbb{Z}}x(\theta_n)\mathcal{G}1_{[t_{n-1}, t_n]}\|^2_{W_i} \\
& \overset{(a)}{=}\| \mathcal{G}x - \sum_{n \in \mathbb{Z}}x(\theta_n)\mathcal{G}1_{[t_{n-1}, t_n]}\|^2_{W_i} \\
& \overset{(b)}{\leq} \|  x -\sum_{n \in \mathbb{Z}}x(\theta_n)1_{[t_{n-1}, t_n]}\|^2_{W_i}\\
&\overset{(c)}{=} \| \sum_{n \in \mathbb{Z}}[x -x(\theta_n)]1_{[t_{n-1}, t_n]}\|^2_{W_i} \\
&\overset{(d)}{\leq} \int_{t_a}^{t_b}\abs{\sum_{n\in S_{W_i}} [x -x(\theta_n)]1_{[t_{n-1}, t_n]}} ^2 \\
&\overset{(e)}{=} \int_{t_a}^{t_b}\sum_{n\in S_{W_i}} \abs{x -x(\theta_n)}^21_{[t_{n-1}, t_n]} \\
& \overset{(f)}{=} \sum_{n\in S_{W_i}} \int_{t_{n-1}}^{t_n}\mid x(u) - x(\theta_n)\mid^2du\\
& \overset{(g)}{\leq} \sum_{n\in S_{W_i}} \frac{4}{\pi^2}(\theta_n-t_{n-1})^2\int_{t_{n-1}}^{\theta_n}\mid x'(u)\mid^2du\\
&\quad\quad + \frac{4}{\pi^2}(t_n-\theta_n)^2 \int_{\theta_n}^{t_n}\mid x'(u)\mid^2du,
\end{align*}
\fi
\hspace{-0.15cm}where, \blue{(a) follows since $x$ is bandlimited to $2\Omega$, and the operator $\mathcal{G}$ is defined in Appendix \ref{Appendix a}, Definition \ref{def: convolotion operator}. The function $g$, as given in Definition \ref{def: convolotion operator}, is the ideal low-pass filter with the same bandwidth. Consequently, the convolution $ \mathcal{G}x = (g * x)$ acts as an identity operator in this bandlimited space, resulting in  $\mathcal{G}x = x$.
(b) follows since $g$ is a smoothing low-pass filter, convolution with it attenuates high-frequency components, thereby reducing the norm. 
(c) follows from the definition of the norm over $W_i$, where the sum of the indicator functions $1_{[t_{n-1}, t_n]}$, over $n \in S_{W_i}$,  partitions the interval into disjoint segments, preserving the signal $x$ within the window $W_i$.
(d) follows from explicitly expressing the squared norm as an integral over the window $W_i$.
(e) follows by rewriting the squared absolute value term inside the summation. Since each term in the summation is multiplied by the indicator function $1_{[t_{n-1}, t_n]}$, which isolates contributions from non-overlapping time intervals, the squared absolute value can be applied individually to each term inside the summation.
(f) follows by interchanging the sum and the integral due to the linearity of integration. Additionally, the indicator function $1_{[t_{n-1}, t_n]}$ restricts the integration domain to disjoint intervals.
(g) holds by applying Wirtinger’s inequality \cite{lazar2004perfect}.} 

We do note that since for any $n \in S_{W_i}$
\begin{equation*}
(\theta_n-t_{n-1})^2 = (t_n-\theta_n)^2 =\frac{(t_n-t_{n-1})^2}{4},
\end{equation*}
we obtain
\ifshort
{\small
\begin{align*}
    & \| x-\mathcal{A^*}x\|^2_{W_i}  \\
    & \leq \sum_{n\in S_{W_i}}\frac{1}{\pi^2}(t_n-t_{n-1})^2\int_{t_{n-1}}^{t_n}\mid x'(u)\mid^2du\\
    & \leq \frac{1}{\pi^2}\left(\max_{n\in S_{W_i}} (T_n)\right)^2\sum_{n\in S_{W_i}} \int_{t_{n-1}}^{t_n}\mid x'(u)\mid^2du \\
    & =  \frac{1}{\pi^2}\left(\max_{n\in S_{W_i}} (T_n)\right)^2\| x' \|_{W_i}^2
     \overset{(h)}{\leq} \frac{1}{\pi^2}\left(\max_{n\in S_{W_i}}(T_n)\right)^2\Omega^2\| x\|_{W_i}^2, 
\end{align*}}
\else
\begin{align*}
    & \| x-\mathcal{A^*}x\|^2_{W_i}  \\
    & \leq \sum_{n\in S_{W_i}}\frac{1}{\pi^2}(t_n-t_{n-1})^2\int_{t_{n-1}}^{t_n}\mid x'(u)\mid^2du\\
    & \leq \frac{1}{\pi^2}\left(\max_{n\in S_{W_i}} (T_n)\right)^2\sum_{n\in S_{W_i}} \int_{t_{n-1}}^{t_n}\mid x'(u)\mid^2du \\
    & =  \frac{1}{\pi^2}\left(\max_{n\in S_{W_i}} (T_n)\right)^2\| x' \|_{W_i}^2\\
    & \overset{(h)}{\leq} \frac{1}{\pi^2}\left(\max_{n\in S_{W_i}}(T_n)\right)^2\Omega^2\| x\|_{W_i}^2, 
\end{align*}
\fi
where (h) holds by applying Windowed Bernstein's inequality as given on Lemma~\ref{lemma:Bernstein}.

Finally, substituting  $\max_{n\in S_{W_i}} (T_n)\leq r_{w_i}\frac{\pi}{\Omega}$, for $r_{w_i} = \max_{n \in  S_{W_i}}\{r_{a_n}\}$ and  $r_{a_n}$ as given in \eqref{r_a general}, we have
\begin{equation*}
\| x-\mathcal{A}x\|_{W_i} \ < r_{w_i}\|{x}\|_{W_i}.
\end{equation*}
This completed the lemma proof.
\end{proof}
\vspace{-0.5cm}
\subsection{Proof of Lemma~\ref{key_lemma2}}\label{proof:lem2}
\begin{proof}[Proof of Lemma~\ref{key_lemma2}]
The recovered signal for segment $W_i$, $x_{L_i}$, is defined by \eqref{x_l}, with $l=L_i$. We evaluate the norm difference between the original signal $x$ and the recovered signal $x_{L_i}$ over the segment $W_i$ as follows

\vspace{-0.3cm}
\ifshort
{\small
\begin{multline*}
\hspace{-0.4cm}\| x-x_{L_i}\|_{W_i} =\| \sum_{n \geq L_i+1}(I-\mathcal{A})^{n}\mathcal{A}x\|_{W_i}
= \| (I-\mathcal{A})^{L_i+1}\\\sum_{n \in \mathbb{N}}(I-\mathcal{A})^{n}\mathcal{A}x\|_{W_i}
= \| (I-\mathcal{A})^{L_i+1}\mathcal{A}^{-1}\mathcal{A}x\|_{W_i}\\
\quad = \| (I-\mathcal{A})\|_{W_i}^{L_i+1}\| x\|_{W_i}.
\end{multline*}}
\else
\begin{align*}
\| x-x_{L_i}\|_{W_i} & =\| \sum_{n \geq L_i+1}(I-\mathcal{A})^{n}\mathcal{A}x\|_{W_i} \\
&= \| (I-\mathcal{A})^{L_i+1}\sum_{n \in \mathbb{N}}(I-\mathcal{A})^{n}\mathcal{A}x\|_{W_i}\\
&= \| (I-\mathcal{A})^{L_i+1}\mathcal{A}^{-1}\mathcal{A}x\|_{W_i}\\
& = \| (I-\mathcal{A})\|_{W_i}^{L_i+1}\| x\|_{W_i}.
\end{align*}
\fi
This completed the lemma proof.
\end{proof}
\vspace{-0.5cm}
\subsection{Key Mathematical Tools}

 \subsubsection{Observation 1}\label{math_obs1} Assume the quantization error $d_n = (\tilde{T}_n - T_n)$ is a sequence of i.i.d. random variables in $[ -\Delta_i/2,  \Delta_i/2]$. Then $ \mathbb{E}[\epsilon_n\epsilon_m] =  (\frac{\kappa\delta}{T_n})^2 \frac{\Delta_i^2}{12}\delta_{n,m}$.
\begin{proof}
\ifshort
{\small
\begin{align*}
     \epsilon_n & =  (\kappa \delta -b_n\tilde{T}_n) - \int_{\tilde{t}_{n-1}}^{\tilde{t}_{n}} x(u)du\\
     & = \kappa \delta -b_nT_n + b_nT_n -b_n\tilde{T}_n - \int_{\tilde{t}_{n-1}}^{\tilde{t}_{n}} x(u)du\\
     &=\int_{t_{n-1}}^{t_{n}} x(u)du - \int_{\tilde{t}_{n-1}}^{\tilde{t}_{n}} x(u)du -b_n(\tilde{T}_n - T_n) \\
     & \overset{a}{=} x(\zeta_n)T_n - x(\hat{\zeta}_n)\tilde{T}_n-b_n(\tilde{T}_n - T_n) \\
     & \overset{b}{\simeq} (-x(\zeta_n)-b_n)(\tilde{T}_n - T_n)
      \overset{c}{=}  -\frac{\kappa\delta}{T_n}(\tilde{T}_n - T_n),
\end{align*}}
\else
\begin{align*}
     \epsilon_n & =  (\kappa \delta -b_n\tilde{T}_n) - \int_{\tilde{t}_{n-1}}^{\tilde{t}_{n}} x(u)du\\
     & = \kappa \delta -b_nT_n + b_nT_n -b_n\tilde{T}_n - \int_{\tilde{t}_{n-1}}^{\tilde{t}_{n}} x(u)du\\
     &=\int_{t_{n-1}}^{t_{n}} x(u)du - \int_{\tilde{t}_{n-1}}^{\tilde{t}_{n}} x(u)du -b_n(\tilde{T}_n - T_n) \\
     & \overset{a}{=} x(\zeta_n)T_n - x(\hat{\zeta}_n)\tilde{T}_n-b_n(\tilde{T}_n - T_n) \\
     & \overset{b}{\simeq} (-x(\zeta_n)-b_n)(\tilde{T}_n - T_n)\\
     & \overset{c}{=}  -\frac{\kappa\delta}{T_n}(\tilde{T}_n - T_n).
\end{align*}
\fi
\hspace{-0.2cm}where (a) follows from the mean value theorem. (b) follows from the approximation used in \cite[Appendix C]{lazar2004perfect}. Where  $\zeta_n \in (t_{n-1}, t_{n})$ and $\hat{\zeta}_n \in (\tilde{t}_{n-1}, \tilde{t}_n)$, for sufficiently small $\Delta_i$, we get $\hat{\zeta}_n \simeq \zeta_n $. (c) follows because $x(\zeta_n) =\frac{1}{T_n}\int_{t_{n-1}}^{t_n} x(u)du = \frac{\kappa\delta}{T_n}-b_n$.

Let $d_n =(\tilde{T}_n - T_n)$, this leads to
\[
\mathbb{E}[\epsilon_n\epsilon_m] = \frac{\kappa\delta}{T_n}\frac{\kappa\delta}{T_m}\mathbb{E}[d_nd_m] \overset{(a)}{=} \left(\frac{\kappa\delta}{T_n}\right)^2 \frac{\Delta_i^2}{12}\delta_{n,m},
\]
where (a) follows since the quantization error $d_n = (\tilde{T}_n - T_n)$ is a sequence of i.i.d. random variables uniformly distributed in $[ -\Delta_i/2,  \Delta_i/2]$. Therefore, $\mathbb{E}[d_nd_m] = \frac{\Delta_i^2}{12}\delta_{n,m}$.
\end{proof}
\off{
\subsubsection{Observation 2}: For dynamic quantization with quantization error $d_n =(\tilde{T}_n - T_n)$ is a sequence of i.i.d. random variables on $[ -\Delta_i/2,  \Delta_i/2]$ 
$ \mathbb{E}[\epsilon_n\epsilon_m] =  (\frac{\kappa\delta}{T_n})^2 \frac{\Delta_i^2}{12}\delta_{n,m}$
}
\subsubsection{Observation 2}\label{math_obs2} The integral $\int_{t_{\text{start}}}^{t_{\text{end}}}g^2(t-\eta)dt$, for any $\eta$,  over the interval $\mathcal{T} = [t_{\text{start}}, t_{\text{end}}]$ is upper bounded by $\frac{\Omega}{\pi}$.
 \begin{proof}
 \ifshort
 {\small
    \begin{align*}
    \int_{t_{\text{start}}}^{t_{\text{end}}} g^2(t - \eta)dt & \overset{(a)}{\leq} \int_{-\infty}^{\infty} g^2(t - \eta)dt\\
    &\overset{(b)}{=} \frac{1}{2\pi} \int_{-\infty}^{\infty} 1_{[-\Omega,\Omega]}dw =\frac{\Omega}{\pi},
    \end{align*}}
 \else
    \begin{align*}
    \int_{t_{\text{start}}}^{t_{\text{end}}} g^2(t - \eta)dt & \overset{(a)}{\leq} \int_{-\infty}^{\infty} g^2(t - \eta)dt\\
    &\overset{(b)}{=} \frac{1}{2\pi} \int_{-\infty}^{\infty} 1_{[-\Omega,\Omega]}dw =\frac{\Omega}{\pi},
    \end{align*}
\fi
\hspace{-0.2cm}where (a) follows from the fact that $g^2$ is a non-negative function, and equality (b) is derived by applying Parseval's theorem. Note that if $\eta \in \mathcal{T}$, then this bound is tighter.
 \end{proof}
\subsubsection{Observation 3}\label{math_obs3}: For any positive numbers $\{T_i\}_{i=1}^{L_i}$
 \[
 \frac{1}{\overline{T}_n}  \overline{\left(\frac{1}{T_n}\right)^2} \leq \overline{\left(\frac{1}{T_n}\right)^3}
 \]
 \begin{proof}
     Note that
  \[
    \Bigl(\sum_{i=1}^L T_i\Bigr)\Bigl(\sum_{j=1}^L \tfrac{1}{T_j^3}\Bigr)
    \;=\;
    \sum_{i=1}^L \sum_{j=1}^L \frac{T_i}{T_j^3}
    \;\;\ge\;\;
    \sum_{i=1}^L \frac{T_i}{T_i^3}
    \;=\;
    \sum_{i=1}^L \frac{1}{T_i^2},
  \]
  the inequality is because each inner sum over $j$ contains the term $j = i$. Dividing by
  $(\sum_{i=1}^L T_i)\,L_i$ completes the proof. 
 \end{proof}
\fi

\section{Conclusion}
\off{
This paper introduces a novel Adaptive Integrate-and-Fire Time Encoding Machine (AIF-TEM) designed to dynamically adjust to changes in the amplitude and frequency of input signals. Moreover, it provides an analytical analysis of the oversampling, and sampling and quantization rate-distortion.
The adaptive nature of AIF-TEM allows it to outperform traditional IF-TEM and periodic sampling methods, particularly in scenarios with varying signal characteristics. Our extensive analysis and empirical evaluations demonstrate that AIF-TEM achieves significant improvements in mean square error (MSE) and oversampling rates, highlighting its potential for practical applications in signal processing.

The performance of the MAP block, which estimates the \textcolor{red}{temporal} amplitude, is crucial for adaptive IF-TEM. Even a simple MAP block implementation shows good performance for AIF-TEM. Future improvements in implementations and estimators could further enhance the performance and optimization of AIF-TEM.
AIF-TEM offers a robust framework for adaptive sampling and quantization, paving the way for more efficient analog-to-digital conversion methods. As technology advances, AIF-TEM's adaptive approach could significantly impact next-generation signal processing systems.
}

\textcolor{black}{This paper introduces an Adaptive Integrate-and-Fire Time Encoding Machine (AIF-TEM) that extends classical IF-TEM by dynamically adapting the bias according to temporal signal amplitude variations. The proposed design addresses the oversampling inefficiency of fixed-bias IF-TEM, where the bias is independent of the temporal signal amplitude\off{envelope} and may lead to\off{ redundant firing activity} excessive sampling density when the input amplitude is low.
We established analytical conditions for correct adaptive operation, including the MAP condition (i.e., maintaining the adaptive bias slightly above, yet close to, the temporal maximum amplitude of the input signal) and the segment-wise Nyquist constraint (i.e., maximum Nyquist ratio smaller than one), ensuring stable encoding and reconstruction. Closed-form bounds were derived for reconstruction distortion and quantization error, characterizing their dependence on the adaptive bias, sampling density, and segment length.
The analysis and simulations demonstrate that the proposed adaptive strategy significantly reduces oversampling rate, mean-squared reconstruction error, and total bit usage compared to fixed-bias IF-TEM and periodic sampling under identical signal conditions. These results show that a simple causal MAP implementation is sufficient to achieve consistent rate–distortion improvements while preserving reconstruction guarantees for bandlimited inputs.
Future work will consider alternative bias estimation strategies and extensions to broader signal classes and hardware-aware implementations, including mechanisms that further enforce the correct MAP operation condition under model uncertainty or transient estimation errors.}

\bibliographystyle{IEEEtran}
\bibliography{Ref_short}

\ifpagelimit
\appendix
\ifpagelimit\else
\appendices
\fi

\ifpagelimit
\subsection{Definitions}\label{Appendix a}
\begin{definition}\label{def: convolotion operator}
     The operator $\mathcal{G}$ maps an arbitrary function $x$ into a bandlimited function via $\mathcal{G}x = (g * x)$, where $*$ denotes convolution and $g(t) = \sin(\Omega t)/\pi t$, known as the sinc function.
\end{definition}

\begin{definition}
     The operator $\mathcal{A}^*$ is defined by 
    \begin{equation}\label{adjoint operator}
         \mathcal{A^*}x = \sum_{n \in \mathbb{Z}}x(\theta_n)\mathcal{G}1_{[t_{n-1}, t_n]},
    \end{equation}
    where $\mathcal{G}1_{[t_{n-1}, t_n]}$ applies the operator $\mathcal{G}$ to a pulse function defined over the interval $[t_{n-1}, t_n]$.
    \end{definition}
\begin{definition}\label{norm}
    Let $f: \mathbb{R} \to \mathbb{R}$ and and let $\mathcal{T}$ be a general time window defined by $\mathcal{T} = [t_{\text{start}}, t_{\text{end}}]$, with $t_{\text{start}}$ and $t_{\text{end}}$ being the start and end points of the window, respectively. The norm $\| f \|^2_{\mathcal{T}}$ is given by
\begin{equation*}
    \| f \|^2_{\mathcal{T}} = \int_{t_{\text{start}}}^{t_{\text{end}}} \abs{f(u)}^2du.
\end{equation*}
\end{definition}

\begin{definition}\label{product}
    Let $f: \mathbb{R} \to \mathbb{R}, g: \mathbb{R} \to \mathbb{R}$, the inner product $\langle f,g\rangle_{\mathcal{T}}$ is defined by 
    \begin{equation*}
    \langle f,g\rangle_{\mathcal{T}} = \int_{t_{\text{start}}}^{t_{\text{end}}} f(u)g(u)du.
\end{equation*}
\end{definition}
\vspace{-0.5cm}

\subsection{Windowed Bernstein Inequality with Leakage}
\off{

\begin{lemma}[Windowed Bernstein inequality with leakage]\label{lemma:Bernstein}
Let $f:\mathbb{R}\to\mathbb{R}$ be $\Omega$-bandlimited, i.e.,
$F(\omega)=\mathcal{F}\{f\}(\omega)=0$ for $|\omega|>\Omega$.
For any finite window $\mathcal{T}=[t_s,t_e]$, the following inequality holds:
\begin{equation}\label{eq:winBern_leak}
\|f'\|_{\mathcal{T}}^2 \;\le\; \Omega^2\|f\|_{\mathcal{T}}^2 \;+\; \mathcal{L}_{\mathcal{T}},
\end{equation}
where $\|x\|_{\mathcal{T}}^2 \triangleq \int_{\mathcal{T}} |x(t)|^2\,dt$
(Definition~\ref{norm}), where the (nonnegative) leakage term is
\begin{equation}
L_{\varphi} \triangleq \frac{1}{2\pi}\int_{|\omega|>\Omega}\left(\omega^2-\Omega^2\right)\,|G(\omega)|^2\,d\omega \newblue{\;+\;\|f\,\varphi'\|_2^2} \;\ge\;0.
\end{equation}
\end{lemma}

\begin{proof}
\off{Let $1_{\mathcal{T}}(t)$ denote the indicator of the window $\mathcal{T}=[t_s,t_e]$,
\begin{equation}\label{diff:win}
1_{\mathcal{T}}(t)=
\begin{cases}
1, & t\in \mathcal{T},\\
0, & \text{otherwise}.
\end{cases}
\end{equation}}
Let $\varphi\in C^1(\mathbb{R})$
be a real-valued taper such that $0\le \varphi(t)\le 1$, $\varphi(t)=1$ for
$t\in\mathcal{T}=[t_s,t_e]$, and $\mathrm{supp}(\varphi)\subseteq [t_s-\Delta,t_e+\Delta]$
for some $\Delta>0$. 
Define the time-limited signal $g(t)\triangleq f(t)\,\varphi(t)$ and let
$G(\omega)=\mathcal{F}\{g\}(\omega)$.

Since windowing and differentiation do not commute, in the sense of distributions,
\off{
\[
g'(t)= f'(t)\,1_{\mathcal{T}}(t)
+ f(t_s)\delta(t-t_s)-f(t_e)\delta(t-t_e).
\]
}
\[
g'(t)=f'(t)\varphi(t)+f(t)\varphi'(t)
\]

The proof proceeds as follows:
{\small
\begin{align*}
& \| f'\|_{\mathcal{T}}^2
\overset{(a)}{=} \| f'\cdot 1_{\mathcal{T}}(t)\|_2^2 \\
&\overset{(b)}{\le} \| g'\|_2^2 + |f(t_s)|^2 + |f(t_e)|^2 \\
&\overset{(c)}{=} \frac{1}{2\pi}\int_{-\infty}^{\infty}\omega^2|G(\omega)|^2\,d\omega
+ \Lambda_{\mathcal{T}} \\
&\overset{(d)}{=} \frac{1}{2\pi}\!\!\int_{|\omega|\le \Omega}\!\!\omega^2|G(\omega)|^2\,d\omega
+ \frac{1}{2\pi}\!\!\int_{|\omega|>\Omega}\!\!\omega^2|G(\omega)|^2\,d\omega
+ \Lambda_{\mathcal{T}} \\
&\overset{(e)}{\le}
\Omega^2\frac{1}{2\pi}\!\!\int_{|\omega|\le \Omega}\!\!|G(\omega)|^2\,d\omega
+ \frac{1}{2\pi}\!\!\int_{|\omega|>\Omega}\!\!\omega^2|G(\omega)|^2\,d\omega
+ \Lambda_{\mathcal{T}} \\
&\overset{(f)}{\le}
\frac{\Omega^2}{2\pi}\!\!\int_{-\infty}^{\infty}\!\!|G(\omega)|^2\,d\omega
+ \frac{1}{2\pi}\!\!\int_{|\omega|>\Omega}\!\!(\omega^2-\Omega^2)|G(\omega)|^2\,d\omega
+ \Lambda_{\mathcal{T}} \\
&\overset{(g)}{=} \Omega^2\|f\|_{\mathcal{T}}^2 + \mathcal{L}_{\mathcal{T}}.
\end{align*}}

\hspace{-0.19cm}Here, (a) follows from the definition of the windowed norm.
Step (b) follows from the distributional identity for $g'(t)$ and upper bounding
cross terms, with $\Lambda_{\mathcal{T}}\triangleq |f(t_s)|^2+|f(t_e)|^2$. 
Step (c) follows from Parseval's identity and the property
$\mathcal{F}\{g'\}(\omega)=j\omega G(\omega)$.
Step (d) splits the frequency integral into in-band and out-of-band regions.
Step (e) uses $\omega^2\le\Omega^2$ for $|\omega|\le\Omega$.
Step (f) rearranges terms and isolates the excess out-of-band contribution.
Finally, (g) uses Parseval's identity $\|g\|_2^2=\|f\|_{\mathcal{T}}^2$ and defines
\[
\mathcal{L}_{\mathcal{T}}
\triangleq
\Lambda_{\mathcal{T}}
+\frac{1}{2\pi}\int_{|\omega|>\Omega}(\omega^2-\Omega^2)|G(\omega)|^2\,d\omega.
\]

\end{proof}

\textcolor{red}{Key problem: $g'(t)$ contains Dirac deltas, so  $\|g'\|_2$ is not defined}

\textcolor{red}{Option 1:}
\newblue{Example for window definition:
Let $\varphi\in C^1(\mathbb{R})$
be a real-valued taper such that $0\le \varphi(t)\le 1$, $\varphi(t)=1$ for
$t\in\mathcal{T}=[t_s,t_e]$, and $\mathrm{supp}(\varphi)\subseteq [t_s-\Delta,t_e+\Delta]$
for some $\Delta>0$.}
\begin{lemma}[Windowed Bernstein inequality with leakage (smooth window)]
\label{lem:bernstein_smooth}
Let $f:\mathbb{R}\to\mathbb{R}$ be $\Omega$-bandlimited, i.e., $F(\omega)=\mathcal{F}\{f\}(\omega)=0$ for $|\omega|>\Omega$.
Let $\varphi\in C_c^1(\mathbb{R})$ be a real-valued window with $0\le \varphi(t)\le 1$ and $\mathrm{supp}(\varphi)\subseteq T=[t_s,t_e]$.
Define $g(t)\triangleq f(t)\varphi(t)$ and $G(\omega)=\mathcal{F}\{g\}(\omega)$.
Then
\begin{equation}
\|f'\varphi\|_2^2 \;\le\; \Omega^2 \|f\varphi\|_2^2 \;+\; L_\varphi,
\end{equation}
where the (nonnegative) leakage term is
\begin{equation}
L_\varphi \triangleq \frac{1}{2\pi}\int_{|\omega|>\Omega}\left(\omega^2-\Omega^2\right)\,|G(\omega)|^2\,d\omega \newblue{\;+\;\|f\,\varphi'\|_2^2} \;\ge\;0.
\end{equation}
\end{lemma}

\begin{proof}
Since $\varphi\in C_c^1$, we have $g=f\varphi\in H^1(\mathbb{R})$ and $g'\in L_2(\mathbb{R})$ with
\[
g'(t)=f'(t)\varphi(t)+f(t)\varphi'(t)\quad\text{(pointwise a.e.)}.
\]
Hence
\begin{equation}\label{eq:bern_a1}
\|f'\varphi\|_2^2 \le \|g'\|_2^2,
\end{equation}
\newblue{ from $g'=f'\varphi+f\varphi'$,
\[
\|f'\varphi\|_2^2 \le \|g'\|_2^2 + \|f\varphi'\|_2^2,
\]
(using $\|a\|_2^2=\|g'-b\|_2^2 \le \|g'\|_2^2+\|b\|_2^2$ with $b=f\varphi'$).}
by dropping the nonnegative term $\|f\varphi'\|_2^2$ and cross terms via the inequality
$\|a\|_2^2 \le \|a+b\|_2^2$ for $b$ orthogonal to $a$ is not guaranteed; therefore we simply use
$\|a\|_2 \le \|a+b\|_2 + \|b\|_2$ and absorb $\|f\varphi'\|_2$ into the leakage if desired\newblue{(I agree with this we should observe $\|f\varphi'\|_2$ to the leakage term)}.
Now Parseval applies because $g'\in L_2$:
\[
\|g'\|_2^2=\frac{1}{2\pi}\int_{-\infty}^{\infty}\omega^2|G(\omega)|^2\,d\omega.
\]
Split the integral into $|\omega|\le\Omega$ and $|\omega|>\Omega$,
use $\omega^2\le \Omega^2$ on $|\omega|\le\Omega$, and rearrange:
\[
\|g'\|_2^2 \le \Omega^2\frac{1}{2\pi}\int_{-\infty}^{\infty}|G(\omega)|^2\,d\omega
+\frac{1}{2\pi}\int_{|\omega|>\Omega}(\omega^2-\Omega^2)|G(\omega)|^2\,d\omega.
\]
Finally, Parseval gives $\|g\|_2^2=\frac{1}{2\pi}\int |G(\omega)|^2 d\omega$, so
\[
\|g'\|_2^2 \le \Omega^2\|g\|_2^2 + L_\varphi = \Omega^2\|f\varphi\|_2^2 + L_\varphi.
\]
Combining with \eqref{eq:bern_a1} yields the claim.
\end{proof}

\textcolor{red}{Option 2:} 

\begin{lemma}[Windowed Bernstein inequality with leakage (hard window; regularized)]
\label{lem:bernstein_hard}
Let $f\in PW_\Omega\cap L_2(\mathbb{R})$ and let $T=[t_s,t_e]$.
For $\epsilon>0$, define the smoothed window $\varphi_\epsilon = 1_T * \rho_\epsilon$,
where $\rho_\epsilon$ is a standard $C_c^\infty$ mollifier, and set $g_\epsilon=f\varphi_\epsilon$.
Then for every $\epsilon>0$,
\begin{equation}
\|f'\varphi_\epsilon\|_2^2 \le \Omega^2\|f\varphi_\epsilon\|_2^2
+ \frac{1}{2\pi}\int_{|\omega|>\Omega}(\omega^2-\Omega^2)|G_\epsilon(\omega)|^2\,d\omega,
\end{equation}
where $G_\epsilon=\mathcal{F}\{g_\epsilon\}$.
Moreover, as $\epsilon\downarrow 0$, $\|f\varphi_\epsilon\|_2^2\to \|f\|_T^2$ and
$\|f'\varphi_\epsilon\|_2^2\to \|f'\|_T^2$.
Hence the inequality passes to the limit with the leakage term
\begin{equation}
L_T \triangleq \liminf_{\epsilon\downarrow 0}\;
\frac{1}{2\pi}\int_{|\omega|>\Omega}(\omega^2-\Omega^2)|G_\epsilon(\omega)|^2\,d\omega \in [0,+\infty].
\end{equation}
\end{lemma}
\textcolor{red}{Option 3:} }



\begin{lemma}[Windowed Bernstein inequality with leakage]\label{lemma:Bernstein}
\newblue{Let $f:\mathbb{R}\to\mathbb{R}$ be with frequency support $[-\Omega, \Omega]$\off{$\textcolor{black}{2\Omega}$-bandlimited} and assume
$f\in L^2(\mathbb{R})$ (finite energy).
Let $\mathcal T=[t_s,t_e]$ be a finite time window and define the enlarged window
$\mathcal T_\Delta=[t_s-\Delta,t_e+\Delta]$ for some $\Delta>0$.
Let $\varphi\in C^1(\mathbb{R})$ be a real-valued taper, i.e., a
\emph{continuously differentiable} function, such that
$0\le \varphi(t)\le 1$, $\varphi(t)=1$ for all $t\in\mathcal T$, and
$\varphi(t)=0$ for all $t\notin\mathcal T_\Delta$.
Define the tapered signal $g(t)\triangleq f(t)\varphi(t)$ and
$G(\omega)=\mathcal{F}\{g\}(\omega)$, and let
$\mathcal R_\Delta\triangleq \mathcal T_\Delta\setminus \mathcal T$.
Then
\begin{equation}\label{eq:winBern_leak}
\|f'\|_{\mathcal{T}}^2
\;\le\;
\textcolor{black}{\Omega}^2\|f\|_{\mathcal{T}}^2
\;+\;
\mathcal{J}_{\varphi},
\end{equation}
where the (nonnegative) leakage term is
\begin{equation}\label{eq:leakage}
\begin{aligned}
\mathcal{J}_{\varphi}
\triangleq\;&
\textcolor{black}{\Omega}^2\|f\|_{\mathcal R_\Delta}^2
+\frac{1}{2\pi}\int_{|\omega|>\textcolor{black}{\Omega}}
\left(\omega^2-\textcolor{black}{\Omega}^2\right)|G(\omega)|^2\,d\omega \\
&\quad
+\|f\,\varphi'\|_2^2
+2\big|\langle g',\,f\varphi'\rangle\big|
\;\ge\;0,
\end{aligned}
\end{equation}
and $\langle \cdot,\cdot\rangle$ denotes the $L^2(\mathbb R)$ inner product.}
\end{lemma}
\newblue{In the supplementary material of this work, we provide a discussion on the leakage term and tapering functions in the finite regime considered in signal processing literature.}
\begin{proof}
\newblue{Since $\varphi(t)=0$ for all $t\notin\mathcal T_\Delta$ and $f\in L^2(\mathbb{R})$,
we have $g\in L^2(\mathbb R)$. Moreover, since $\varphi\in C^1(\mathbb R)$,
\[
g'(t)=f'(t)\varphi(t)+f(t)\varphi'(t),
\]
and $g'\in L^2(\mathbb R)$, so Parseval's identity applies and
$\mathcal{F}\{g'\}(\omega)=j\omega G(\omega)$ in $L^2$.
Let $\Lambda \triangleq \| f\,\varphi' \|_2^2 + 2\big|\langle g',\,f\varphi'\rangle\big|$.
The proof proceeds as follows}
{\small
\newblue{
\begin{align*}
&\| f'\|_{\mathcal{T}}^2
\overset{(a)}{=} \| f'\cdot 1_{\mathcal{T}}\|_2^2 \overset{(b)}{\le} \| f'\varphi \|_2^2 \\
&\overset{(c)}{=} \|g' - f\varphi'\|_2^2 \\
&\overset{(d)}{=} \|g'\|_2^2 + \|f\varphi'\|_2^2 - 2\Re\langle g', f\varphi'\rangle \\
&\overset{(e)}{\le} \|g'\|_2^2 + \|f\varphi'\|_2^2 + 2\big|\langle g', f\varphi'\rangle\big| \\
&\overset{(f)}{=} \|g'\|_2^2 + \Lambda \overset{(g)}{=} \frac{1}{2\pi}\int_{-\infty}^{\infty}\omega^2|G(\omega)|^2\,d\omega + \Lambda \\
&\overset{(h)}{=} \frac{1}{2\pi}\!\!\int_{|\omega|\le \textcolor{black}{\Omega}}\!\!\omega^2|G(\omega)|^2\,d\omega
+ \frac{1}{2\pi}\!\!\int_{|\omega|> \textcolor{black}{\Omega}}\!\!\omega^2|G(\omega)|^2\,d\omega + \Lambda \\
&\overset{(i)}{\le}
\textcolor{black}{\Omega}^2\frac{1}{2\pi}\!\!\int_{|\omega|\le \textcolor{black}{\Omega}}\!\!|G(\omega)|^2\,d\omega
+ \frac{1}{2\pi}\!\!\int_{|\omega|>\textcolor{black}{\Omega}}\!\!\omega^2|G(\omega)|^2\,d\omega + \Lambda \\
&\overset{(j)}{=}
\textcolor{black}{\Omega}^2\frac{1}{2\pi}\int_{-\infty}^{\infty}|G(\omega)|^2\,d\omega
+\frac{1}{2\pi}\int_{|\omega|>\textcolor{black}{\Omega}}\big(\omega^2-\textcolor{black}{\Omega}^2\big)|G(\omega)|^2\,d\omega
+\Lambda \\
&\overset{(k)}{=}
\textcolor{black}{\Omega}^2\|g\|_2^2
+\frac{1}{2\pi}\int_{|\omega|>\textcolor{black}{\Omega}}\big(\omega^2-\textcolor{black}{\Omega}^2\big)|G(\omega)|^2\,d\omega
+\Lambda\\
&\overset{(l)}{\le}
\textcolor{black}{\Omega}^2\|f\|_{\mathcal T}^2+{\Omega}^2\|f\|_{\mathcal R_\Delta}^2
+\frac{1}{2\pi}\int_{|\omega|>\textcolor{black}{\Omega}}\big(\omega^2-\textcolor{black}{\Omega}^2\big)|G(\omega)|^2\,d\omega
+\Lambda \\
&\overset{(m)}{=} \textcolor{black}{\Omega}^2\|f\|_{\mathcal T}^2 + \mathcal{J}_\varphi.
\end{align*}}}
\newblue{where (a) follows from the definition of the windowed norm, and
(b) since $\varphi(t)=1$ for all $t\in\mathcal T$ and $0\le \varphi(t)\le 1$ elsewhere,
we have
\[
\begin{aligned}
\|f'\|_{\mathcal T}^2
&=\int_{\mathcal T}|f'(t)|^2\,dt =\int_{\mathcal T}|f'(t)\varphi(t)|^2\,dt \\
&\le \int_{\mathbb R}|f'(t)\varphi(t)|^2\,dt =\|f'\varphi\|_2^2 .
\end{aligned}
\]
(c) follows since using $g'=f'\varphi+f\varphi'$, we get $f'\varphi=g'-f\varphi'$ and thus
$\|f'\varphi\|_2^2=\|g'-f\varphi'\|_2^2$. \;
(d) expanding the squared norm via the $L^2$ inner product yields
$\|u-v\|_2^2=\|u\|_2^2+\|v\|_2^2-2\Re\langle u,v\rangle$ with $u=g'$ and $v=f\varphi'$. 
(e) uses $-\Re z \le |z|$ for $z\in\mathbb C$, hence
$-2\Re\langle g',f\varphi'\rangle \le 2|\langle g',f\varphi'\rangle|$. 
(f) by the definition of $\Lambda$. 
(g) by Parseval and $\mathcal{F}\{g'\}(\omega)=j\omega G(\omega)$.
(h) splitting the integral into in-band and out-of-band parts. 
(i) follows since $|\omega|\le\textcolor{black}{\Omega}$.
(j) is since $\int_{|\omega|<\textcolor{black}{\Omega}}\textcolor{black}{\Omega}^2|G|^2 =  \int_{-\infty}^{\infty}\textcolor{black}{\Omega}^2|G|^2 -\int_{|\omega|>\textcolor{black}{\Omega}}\textcolor{black}{\Omega}^2|G|^2 $.
(k) by Parseval.
(l) using $\varphi=1$ on $\mathcal T$, $0\le \varphi\le 1$, and
$\varphi(t)=0$ for all $t\notin\mathcal T_\Delta$, we have
\[
\begin{aligned}
\|g\|_2^2
&=\|f\varphi\|_2^2
=\int_{\mathcal T}|f(t)|^2\,dt
  +\int_{\mathcal R_\Delta}|f(t)|^2\varphi(t)^2\,dt \\
&\le \|f\|_{\mathcal T}^2+\|f\|_{\mathcal R_\Delta}^2.
\end{aligned}
\]
and substituting this bound for $\|g\|_2^2$ into (k) gives (l). \;
(m) by the definition of $\mathcal{J}_\varphi$ in \eqref{eq:leakage} and
$\Lambda=\|f\varphi'\|_2^2+2|\langle g',f\varphi'\rangle|$.}
\end{proof}

\off{
\subsection{Windowed Bernstein's inequality}
\begin{lemma}[Windowed Bernstein's inequality]\label{lemma:Bernstein}
For a function  $f: \mathbb{R} \to \mathbb{R}$, bandlimited to $[-\Omega, \Omega]$, the following inequality holds within a given window $\mathcal{T}$
\[
\left\|  \frac{df}{dt} \right\| _{\mathcal{T}} \leq \Omega\| f\|_{\mathcal{T}},
\]
where the norm $\| f \|^2_{\mathcal{T}}$ is defined as per Definition~\ref{norm}.

\end{lemma}
\begin{proof}
First, we define a unit pulse function over the window $\mathcal{T}$, with length $|\mathcal{T}|=t_{\text{end}}-t_{\text{start}}$, as follows
\begin{equation}\label{diff:win}
 1_{\mathcal{T}}(t) =
\begin{cases}
1 & \text{if } t \in \mathcal{T} \\
0 & \text{otherwise}
\end{cases}.   
\end{equation}
Considering $f'$ as the derivative of $f$ and $\mathcal{F}\{\cdot\}$ as the Fourier transform, the proof unfolds with the following steps

\vspace{-0.4cm}
\ifshort
{\small
\begin{align*}
     & \| f'\|_{\mathcal{T}}^2 \overset{(a)}{=} \| f'\cdot1_{\mathcal{T}}(t)\|^2
    \overset{(b)}{=} \frac{1}{2\pi}\| \mathcal{F}\{f'\}
    *  \mathcal{F}\{1_{\mathcal{T}}(t)\}\|^2 \\
    & \overset{(c)}{=} \frac{1}{2\pi}\| jw\mathcal{F}\{f\}
    *  \mathcal{F}\{1_{\mathcal{T}}(t)\}\|^2
     \overset{(d)}{\leq} \frac{\Omega^2}{2\pi}\\
     &\| \mathcal{F}\{f\}
    *  \mathcal{F}\{1_{\mathcal{T}}(t)\}\|^2
     \overset{(e)}{=} \Omega^2 \| f \cdot 1_{\mathcal{T}}(t) \|^2
     = \Omega^2 \| f \|_{\mathcal{T}}^2,
\end{align*}}
\else
\begin{align*}
     \| f'\|_{\mathcal{T}}^2 & \overset{(a)}{=} \| f'\cdot1_{\mathcal{T}}(t)\|^2\\
    & \overset{(b)}{=} \frac{1}{2\pi}\| \mathcal{F}\{f'\}
    *  \mathcal{F}\{1_{\mathcal{T}}(t)\}\|^2 \\
    & \overset{(c)}{=} \frac{1}{2\pi}\| jw\mathcal{F}\{f\}
    *  \mathcal{F}\{1_{\mathcal{T}}(t)\}\|^2 \\
    &\overset{(d)}{\leq} \frac{\Omega^2}{2\pi}\| \mathcal{F}\{f\}
    *  \mathcal{F}\{1_{\mathcal{T}}(t)\}\|^2 \\
    & \overset{(e)}{=} \Omega^2 \| f \cdot 1_{\mathcal{T}}(t) \|^2 \\
    & = \Omega^2 \| f \|_{\mathcal{T}}^2,
\end{align*}
\fi
\hspace{-0.19cm}where (a) follows from expanding the norm over the entire real line and restricting the derivative $f'$ to the window $\mathcal{T}$ using the pulse function $1_{\mathcal{T}}(t)$. (b) holds by applying Parseval's Theorem \cite{papoulis1967limits}, considering that time-domain multiplication corresponds to convolution in the frequency domain. (c) follows from the property that the Fourier transform of $f'$ is $j\omega \mathcal{F}{f}$. (d) is given since $f(t)$ is bandlimited within $[-\Omega, \Omega]$, \blue{so we have $\abs{\omega} \leq \Omega$. By the convolution property in the frequency domain, define $G(\omega,\tau) = \mathcal{F}\{f\}(\tau) \mathcal{F}\{1_{\mathcal{T}(t)}\}(\omega - \tau)$. Taking the squared magnitude (i.e., the squared norm in the frequency domain), we obtain
\begin{multline*}
\small
    \abs{jw\mathcal{F}\{f\} *  \mathcal{F}\{1_{\mathcal{T}}(t)\}}^2\\= \abs{\int_{-\infty}^{\infty} j\tau G(w,\tau)d\tau}^2 = \abs{\int_{-\infty}^{\infty} \tau G(w,\tau)d\tau}^2.
\end{multline*}}
\blue{Note that the factor $j$ has $\abs{j}=1$ and hence does not affect the magnitude. Since $f(t)$ is bandlimited, the integration variable $\tau$ in the convolution is confined to $\abs{\tau} \leq \Omega$. Thus, the integrand can be directly bounded by $\Omega$, allowing us to factor out $\Omega^2$ from the norm.} (e) results from reversing the convolution-multiplication relationship, transitioning back to the time domain.
This completes the proof of the lemma.
\end{proof}
}

\vspace{-0.5cm}
\subsection{The adjoint operator}
\begin{lemma}\label{lemma: adjoint in window}
Recall operators $\mathcal{A}$ and $\mathcal{A}^*$ as defined in \eqref{Operator A} \eqref{adjoint operator}, respectively\textcolor{black}{, using the samples in $S_{W_i}$. For $2\Omega$-bandlimited signals $x,y \in L_2(\mathbb{R})$, the operators $\mathcal{A}$ and $\mathcal{A}^*$} satisfy
\[
\langle \mathcal{A}x, y \rangle = \langle x, \mathcal{A}^*y \rangle,
\]
\textcolor{black}{where $\langle\cdot,\cdot\rangle$ denotes the standard $L_2(\mathbb{R})$ inner product.}
\end{lemma}

\begin{proof}
\vspace{-0.2cm}
\ifshort
{\small
\begin{align*}
&\langle \mathcal{A}x, y \rangle = \left\langle \sum_{n\in S_{W_i}} \int_{t_{n-1}}^{t_n} x(u)du \, \textcolor{black}{g(\cdot - \theta_n)}, y \right\rangle \\
&\stackrel{(a)}{=} \sum_{n\in S_{W_i}} \int_{t_{n-1}}^{t_n} x(u)du \langle \textcolor{black}{g(\cdot - \theta_n)}, y \rangle \\
&\stackrel{(b)}{=} \sum_{n\in S_{W_i}} \langle x, \mathbf{1}_{[t_{n-1},t_n]} \rangle y(\theta_n) 
\stackrel{(c)}{=} \left\langle x, \sum_{n\in S_{W_i}} y(\theta_n) \mathbf{1}_{[t_{n-1},t_n]} \right\rangle \\
&\stackrel{(d)}{=} \left\langle \mathcal{G}x, \sum_{n\in S_{W_i}} y(\theta_n) \mathbf{1}_{[t_{n-1},t_n]} \right\rangle \\
&\stackrel{(e)}{=} \left\langle x, \sum_{n\in S_{W_i}} y(\theta_n) \mathcal{G}\mathbf{1}_{[t_{n-1},t_n]} \right\rangle = \langle x, \mathcal{A}^*y \rangle
\end{align*}
}
\else
\begin{align*}
   & \langle\mathcal{A}x,y\rangle_{W_i} \overset{(a)}{=}\langle\mathcal{A}x,y\cdot1_{W_i}\rangle \\
   & \overset{(b)}{=} \left\langle\sum_{n\in \mathcal{Z}} \int_{t_{n-1}}^{t_n} x(u)du\ g(t-\theta_n),y\cdot1_{W_i} \right\rangle  \\
  & \overset{(c)}{=}\sum_{n\in \mathcal{Z}} \int_{t_{n-1}}^{t_n} x(u)du \left\langle g(t-\theta_n),y\cdot1_{W_i}\right\rangle \\
   & \overset{(d)}{=} \sum_{n\in S_{W_i}} \int_{t_{n-1}}^{t_n} x(u) y(\theta_n)du\\
      & \overset{(e)}{=} \sum_{n\in S_{W_i}} \langle x,1_{[t_{n-1}, t_n]}\rangle  y(\theta_n)du\\
   & \overset{(f)}{=}  \left\langle x,\sum_{n\in S_{W_i}}1_{[t_{n-1}, t_n]}y(\theta_n)\right\rangle \\
   & \overset{(g)}{=} \left\langle\mathcal{G}x,\sum_{n\in S_{W_i}}1_{[t_{n-1}, t_n]}y(\theta_n)\right\rangle \\
    & \overset{(h)}{=} \left\langle x,\sum_{n\in S_{W_i}}\mathcal{G}1_{[t_{n-1}, t_n]}y(\theta_n)\right\rangle  = \langle x,\mathcal{A^*}y\rangle_{W_i},
\end{align*}
\fi
\hspace{-0.15cm}where (a) uses the linearity of the inner product. \textcolor{black}{(b) follows from the exact reproducing property of the sinc kernel in $L_2(\mathbb{R})$ for a bandlimited signal $y$, i.e., $\langle g(\cdot - \theta_n), y \rangle = y(\theta_n)$, and expressing the integral as an inner product.} (c) uses linearity. (d) uses the property $\mathcal{G}x = x$ for $2\Omega$-bandlimited $x$. (e) uses the self-adjointness of the orthogonal projector $\mathcal{G}$. This completes the proof.
\end{proof}

\vspace{-0.6cm}
\subsection{Proof of Lemma~\ref{key_lemma}}\label{proof:lem1}
\begin{proof}[Proof of Lemma~\ref{key_lemma}]
Recall the operator $\mathcal{A}$, as defined in \eqref{Operator A}. To establish an upper bound on the norm of the discrepancy between $x$ and $\mathcal{A}x$ over a time window $W_i$, we use techniques similar to those in \cite[Appendix B]{lazar2004perfect} for classical TEM, and in \cite{benedetto1994theory} for irregular sampling and grounded in frame theory, as shown in \cite{duffin1952class}. However, the authors in \cite{lazar2004perfect} (as is common in classical recovery schemes, e.g., \cite{lazar2003time,lazar2004time}) provide proof for the norm's upper bound over $\mathbb{R}$, whereas here we focus on a specific $i$-the time window $W_i$, working in the finite regime.
Let $\mathcal{A}^*$ denote the adjoint operator of $\mathcal{A}$, as given in~\eqref{adjoint operator} and proved in Lemma~\ref{lemma: adjoint in window}\footnote{\textcolor{black}{The adjoint relation is established with respect to the global $L_2(\mathbb{R})$ inner product. The windowed norm satisfies $\|v\|_{W_i}\le \|v\|_2$, so bounds derived in $L_2(\mathbb{R})$ remain valid when restricted to the segment $W_i$.}}.
\off{
First, we define the operator $\mathcal{G}$ which maps an arbitrary function $x$ into a bandlimited function through $\mathcal{G}x = (g * x)$, where  $g(t) = sin(\Omega t)/\pi t$. Next, we provide the adjoint operator of $\mathcal{A}$, $\mathcal{A}^*$, defined by 
\[
\mathcal{A^*}x = \sum_{k \in \mathbb{Z}}x(\theta_k)\mathcal{G}1_{[t_{k}, t_{k+1}]},
\]
where $\mathcal{G}1_{[t_k, t_{k+1}]} $ is applying the operator $\mathcal{G}$  to a pulse function defined on the interval $[t_k, t_{k+1}]$.}
Thus, for the finite norm  $\| \cdot \|^2_{W_i}$ (see Lemma~\ref{lemma:Bernstein}), we have


\vspace{-0.3cm}
\ifshort
{\small
\begin{align*}
& \| x-\mathcal{A^*}x\|^2_{W_i}  = \| x - \sum_{n\in S_{W_i}}x(\theta_n)\mathcal{G}1_{[t_{n-1}, t_n]}\|^2_{W_i} \\
& \overset{(a)}{=}\| \mathcal{G}x - \sum_{n\in S_{W_i}}x(\theta_n)\mathcal{G}1_{[t_{n-1}, t_n]}\|^2_{W_i} \\
& \textcolor{black}{=}
\textcolor{black}{\Big\|\mathcal{G}\!\Big(x-\sum_{n\in S_{W_i}}x(\theta_n)1_{[t_{n-1},t_n]}\Big)\Big\|^2_{W_i}}\\
& \textcolor{black}{\overset{(b)}{\le}
\Big\|\mathcal{G}\!\Big(x-\sum_{n\in S_{W_i}}x(\theta_n)1_{[t_{n-1},t_n]}\Big)\Big\|^2_{2}}\\
& \textcolor{black}{\overset{(c)}{\le}
\Big\|x-\sum_{n\in S_{W_i}}x(\theta_n)1_{[t_{n-1},t_n]}\Big\|^2_{2}}\\
&\textcolor{black}{\overset{(d)}{=} \Big\| \sum_{n\in S_{W_i}}[x -x(\theta_n)]1_{[t_{n-1}, t_n]}\Big\|^2_{2}}\\
&\overset{(e)}{=} \sum_{n\in S_{W_i}} \int_{t_{n-1}}^{t_n}\abs{x(u) -x(\theta_n)}^2du\\
& \overset{(f)}{\leq} \sum_{n\in S_{W_i}} \frac{4}{\pi^2}(\theta_n-t_{n-1})^2\int_{t_{n-1}}^{\theta_n}\mid x'(u)\mid^2du\\
&\quad\quad + \frac{4}{\pi^2}(t_n-\theta_n)^2 \int_{\theta_n}^{t_n}\mid x'(u)\mid^2du,
\end{align*}}
\else
\begin{align*}
& \| x-\mathcal{A^*}x\|^2_{W_i}  = \| x - \sum_{n \in S_{W_i}}x(\theta_n)\mathcal{G}1_{[t_{n-1}, t_n]}\|^2_{W_i} \\
& \overset{(a)}{=}\| \mathcal{G}x - \sum_{n \in S_{W_i}}x(\theta_n)\mathcal{G}1_{[t_{n-1}, t_n]}\|^2_{W_i} \\
& \textcolor{red}{=}
\textcolor{red}{\left\|\mathcal{G}\!\left(x-\sum_{n\in S_{W_i}}x(\theta_n)1_{[t_{n-1},t_n]}\right)\right\|^2_{W_i}}\\
& \textcolor{red}{\le
\left\|\mathcal{G}\!\left(x-\sum_{n\in S_{W_i}}x(\theta_n)1_{[t_{n-1},t_n]}\right)\right\|^2_{2}}\\
& \textcolor{red}{\le
\left\|x-\sum_{n\in S_{W_i}}x(\theta_n)1_{[t_{n-1},t_n]}\right\|^2_{2}}\\
&\textcolor{red}{\overset{(b)}{=} \left\| \sum_{n\in S_{W_i}}[x -x(\theta_n)]1_{[t_{n-1}, t_n]}\right\|^2_{2}} \\
&\textcolor{red}{\overset{(c)}{=} \sum_{n\in S_{W_i}} \int_{t_{n-1}}^{t_n}\abs{x(u) -x(\theta_n)}^2du} \\
& \overset{(d)}{\leq} \sum_{n\in S_{W_i}} \frac{4}{\pi^2}(\theta_n-t_{n-1})^2\int_{t_{n-1}}^{\theta_n}\mid x'(u)\mid^2du\\
&\quad\quad + \frac{4}{\pi^2}(t_n-\theta_n)^2 \int_{\theta_n}^{t_n}\mid x'(u)\mid^2du,
\end{align*}
\fi
\hspace{-0.15cm}where, \blue{(a) follows since $x$ \textcolor{black}{is bandlimited to the support $[-\Omega,\Omega]$}, and the operator $\mathcal{G}$ is defined in Appendix \ref{Appendix a}, Definition \ref{def: convolotion operator}. The function $g$, as given in Definition \ref{def: convolotion operator}, is the ideal low-pass filter with the same bandwidth. Consequently, the convolution $ \mathcal{G}x = (g * x)$ acts as an identity operator in this bandlimited space, \textcolor{black}{resulting in $\mathcal{G}x = x$.}
\textcolor{black}{(b) follows from $\|v\|_{W_i}\le \|v\|_2$ for any $v\in L_2(\mathbb{R})$, and (c) follows because $\mathcal{G}$ is the orthogonal projector onto the bandlimited subspace and is therefore non-expansive in $L_2(\mathbb{R})$.} \textcolor{black}{The inequality uses the fact that $\|v\|_{W_i}\le \|v\|_2$ for any $v\in L_2(\mathbb{R})$, allowing the global $L_2$ projector property of $\mathcal{G}$ to be applied before restricting the bound to the window $W_i$.} \textcolor{black}{(d) follows by regrouping the terms. \textcolor{black}{The restriction to the finite segment $W_i$ is interpreted together with the leakage control of Lemma~\ref{lemma:Bernstein}, which accounts for the residual boundary effects introduced by the finite-window localization through the term $\mathcal{J}_{\varphi_i}(x)$.}} 
\textcolor{black}{(e) follows since the intervals $[t_{n-1}, t_n]$, $n\in S_{W_i}$, are disjoint.}
(f) follows by applying Wirtinger’s inequality \cite{lazar2004perfect}.}
\off{where, \blue{(a) follows since $x$ is bandlimited to $2\Omega$, and the operator $\mathcal{G}$ is defined in Appendix \ref{Appendix a}, Definition \ref{def: convolotion operator}. The function $g$, as given in Definition \ref{def: convolotion operator}, is the ideal low-pass filter with the same bandwidth. Consequently, the convolution $ \mathcal{G}x = (g * x)$ acts as an identity operator in this bandlimited space, resulting in  $\mathcal{G}x = x$.
(b) follows since $g$ is a smoothing low-pass filter, convolution with it attenuates high-frequency components, thereby reducing the norm. 
(c) follows from the definition of the norm over $W_i$, where the sum of the indicator functions $1_{[t_{n-1}, t_n]}$, over $n \in S_{W_i}$,  partitions the interval into disjoint segments, preserving the signal $x$ within the window $W_i$.
(d) follows from explicitly expressing the squared norm as an integral over the window $W_i$.
(e) follows by rewriting the squared absolute value term inside the summation. Since each term in the summation is multiplied by the indicator function $1_{[t_{n-1}, t_n]}$, which isolates contributions from non-overlapping time intervals, the squared absolute value can be applied individually to each term inside the summation.
(f) follows by interchanging the sum and the integral due to the linearity of integration. Additionally, the indicator function $1_{[t_{n-1}, t_n]}$ restricts the integration domain to disjoint intervals.
(g) holds by applying Wirtinger’s inequality \cite{lazar2004perfect}.} }

We do note that since for any $n \in S_{W_i}$
\begin{equation*}
(\theta_n-t_{n-1})^2 = (t_n-\theta_n)^2 =\frac{(t_n-t_{n-1})^2}{4},
\end{equation*}
we obtain
\ifshort
{\small
\begin{align*}
    & \| x-\mathcal{A^*}x\|^2_{W_i} 
    \leq \sum_{n\in S_{W_i}}\frac{1}{\pi^2}(t_n-t_{n-1})^2\int_{t_{n-1}}^{t_n}\mid x'(u)\mid^2du\\
    & \leq \frac{1}{\pi^2}\left(\max_{n\in S_{W_i}} (T_n)\right)^2\sum_{n\in S_{W_i}} \int_{t_{n-1}}^{t_n}\mid x'(u)\mid^2du\\
    & =  \frac{1}{\pi^2}\left(\max_{n\in S_{W_i}} (T_n)\right)^2\| x' \|_{W_i}^2\\
     &\overset{(h)}{\leq} \frac{1}{\pi^2}\left(\max_{n\in S_{W_i}}(T_n)\right)^2\newblue{\Big( \Omega^2\|x\|_{W_i}^2 + \mathcal{J}_{\varphi_i}(x)\Big)}, 
\end{align*}}
\else
\begin{align*}
    & \| x-\mathcal{A^*}x\|^2_{W_i}  \\
    & \leq \sum_{n\in S_{W_i}}\frac{1}{\pi^2}(t_n-t_{n-1})^2\int_{t_{n-1}}^{t_n}\mid x'(u)\mid^2du\\
    & \leq \frac{1}{\pi^2}\left(\max_{n\in S_{W_i}} (T_n)\right)^2\sum_{n\in S_{W_i}} \int_{t_{n-1}}^{t_n}\mid x'(u)\mid^2du \\
    & =  \frac{1}{\pi^2}\left(\max_{n\in S_{W_i}} (T_n)\right)^2\| x' \|_{W_i}^2\\
    & \frac{1}{\pi^2}\left(\max_{n\in S_{W_i}} (T_n)\right)^2
\Big( \Omega^2\|x\|_{W_i}^2 + \mathcal{J}_{\varphi_i}(x)\Big)
\end{align*}
\fi
\off{\newblue{where (h) follows from $\|x'\|_{W_i}\le \|x'\|_2$ and the classical Bernstein
inequality for bandlimited signals \cite{lazar2004perfect},
$\|x'\|_2\le \Omega \|x\|_2$.}}
\newblue{where (h) follows from Lemma~\ref{lemma:Bernstein},
applied to the bandlimited signal $x$ over the window $W_i$ (with the associated taper $\varphi_i$).}
\off{Finally, substituting  $\max_{n\in S_{W_i}} (T_n)\leq r_{w_i}\frac{\pi}{\Omega}$, for $r_{w_i} = \max_{n \in  S_{W_i}}\{r_{a_n}\}$ and  $r_{a_n}$ as given in \eqref{r_a general}, we have
\begin{equation*}
\| x-\mathcal{A}x\|_{W_i} \ < r_{w_i}\|{x}\|_{2}.
\end{equation*}
This completed the lemma proof.}
\newblue{Finally, substituting the bound
\(
\max_{n\in S_{W_i}} T_n \le r_{w_i}\frac{\pi}{\Omega},
\)
where
\(r_{w_i}=\max_{n\in S_{W_i}}\{r_{a_n}\}\)
and \(r_{a_n}\) is defined in \eqref{r_a general},
and using the windowed Bernstein inequality with leakage
(Lemma~\ref{lemma:Bernstein}),
we obtain
\begin{equation*}
\| x-\mathcal{A}x\|_{W_i}^2
\le
r_{w_i}^2\Big(\|x\|_{W_i}^2 + \tfrac{1}{\Omega^2}\mathcal{J}_{\varphi_i}(x)\Big).
\end{equation*}
This completes the proof of the lemma.}
\end{proof}
\off{
\vspace{-0.5cm}
\subsection{Proof of Lemma~\ref{key_lemma2}}\label{proof:lem2}
\begin{proof}[Proof of Lemma~\ref{key_lemma2}]
The recovered signal for segment $W_i$, $x_{L_i}$, is defined by \eqref{x_l}, with $l=L_i$. We evaluate the norm difference between the original signal $x$ and the recovered signal $x_{L_i}$ over the segment $W_i$ as follows
\vspace{-0.3cm}
\ifshort
{\small
\begin{multline*}
\hspace{-0.4cm}\| x-x_{L_i}\|_{W_i} =\| \sum_{n \geq L_i+1}(I-\mathcal{A})^{n}\mathcal{A}x\|_{W_i}
= \| (I-\mathcal{A})^{L_i+1}\\\sum_{n \in \mathbb{N}}(I-\mathcal{A})^{n}\mathcal{A}x\|_{W_i}
= \| (I-\mathcal{A})^{L_i+1}\mathcal{A}^{-1}\mathcal{A}x\|_{W_i}\\
\quad = \| (I-\mathcal{A})\|_{W_i}^{L_i+1}\| x\|_{W_i}.
\end{multline*}}
\else
\begin{align*}
\| x-x_{L_i}\|_{W_i} & =\| \sum_{n \geq L_i+1}(I-\mathcal{A})^{n}\mathcal{A}x\|_{W_i} \\
&= \| (I-\mathcal{A})^{L_i+1}\sum_{n \in \mathbb{N}}(I-\mathcal{A})^{n}\mathcal{A}x\|_{W_i}\\
&= \| (I-\mathcal{A})^{L_i+1}\mathcal{A}^{-1}\mathcal{A}x\|_{W_i}
\end{align*}
\fi
This completed the lemma proof.
\end{proof}
}

\vspace{-0.5cm}
\subsection{Proof of Lemma~\ref{key_lemma2}}\label{proof:lem2}
\begin{proof}[Proof of Lemma~\ref{key_lemma2}]
The recovered signal for segment $W_i$, $x_{L_i}$, is defined by \eqref{x_l}, with $l=L_i$. We evaluate the norm difference between the original signal $x$ and the recovered signal $x_{L_i}$ over the segment $W_i$ as follows

\vspace{-0.3cm}
\ifshort
{\small
\[
\| x-x_{L_i}\|_{W_i}
= \| \textcolor{black}{x-\sum_{n=0}^{L_i}(I-\mathcal{A})^{n}\mathcal{A}x}\|_{W_i}
\textcolor{black}{= \| (I-\mathcal{A})^{L_i+1}x\|_{W_i},}
\]}
\else
\begin{align*}
\| x-x_{L_i}\|_{W_i}
& = \| \textcolor{black}{x-\sum_{n=0}^{L_i}(I-\mathcal{A})^{n}\mathcal{A}x}\|_{W_i} \\
& \textcolor{black}{= \| (I-\mathcal{A})^{L_i+1}x\|_{W_i},}
\end{align*}
\fi
\textcolor{black}{where the last equality follows from the finite telescoping identity
\(
I = \sum_{n=0}^{L_i}(I-\mathcal{A})^{n}\mathcal{A} + (I-\mathcal{A})^{L_i+1}.
\)}
This completed the lemma proof.
\end{proof}

\vspace{-0.5cm}
\newblue{
\begin{lemma}\label{key_lemmav2} Assume a $2\Omega$-BL, $c_{\text{max}}$-bounded signal $x\in L^2(\mathbb{R})$ with finite energy $E$, sampled using an AIF-TEM. Then, the norm of the discrepancy between $x$ and $\mathcal{A}x$ over the entire real line is bounded by $\| x-\mathcal{A}x\|_2 \ \leq r_{w_i}\|{x}\|_2$, where $\mathcal{A}$ is defined in \eqref{Operator A} and $r_{w_i}=\max_{n\in S_{W_i}} r_{a_n}<1$. 
\end{lemma}}
\begin{proof}
\newblue{The proof follows the same steps as in \cite[Appendix~B]{lazar2004perfect}. Since the operator $\mathcal{A}$ is constructed using only the sampling intervals indexed by $n\in S_{W_i}$, the worst-case interval length satisfies $\max_{n\in S_{W_i}} T_n \le r_{w_i}\frac{\pi}{\Omega}$, which yields the stated $L^2(\mathbb{R})$ error bound.}
\end{proof}

\subsection{Key Mathematical Tools}

 \subsubsection{Observation 1}\label{math_obs1} Assume the quantization error $d_n = (\tilde{T}_n - T_n)$ is a sequence of i.i.d. random variables in $[ -\Delta_i/2,  \Delta_i/2]$. Then $ \mathbb{E}[\epsilon_n\epsilon_m] =  (\frac{\kappa\delta}{T_n})^2 \frac{\Delta_i^2}{12}\delta_{n,m}$.
\begin{proof}
\ifshort
{\small
\begin{align*}
     \epsilon_n & =  (\kappa \delta -b_n\tilde{T}_n) - \int_{\tilde{t}_{n-1}}^{\tilde{t}_{n}} x(u)du\\
     & = \kappa \delta -b_nT_n + b_nT_n -b_n\tilde{T}_n - \int_{\tilde{t}_{n-1}}^{\tilde{t}_{n}} x(u)du\\
     &=\int_{t_{n-1}}^{t_{n}} x(u)du - \int_{\tilde{t}_{n-1}}^{\tilde{t}_{n}} x(u)du -b_n(\tilde{T}_n - T_n) \\
     & \overset{a}{=} x(\zeta_n)T_n - x(\hat{\zeta}_n)\tilde{T}_n-b_n(\tilde{T}_n - T_n) \\
     & \overset{b}{\simeq} (-x(\zeta_n)-b_n)(\tilde{T}_n - T_n)
      \overset{c}{=}  -\frac{\kappa\delta}{T_n}(\tilde{T}_n - T_n),
\end{align*}}
\else
\begin{align*}
     \epsilon_n & =  (\kappa \delta -b_n\tilde{T}_n) - \int_{\tilde{t}_{n-1}}^{\tilde{t}_{n}} x(u)du\\
     & = \kappa \delta -b_nT_n + b_nT_n -b_n\tilde{T}_n - \int_{\tilde{t}_{n-1}}^{\tilde{t}_{n}} x(u)du\\
     &=\int_{t_{n-1}}^{t_{n}} x(u)du - \int_{\tilde{t}_{n-1}}^{\tilde{t}_{n}} x(u)du -b_n(\tilde{T}_n - T_n) \\
     & \overset{a}{=} x(\zeta_n)T_n - x(\hat{\zeta}_n)\tilde{T}_n-b_n(\tilde{T}_n - T_n) \\
     & \overset{b}{\simeq} (-x(\zeta_n)-b_n)(\tilde{T}_n - T_n)\\
     & \overset{c}{=}  -\frac{\kappa\delta}{T_n}(\tilde{T}_n - T_n).
\end{align*}
\fi
\hspace{-0.2cm}where (a) follows from the mean value theorem. (b) follows from the approximation used in \cite[Appendix C]{lazar2004perfect}. Where  $\zeta_n \in (t_{n-1}, t_{n})$ and $\hat{\zeta}_n \in (\tilde{t}_{n-1}, \tilde{t}_n)$, for sufficiently small $\Delta_i$, we get $\hat{\zeta}_n \simeq \zeta_n $. (c) follows because $x(\zeta_n) =\frac{1}{T_n}\int_{t_{n-1}}^{t_n} x(u)du = \frac{\kappa\delta}{T_n}-b_n$\footnote{\newblue{The analyses here treat the adaptive bias values $b_n$ as known at the decoder. This is consistent with the synchronization mechanism described in Sections~\ref{decoding process} and \ref{Quantization for AIF-TEM}.}}.

Let $d_n =(\tilde{T}_n - T_n)$, this leads to
\[
\mathbb{E}[\epsilon_n\epsilon_m] = \frac{\kappa\delta}{T_n}\frac{\kappa\delta}{T_m}\mathbb{E}[d_nd_m] \overset{(a)}{=} \left(\frac{\kappa\delta}{T_n}\right)^2 \frac{\Delta_i^2}{12}\delta_{n,m},
\]
where (a) follows since the quantization error $d_n = (\tilde{T}_n - T_n)$ is a sequence of i.i.d. random variables uniformly distributed in $[ -\Delta_i/2,  \Delta_i/2]$. Therefore, $\mathbb{E}[d_nd_m] = \frac{\Delta_i^2}{12}\delta_{n,m}$.
\end{proof}
\off{
\subsubsection{Observation 2}: For dynamic quantization with quantization error $d_n =(\tilde{T}_n - T_n)$ is a sequence of i.i.d. random variables on $[ -\Delta_i/2,  \Delta_i/2]$ 
$ \mathbb{E}[\epsilon_n\epsilon_m] =  (\frac{\kappa\delta}{T_n})^2 \frac{\Delta_i^2}{12}\delta_{n,m}$
}
\off{
\subsubsection{Observation 2}\label{math_obs2} The integral $\int_{t_{\text{start}}}^{t_{\text{end}}}g^2(t-\eta)dt$, for any $\eta$,  over the interval $\mathcal{T} = [t_{\text{start}}, t_{\text{end}}]$ is upper bounded by $\frac{\Omega}{\pi}$.
 \begin{proof}
 \ifshort
 {\small
    \begin{align*}
    \int_{t_{\text{start}}}^{t_{\text{end}}} g^2(t - \eta)dt & \overset{(a)}{\leq} \int_{-\infty}^{\infty} g^2(t - \eta)dt
    \overset{(b)}{=} \frac{1}{2\pi} \int_{-\infty}^{\infty} 1_{[-\Omega,\Omega]}dw =\frac{\Omega}{\pi},
    \end{align*}}
 \else
    \begin{align*}
    \int_{t_{\text{start}}}^{t_{\text{end}}} g^2(t - \eta)dt & \overset{(a)}{\leq} \int_{-\infty}^{\infty} g^2(t - \eta)dt\\
    &\overset{(b)}{=} \frac{1}{2\pi} \int_{-\infty}^{\infty} 1_{[-\Omega,\Omega]}dw =\frac{\Omega}{\pi},
    \end{align*}
\fi
\hspace{-0.2cm}where (a) follows from the fact that $g^2$ is a non-negative function, and equality (b) is derived by applying Parseval's theorem. Note that if $\eta \in \mathcal{T}$, then this bound is tighter.
 \end{proof}}
\off{\subsubsection{Observation 2}\label{math_obs3} For any positive numbers $\{T_n\}_{n=1}^{L_i}$
 \[
 \frac{1}{\overline{T}}  \overline{\left(\frac{1}{T_n}\right)^2} \leq \overline{\left(\frac{1}{T_n}\right)^3}
 \]
\begin{proof}
     Note that
  \[
    \Bigl(\sum_{n=1}^{L_i}  T_n\Bigr)\Bigl(\sum_{n=1}^{L_i}  \tfrac{1}{T_n^3}\Bigr)
    \;=\;
    \sum_{n=1}^{L_i} \sum_{m=1}^{L_i} \frac{T_n}{T_m^3}
    \;\;\ge\;\;
    \sum_{n=1}^{L_i} \frac{T_n}{T_n^3}
    \;=\;
    \sum_{n=1}^{L_i} \frac{1}{T_n^2},
  \]
  the inequality is because each inner sum over $m$ contains the term $n = m$. Dividing by
  $(\sum_{n=1}^{L_i} T_n)\,L_i$ completes the proof. 
\end{proof}
{\bf Option 2 -- Chebyshev:}
}
\else\fi

\subsubsection{Observation 2}\label{math_obs3}
For any positive numbers $T_n, \forall n \in S_{W_i}$,
\[
\left({1}/{\overline{T_n}}\right)\,\overline{\left({1}/{T_n}\right)^2}
\le
\overline{\left({1}/{T_n}\right)^3}.
\]

\begin{proof}
First, since $T_n>0$ for all $n \in S_{W_i}$, the harmonic--arithmetic mean inequality gives
$\left({1}/{\overline{T_n}}\right)\le \overline{\left({1}/{T_n}\right)}$.
Thus, it is enough to show that
\[
\overline{\left({1}/{T_n}\right)}\,
\overline{\left({1}/{T_n}\right)^2}
\le
\overline{\left({1}/{T_n}\right)^3}.
\]
Let
$a_n \triangleq {1}/{T_n}$, for $n \in S_{W_i}$ and $L_i=|S_{W_i}|$.
Then the required inequality becomes
\[
\bigg(\frac{1}{L_i}\sum_{n \in S_{W_i}} a_n\bigg)
\bigg(\frac{1}{L_i}\sum_{n \in S_{W_i}} a_n^2\bigg)
\le
\frac{1}{L_i}\sum_{n \in S_{W_i}} a_n^3.
\]
Multiplying both sides by $L_i^2$, and applying the Cauchy--Schwarz and Hölder inequalities \cite{hardy1952inequalities}, yields
\[
\bigg(\sum_{\textcolor{black}{n \in S_{W_i}}} a_n\bigg)\bigg(\sum_{\textcolor{black}{n \in S_{W_i}}} a_n^2\bigg)
\le
L_i\sum_{\textcolor{black}{n \in S_{W_i}}} a_n^3.
\]
Now, observe that
\begin{multline*}
L_i\sum_{\textcolor{black}{n \in S_{W_i}}} a_n^3
-
\bigg(\sum_{\textcolor{black}{n \in S_{W_i}}} a_n\bigg)\bigg(\sum_{\textcolor{black}{n \in S_{W_i}}} a_n^2\bigg)\\
=
\frac12
\sum_{\textcolor{black}{n \in S_{W_i}}}\sum_{\textcolor{black}{m \in S_{W_i}}}
(a_n-a_m)^2(a_n+a_m).
\end{multline*}
Since $a_n+a_m>0$ and $(a_n-a_m)^2\ge 0$, the right-hand side is nonnegative. Therefore,
\[
\bigg(\sum_{\textcolor{black}{n \in S_{W_i}}} a_n\bigg)\bigg(\sum_{\textcolor{black}{n \in S_{W_i}}} a_n^2\bigg)
\le
L_i\sum_{\textcolor{black}{n \in S_{W_i}}} a_n^3,
\]
which implies $\overline{\left({1}/{T_n}\right)}\,
\overline{\left({1}/{T_n}\right)^2}
\le
\overline{\left({1}/{T_n}\right)^3}$.
Combining this with
$\left({1}/{\overline{T\textcolor{black}{_n}}}\right)\le \overline{\left({1}/{T_n}\right)}$
completes the proof.
\end{proof}

\begin{center}
  {\huge \newblue{Supplementary Materials}}
  \vspace{0.4cm} 
  
  {\newblue{Leakage Term \& Tapering Functions in the Finite Regime}}
\end{center}

\vspace{0.2cm}

\newblue{The reconstruction distortion bound in Theorem~\ref{Distortion AIF-TEM} consists of a classical sampling term and an additional leakage term.
The additional leakage term in Lemma~\ref{lemma:Bernstein} and
Theorem~\ref{Distortion AIF-TEM} captures boundary effects arising from finite-window reconstruction.
Unlike the classical Bernstein inequality, which provides bounds over the entire real line $\mathbb{R}$, the present analysis localizes the inequality to finite time segments. This localization enables segment-wise error control but necessarily introduces additional energy contributions. Specifically, these contributions originate from the transition regions surrounding each segment, as well as from spectral spreading outside the nominal frequency band.}

\newblue{The leakage term $\mathcal{J}_{\varphi}$ in Lemma~\ref{lemma:Bernstein} consists of two main components. The first component represents signal energy in the transition region $\mathcal R_\Delta \triangleq \mathcal T_\Delta \setminus \mathcal T$, while the second component accounts for spectral leakage induced by windowing.} \newblue{ The integral term $\frac{1}{2\pi}\int_{|\omega|>\Omega}(\omega^2-\Omega^2)|G(\omega)|^2\,d\omega$ captures spectral leakage introduced by tapering. Although $f$ is $\Omega$-bandlimited, the tapered signal $g=f\varphi$ is generally not. This phenomenon is analogous to the spectral leakage observed in classical windowed sampling and short-time Fourier analysis. Classical signal processing literature \cite{oppenheim1999discrete} explains spectral leakage as an inherent consequence of windowing: multiplying a signal by a finite-duration window (or taper) corresponds in the frequency domain to convolution with the window spectrum, which generally results in energy spreading outside the nominal signal bandwidth. Different window shapes exhibit different leakage behavior, governed by the tradeoff between main-lobe width and side-lobe decay. Smooth tapers lead to faster spectral decay and reduced out-of-band energy compared to rectangular windows. This effect is extensively analyzed in classical window studies \cite{harris2005use}, which compares multiple window families and demonstrates how increasing smoothness (e.g., cosine-tapered and Tukey windows) systematically reduces spectral leakage.}

\newblue{The taper used in Lemma~\ref{lemma:Bernstein} belongs to the family of smooth, compactly supported windows studied in the classical window analysis literature, including cosine-tapered (Tukey-type) windows. A closely related construction appears in the work of \cite{lazar2006real}, where smooth overlapping windows are employed to stitch together local reconstructions in time-encoding machines. In that work, windowing is explicitly acknowledged to introduce
bandwidth expansion, which is controlled through taper smoothness, window overlap, and post-filtering.} \newblue{In particular, the terms
$\Omega^2\|f\|_{\mathcal R_\Delta}^2 + \|f\varphi'\|_2^2
+ 2|\langle g',f\varphi'\rangle|$ characterize time-domain leakage effects. The term $\Omega^2\|f\|_{\mathcal R_\Delta}^2$ quantifies the signal energy outside the target segment, whereas $\|f\varphi'\|_2^2$ depends on the derivative of the taper and is supported only in the transition region, since $\varphi=1$ on $\mathcal T$. The cross term $|\langle g',f\varphi'\rangle|$ can be bounded using the Cauchy--Schwarz inequality, yielding $2|\langle g',f\varphi'\rangle| \le 2\|g'\|_2\,\|f\varphi'\|_2$,
and therefore also depends on the taper derivative and is supported only in the transition region.} \newblue{The first contribution generally increases with the transition width $\Delta$, since a wider transition region contains more signal energy. In contrast, the second and third contributions depend on the smoothness
of the taper. For standard smooth tapers (e.g., raised-cosine or polynomial tapers), $\|\varphi'\|_\infty = O(\Delta^{-1})$, implying that these terms decrease as the transition width $\Delta$ increases for fixed signal energy.} \newblue{In practical settings, the leakage term $\mathcal{J}_{\varphi}$ remains small compared to the principal term $\Omega^2\|f\|_{\mathcal T}^2$ when the signal energy in the transition region is limited, and the taper function varies smoothly. In particular, if $\|f\|_{\mathcal R_\Delta}^2 \ll \|f\|_{\mathcal T}^2$ and $\|\varphi'\|_\infty$ is small, then the time domain leakage components are negligible. This situation occurs when the plateau region of the window (where $\varphi(t)=1$ for $t\in\mathcal T$), i.e., the duration $t_e-t_s$, is large relative to the transition width $\Delta$, so that the boundary region occupies only a small fraction of the segment. Moreover, smooth tapers reduce spectral spreading and control the frequency-domain contribution. In this regime, the inequality in~\eqref{eq:winBern_leak} closely approximates the classical Bernstein bound.}

\newblue{In practice, the leakage contribution becomes small when the plateau region (where $\varphi(t)=1$ for  $t\in\mathcal T$ ), i.e, $t_e-t_s$ of the window is large relative to the transition width $\Delta$. In this regime, the boundary region occupies only a small fraction of the segment, and the dominant term in~\eqref{eq:winBern_leak} remains $\Omega^2\|f\|_{\mathcal T}^2$, closely approximating the classical Bernstein inequality.} \newblue{In the idealized infinite-domain setting, or asymptotically for sufficiently large segments with smooth tapering, the leakage term can be made arbitrarily small and the bound approaches the classical Bernstein inequality. The leakage term appearing in our distortion analysis, therefore, captures a fundamental and well-understood effect of windowing, consistent with classical digital signal processing theory, window analysis literature, and prior TEM reconstruction frameworks.}

\end{document}